\def\punit{\bot}
\def\tunit{\mathrm{I}}
\def\sunit{\cdot}
\def\limp{\multimap}
\def\excl{\leftY}
\def\ltens{\otimes}
\def\merge{\bullet}
\def\seq{\Rightarrow}
\def\fill{\mathrm{FILL}}
\def\biill{\mathrm{BiILL}}
\def\biilldc{\mathrm{BiILL}{\scriptstyle dc}}
\def\biillsn{\mathrm{BiILL}{\scriptstyle sn}}
\def\biilldn{\mathrm{BiILL}{\scriptstyle dn}}
\def\filldc{\mathrm{FILL}{\scriptstyle dc}}
\def\filldn{\mathrm{FILL}{\scriptstyle dn}}
\def\drp{\mathit{drp}}
\def\rp{\mathit{rp}}
\newcommand{\binop}{\heartsuit}
\newcommand \seqtodisp[1]{\ulcorner{#1}\urcorner}
\newcommand\disptoseq[1]{\llcorner{#1}\lrcorner}
\def\Sc{\mathcal{S}}
\def\Tc{\mathcal{T}}
\def\Uc{\mathcal{U}}
\def\Vc{\mathcal{V}}
\def\Xc{\mathcal{X}}
\def\Yc{\mathcal{Y}}
\def\turn{\vdash}
\def\adcn{\wedge}
\def\adtp{\top}
\def\mlim{\multimap}
\def\sequent{\vdash}
\title{Annotation-Free Sequent Calculi for Full Intuitionistic Linear Logic
-- Extended Version}
\author{Ranald Clouston \and Jeremy Dawson \and Rajeev Gor\'e 
\and Alwen Tiu
}
\institute{Logic and Computation Group, Research School of Computer Science,\\
The Australian National University, Canberra ACT 0200, Australia}
\begin{document}

\maketitle

\begin{abstract}
Full Intuitionistic Linear Logic (FILL) is multiplicative
intuitionistic linear logic extended with par. Its proof theory has
been notoriously difficult to get right, and existing sequent calculi
all involve inference rules with complex annotations to guarantee
soundness and cut-elimination. We give a simple and annotation-free
display calculus for FILL which satisfies Belnap's generic
cut-elimination theorem. To do so, our display calculus actually
handles an extension of FILL, called Bi-Intuitionistic Linear Logic
(BiILL), with an `exclusion' connective defined via an adjunction with
par. We refine our display calculus for BiILL into a cut-free nested
sequent calculus with deep inference in which the explicit structural
rules of the display calculus become admissible. A separation property
guarantees that proofs of FILL formulae in the deep inference calculus
contain no trace of exclusion.  Each such rule is sound for the
semantics of FILL, thus our deep inference calculus and display
calculus are conservative over FILL. The deep inference calculus also
enjoys the subformula property and terminating backward proof search,
which gives the NP-completeness of BiILL and FILL.
\end{abstract}

\section{Introduction}

Multiplicative Intuitionistic Linear Logic (MILL) contains as connectives only
tensor $\otimes$, its unit $I$, and its residual $\limp$,
where we use $I$ rather than the usual $1$ to avoid a clash with the
categorical notation for terminal object.
The connective par $\parr$ and its unit $\bot$ are traditionally
only introduced when we move to classical Multiplicative Linear Logic (MLL),
but Hyland and de Paiva's Full Intuitionistic
Linear Logic (FILL) \cite{Hyland:Full} shows that a sensible notion of
par can be added to MILL without collapse to classicality. FILL's semantics are
categorical, with the interaction between the $(\otimes,I,\limp)$
and $(\parr,\bot)$ fragments entirely described by the equivalent formulae
shown below:
\begin{equation}\label{eq:wdis}
(p\otimes (q\parr r)) \,\limp\, ((p\otimes q)\parr r)
\qquad\qquad\qquad
((p\limp q)\parr r) \,\limp\, (p\limp(q\parr r))
\end{equation}
The first formula is variously called \emph{weak distributivity}
\cite{Hyland:Full,Cockett:Weakly}, \emph{linear distributivity}
\cite{Cockett:Proof}, and \emph{dissociativity} \cite{Dosen:Proof}. The second
we call Grishin~(b)~\cite{Gore:SLD}. Its converse, called Grishin
(a), is not FILL-valid, and indeed adding it to FILL recovers MLL.

From a traditional sequent calculus perspective, FILL is the logic specified
by taking a two-sided sequent calculus for MLL, which enjoys
cut-elimination, and restricting its $(\limp R_2)$
rule to apply only to ``singletons on the right'', giving $(\limp
R_1)$, as shown below:
\[
 \AxiomC{$\Gamma, A \sequent B$}
 \LeftLabel{$(\mlim R_1)$}
 \UnaryInfC{$\Gamma \sequent A \mlim B$}
 \DisplayProof
\qquad\qquad\qquad\qquad
 \AxiomC{$\Gamma, A \sequent B, \Delta$}
 \LeftLabel{$(\mlim R_2)$}
 \UnaryInfC{$\Gamma \sequent A \mlim B, \Delta$}
 \DisplayProof
\]
Since exactly this restriction converts Gentzen's LK for ordinary
classical logic to Gentzen's LJ for intuitionistic logic,
FILL arises very naturally. Unfortunately
the resulting calculus fails cut-elimination~\cite{Schellinx:Some}%
.
(Note that there is also work on natural deduction and proof nets for FILL
\cite{Cockett:Proof,Bellin:Subnets,Martini:Experiments,dePaiva:Parigot}. In
this setting the problems of cut-elimination are side-stepped; see
the discussion of ``essential cuts'' in \cite{Cockett:Proof} in
particular.)

Hyland and de Paiva~\cite{Hyland:Full} therefore sought a middle
ground between the too weak $(\limp R_1)$ and the unsound $(\limp
R_2)$ by annotating formulae with \emph{term assignments}, and using
them to restrict the application of $(\limp R_2)$ - the restriction
requires that the variable typed by $A$ not appear free in the
terms typed by $\Delta$. Reasoning with freeness in the presence of
variable binders is notoriously tricky, and a bug was subsequently
found by Bierman \cite{Bierman:Note} which meant that the proof of the
sequent below requires a cut that is not eliminable:
\begin{equation}\label{eq:Bierman}
  (a\parr b)\parr c \turn a,(b\parr c\limp d)\parr e\limp d\parr e
\end{equation}
Bierman~\cite{Bierman:Note} presented two possible corrections to
the term assignment system, one due to Bellin. These were subsequently
refined by Br\"{a}uner and de Paiva~\cite{Brauner:Formulation} to replace
the term assignments by rules annotated with a binary relation between
formulae on the left and on the right of the turnstile, which effectively
trace variable occurrence. The only existing annotation-free sequent
calculi for FILL~\cite{DBLP:conf/obpdc/GalmicheB95,Gore:SLD} are
incorrect. The first~\cite{DBLP:conf/obpdc/GalmicheB95} uses $(\limp
R_2)$ without the required annotations, making it unsound, and
also contains other transcription errors. The second~\cite{Gore:SLD}
identifies FILL with `Bi-Linear Logic', which fails weak distributivity
and has an extra connective called `exclusion', of which more shortly.

The existing correct annotated sequent
calculi~\cite{Bierman:Note,Brauner:Formulation} have some
weaknesses. First, the introduction rules for a connective do not
define that connective in isolation, as was Gentzen's ideal. Instead,
they introduce $\limp$ on the right only when the context in which the rule
sits obeys the rule's side-condition. A consequence is that they
cannot be used for naive backward proof search since we must apply the
rule upwards blindly, and then check the side-conditions once we have
a putative derivation. Second, the term-calculus that results from the
annotations has not been shown to have any computational content since its sole
purpose is to block unsound inferences by tracking variable
occurrence~\cite{Brauner:Formulation}. Thus, $\fill$'s close
relationship with other logics is obscured by these complex
annotational devices, leading to it being described as
proof-theoretically ``curious''~\cite{Cockett:Proof}, and leading
others to conclude that $\fill$ ``does not have a satisfactory proof
theory''~\cite{Chaudhuri:Inverse}. 

We believe these difficulties 
arise because efforts have focused on an `unbalanced'
logic. We show that adding an `exclusion'
connective $\excl$, dual to
$\limp$, gives a fully `balanced' logic, which we call
Bi-Intuitionistic Linear Logic (BiILL). The beauty of $\biill$ is that
it has a simple {\em display calculus}~\cite{Belnap82JPL,Gore:SLD} $\biilldc$
that inherits Belnap's general cut-elimination theorem ``for free''.
A similar situation has already been observed in classical modal
logic, where it has proved impossible to extend traditional Gentzen
sequents to a uniform and general proof-theory encompassing the
numerous extensions of normal modal logic K. Display calculi capture a
large class of such modal extensions uniformly and
modularly~\cite{DBLP:journals/logcom/Wansing94,kracht-power} by
viewing them as fragments of (the display calculi for) tense logics,
which conservatively extend modal logic with two modalities
$\blacklozenge$ and $\blacksquare$,
respectively adjoint to
the original $\Box$ and $\Diamond$.

In tense logics, the conservativity result is trivial since both modal
and tense logics are defined with respect to the same Kripke
semantics. 
With BiILL and FILL,
however, there is no such existing conservativity result via semantics.
The conservativity of BiILL over FILL would follow 
if we could show that a derivation of a FILL formula in $\biilldc$
preserved FILL-validity downwards: unfortunately, this does not
hold, as explained next.

Belnap's generic cut-elimination procedure applies to $\biilldc$
because of the ``display property'', whereby any substructure of a
sequent can be displayed as the whole of either the antecedent or
succedent. The display property for $\biilldc$ is obtained via certain
reversible structural rules, called {\em display rules}, which encode
the various adjunctions between the connectives, such as the one
between par and exclusion. Any $\biilldc$-derivation of a FILL formula
that uses this adjunction to display a substructure contains
occurrences of a structural connective which is an exact proxy
for exclusion. That is, a $\biilldc$-derivation of a FILL formula may
require inference steps that have no meaning in FILL, thus
we cannot use our display calculus $\biilldc$ directly to show
conservativity of BiILL over FILL. We circumvent this problem by
showing that the structural rules to maintain the display property become
{\em admissible}, provided one uses {\em deep inference}.

Following a methodology established for bi-intuitionistic and tense
logics~\cite{Gore10AiML,Gore11LMCS}, we show that the display calculus
for $\biill$ can be refined to a {\em nested sequent
  calculus}~\cite{kashima-cut-free-tense,Brunnler09AML}, called
$\biilldn$, which contains no explicit structural rules, and hence no
cut rule, as long as its introduction rules can act ``deeply'' on any
substructure in a given structure.  To prove that $\biilldn$ is sound
and complete for $\biill$, we use an intermediate nested sequent
calculus called $\biillsn$ which, similar to our display calculus, has
explicit structural rules, including cut, and uses {\em shallow}
inference rules that apply only to the topmost sequent in a
nested sequent. Our shallow inference calculus $\biillsn$
can simulate cut-free proofs of our display calculus $\biilldc$, and
vice versa. It enjoys cut-elimination, the display
property and coincides with the deep-inference calculus $\biilldn$
with respect to (cut-free) derivability. Together these imply that
$\biilldn$ is sound and (cut-free) complete for BiILL. Our deep nested
sequent calculus $\biilldn$ also enjoys a {\em separation property}: a
$\biilldn$-derivation of a formula $A$ uses only introduction rules
for the connectives appearing in $A.$ By selecting from $\biilldn$
only the introduction rules for the connectives in FILL, we obtain a
nested (cut-free and deep inference) calculus $\filldn$ which is
complete for FILL.  We then show that the rules of $\filldn$ are also
sound for the semantics of FILL. The conservativity
of BiILL over FILL follows since a FILL formula $A$ which is
valid in BiILL will be cut-free derivable in $\biilldc$, and hence in
$\biilldn$, and hence in $\filldn$, and hence valid in FILL.


Viewed upwards, introduction rules for display calculi use shallow
inference and can require disassembling structures into an appropriate
form using the display rules, meaning that display calculi do not
enjoy a ``substructure property''. The modularity of display calculi
also demands explicit structural rules for associativity,
commutativity and weak-distributivity. These necessary aspects of
display calculi make them unsuitable for proof search since the
various structural rules and reversible rules can be applied
indiscriminately. As structural rules are admissible in the nested
deep inference calculus $\biilldn$, proof search in it is easier to
manage than in the display calculus. Using $\biilldn$, we show that
the tautology problem for $\biill$ and $\fill$ are in fact
NP-complete.


\emph{We gratefully acknowledge the comments of the anonymous reviewers. 
This work is partly supported by the ARC Discovery Projects DP110103173
and DP120101244.}






\section{Display Calculi}\label{Sec:display}


\subsection{Syntax}

\begin{definition}
BiILL-\emph{formulae} are defined using the grammar below
where $p$ is from some fixed set of propositional variables 
$$
  A ::= p \mid I \mid \bot \mid A\otimes A \mid A\parr A \mid A\limp A \mid A\excl A
$$
\emph{Antecedent} and \emph{succedent} BiILL-\emph{structures} (also
known as antecedent and succedent parts) are defined by
mutual induction, where $\Phi$ is a structural constant and
$A$ is a BiILL-formula:
$$
  X_a ::= A \mid \Phi \mid X_a,X_a \mid X_a<X_s
\qquad
\qquad
\qquad
  X_s ::= A \mid \Phi \mid X_s,X_s \mid X_a>X_s
$$
FILL-formulae are BiILL-formulae with no occurrence of the exclusion
connective $\excl$. FILL-structures are $\biill$-structures with
no occurrence of $<$, and containing only FILL-formulae.
We stipulate that $\otimes$ and $\parr$ bind tighter than $\limp$ and
$\excl$, that comma binds tighter than $>$ and $<$, and resolve
$A\limp B\limp C$ as $A\limp(B\limp C)$.
A BiILL- (resp. FILL-) \emph{sequent} is a pair comprising an antecedent and a
succedent BiILL- (resp. FILL-) structure, written $X_a\turn X_s$. 
\end{definition}

\begin{definition}\label{def:tau}
We can translate sequents $X\turn Y$ into formulae as $\tau^a(X)\limp\tau^s(Y)$,
given the mutually inductively defined antecedent and succedent
$\tau$-\emph{translations}:
\[
\begin{tabular}{|c|c|c|c|c|c|}
\hline & $A$ & $\Phi$ & $X,Y$ & $X>Y$ & $X<Y$ \\
\hline $\tau^a$ & $A$ & $I$ & $\tau^a(X)\otimes\tau^a(Y)$ & & $\tau^a(X)\excl\tau^s(Y)$ \\
\hline $\tau^s$ & $A$ & $\bot$ & $\tau^s(X)\parr\tau^s(Y)$ & $\tau^a(X)\limp\tau^s(Y)$ & \ \\ \hline
\end{tabular}
\]
Hence $\Phi$ and comma are \emph{overloaded} to be translated into
different connectives depending on their position%
. By uniformly replacing our
structural connective $<$ with $>$, we could have also 
overloaded $>$ to stand for $\limp$ and $\excl$,
which would have avoided the blank spaces in the above table, but we have
opted to use different connectives to help visually emphasise whether
a given structure lives in BiILL or its fragment FILL.
\end{definition}

The display calculi for FILL and BiILL are given in Fig.~\ref{Fig:dc}.

\begin{figure}[t]
  \begin{tabular}[c]{l@{\extracolsep{3cm}}l}
\multicolumn{2}{l}{Cut and identity:}
  \\[0px]
  \qquad  \qquad
  \AxiomC{(id)\quad$p\turn p$}
  \DisplayProof
  &
  \AxiomC{$X\turn A$}
  \AxiomC{$A\turn Y$}
  \LeftLabel{(cut)\,}
  \BinaryInfC{$X\turn Y$}
  \DisplayProof
\\
\multicolumn{2}{l}{Logical rules:}
\\[0px]
\qquad\qquad
  \AxiomC{$\Phi\turn X$}
  \LeftLabel{($I\turn$)\,}
  \UnaryInfC{$I\turn X$}
  \DisplayProof
  &
  \AxiomC{($\turn I$)\quad$\Phi\turn I$}
  \DisplayProof
\\[10px]
  \qquad\qquad
  \AxiomC{($\bot\turn$)\quad$\bot\turn\Phi$}
  \DisplayProof
  &
  \AxiomC{$X\turn\Phi$}
  \LeftLabel{($\turn\bot$)\,}
  \UnaryInfC{$X\turn\bot$}
  \DisplayProof
\\[10px]
  \qquad\qquad
  \AxiomC{$A,B\turn X$}
  \LeftLabel{($\otimes\turn$)\,}
  \UnaryInfC{$A\otimes B\turn X$}
  \DisplayProof
  &
  \AxiomC{$X\turn A$}
  \AxiomC{$Y\turn B$}
  \LeftLabel{($\turn\otimes$)\,}
  \BinaryInfC{$X,Y\turn A\otimes B$}
  \DisplayProof
\\[10px]
  \qquad\qquad
  \AxiomC{$A\turn X$}
  \AxiomC{$B\turn Y$}
  \LeftLabel{($\parr\turn$)\,}
  \BinaryInfC{$A\parr B\turn X,Y$}
  \DisplayProof
  &
  \AxiomC{$X\turn A,B$}
  \LeftLabel{($\turn\parr$)\,}
  \UnaryInfC{$X\turn A\parr B$}
  \DisplayProof
\\[10px]
  \qquad\qquad
  \AxiomC{$X\turn A$}
  \AxiomC{$B\turn Y$}
  \LeftLabel{($\limp\turn$)\,}
  \BinaryInfC{$A\limp B\turn X>Y$}
  \DisplayProof
  &
  \AxiomC{$X\turn A>B$}
  \LeftLabel{($\turn\limp$)\,}
  \UnaryInfC{$X\turn A\limp B$}
  \DisplayProof
\\
\multicolumn{2}{l}{Structural rules:}
\\[5px]
\multicolumn{2}{c}{
  \AxiomC{$X\turn Y>Z$}
  \doubleLine
  \LeftLabel{(rp)\,}
  \UnaryInfC{$X,Y\turn Z$}
  \DisplayProof
  \qquad
  \AxiomC{$X, Y \turn Z$}
  \doubleLine
  \LeftLabel{(rp)\,}
  \UnaryInfC{$Y\turn X>Z$}
  \DisplayProof
  \qquad
  \AxiomC{$X<Y\turn Z$}
  \doubleLine
  \LeftLabel{(drp)\,}
  \UnaryInfC{$X\turn Y,Z$}
  \DisplayProof
  \qquad
  \AxiomC{$X \turn Y,Z$}
  \doubleLine
  \LeftLabel{(drp)\,}
  \UnaryInfC{$X<Z\turn Y$}
  \DisplayProof
}
\\[10px]
  \qquad\qquad
  \AxiomC{$X,\Phi\turn Y$}
  \doubleLine
  \LeftLabel{($\Phi\turn$)\,}
  \UnaryInfC{$X\turn Y$}
  \DisplayProof
  &
  \AxiomC{$X\turn \Phi,Y$}
  \doubleLine
  \LeftLabel{($\turn\Phi$)\,}
  \UnaryInfC{$X\turn Y$}
  \DisplayProof
\\[10px]
  \qquad\qquad
  \AxiomC{$W,(X,Y)\turn Z$}
  \doubleLine
  \LeftLabel{(Ass $\turn$)\,}
  \UnaryInfC{$(W,X),Y\turn Z$}
  \DisplayProof
  &
  \AxiomC{$W\turn (X,Y),Z$}
  \doubleLine
  \LeftLabel{($\turn$ Ass)\,}
  \UnaryInfC{$W\turn X,(Y,Z)$}
  \DisplayProof
\\[10px]
  \qquad\qquad
  \AxiomC{$X,Y\turn Z$}
  \LeftLabel{(Com $\turn$)\,}
  \UnaryInfC{$Y,X\turn Z$}
  \DisplayProof
  &
  \AxiomC{$X\turn Y,Z$}
  \LeftLabel{($\turn$ Com)\,}
  \UnaryInfC{$X\turn Z,Y$}
  \DisplayProof
\\[10px]
  \qquad\qquad
  \AxiomC{$W,(X<Y)\turn Z$}
    \LeftLabel{(Grnb $\turn$)}
  \UnaryInfC{$(W,X)<Y\turn Z$}
  \DisplayProof
  &
  \AxiomC{$W\vdash (X> Y), Z$}
    \LeftLabel{($\turn$ Grnb)}
  \UnaryInfC{$W\vdash X>(Y,Z)$}
\DisplayProof
\\[5px]
\multicolumn{2}{l}{Further logical rules for $\biilldc$:}
\\[5px]
  \qquad\qquad
  \AxiomC{$A<B\turn X$}
    \LeftLabel{($\excl\turn$)\,}
  \UnaryInfC{$A\excl B\turn X$}
  \DisplayProof
  &
  \AxiomC{$X\turn A$}
  \AxiomC{$B\turn Y$}
  \LeftLabel{($\turn\excl$)\,}
  \BinaryInfC{$X<Y\turn A\excl B$}
  \DisplayProof
  \end{tabular}
\caption{$\filldc$ and $\biilldc$: display calculi for FILL and BiILL}
\label{Fig:dc}
\end{figure}

\begin{remark}
  For conciseness, we treat comma-separated structures as multisets
  and usually omit explicit use of (Ass $\turn$), ($\turn$ Ass), (Com
  $\turn$) and ($\turn$ Com). The \emph{residuated pair} and
  \emph{dual residuated pair} rules (rp) and (drp) are the
  \emph{display postulates} which give Thm.~\ref{thm:display} below.
  Our display postulates build in commutativity of comma, so 
  the two (Com) rules are derivable. If we wanted to drop
  commutativity~\cite{Cockett:Proof}, we would have to use the more
  general display postulates from~\cite{Gore:SLD}. Note that (drp) may
  create the structure $<$ which has no meaning in FILL, so we will
  return to this issue. For now, observe that proofs of even
  apparently trivial FILL-sequents such as $(p\parr q)\parr r\turn
  p,(q\parr r)$ require (drp) to `move $p$ out the way' so
  ($\turn\parr$) can be applied. Another (drp) then eliminates the $<$
  to restore $p$ to the right. The rule ($\turn$ Grnb) is the
  structural version of Grishin (b), the right hand formula of
  \eqref{eq:wdis}; the rule (Grnb $\turn$) is equivalent.
%
%
Fig.~\ref{Fig:Bierman} gives a cut-free proof of the example from Bierman~\eqref{eq:Bierman}.
\end{remark}

\begin{figure}
\footnotesize{\begin{center}$
\AxiomC{$a\turn a$}
\AxiomC{$b\turn b$}\LeftLabel{\scriptsize{($\parr\turn$)}}
\BinaryInfC{$a\parr b\turn a,b$}
\AxiomC{$c\turn c$}\LeftLabel{\scriptsize{($\parr\turn$)}}
\BinaryInfC{$(a\parr b)\parr c \turn a,b,c$} \LeftLabel{\scriptsize{\textbf{(drp)}}}
\UnaryInfC{$(a\parr b)\parr c<a \turn b,c$} \LeftLabel{\scriptsize{($\turn\parr$)}}
\UnaryInfC{$(a\parr b)\parr c<a\turn b\parr c$}
\AxiomC{$d\turn d$}\LeftLabel{\scriptsize{($\limp\turn$)}}
\BinaryInfC{$b\parr c\limp d \turn ((a\parr b)\parr c<a)>d$}
\AxiomC{$e\turn e$}\LeftLabel{\scriptsize{($\parr\turn$)}}
\BinaryInfC{$(b\parr c\limp d)\parr e \turn (((a\parr b)\parr c<a)>d),e$}\LeftLabel{\scriptsize{($\turn$ Grnb)}}
\UnaryInfC{$(b\parr c\limp d)\parr e \turn ((a\parr b)\parr c<a)>d,e$}\LeftLabel{\scriptsize{(rp)}}
\UnaryInfC{$(b\parr c\limp d)\parr e,((a\parr b)\parr c<a)\turn d,e$}\LeftLabel{\scriptsize{($\turn\parr$)}}
\UnaryInfC{$(b\parr c\limp d)\parr e,((a\parr b)\parr c<a)\turn d\parr e$}\LeftLabel{\scriptsize{(rp)}}
\UnaryInfC{$(a\parr b)\parr c<a \turn (b\parr c\limp d)\parr e>d\parr e$}\LeftLabel{\scriptsize{($\turn\limp$)}}
\UnaryInfC{$(a\parr b)\parr c<a \turn (b\parr c\limp d)\parr e\limp d\parr e$}\LeftLabel{\scriptsize{\textbf{(drp)}}}
\UnaryInfC{$(a\parr b)\parr c \turn a,(b\parr c\limp d)\parr e\limp d\parr e$}
\DisplayProof
$\end{center}}
\caption{The cut-free $\filldc$-derivation of the example from Bierman.}
\label{Fig:Bierman}
\end{figure}

\begin{theorem}[Display Property]\label{thm:display}
 For every structure $Z$ which is an antecedent (resp. succedent)
 part of the sequent $X \turn Y$, there is a sequent $Z \turn Y'$
 (resp. $X' \turn Z$) obtainable from $X \turn Y$ using only (rp) and
 (drp), thereby displaying the $Z$ as the whole of one side.
\end{theorem}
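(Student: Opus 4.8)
The plan is to follow Belnap's strategy and prove the statement by induction on the \emph{depth} of the distinguished part $Z$, where I take the depth to be the number of structural connectives on the path from the turnstile to the node $Z$ in the parse tree of $X\turn Y$. The engine of the argument is an observation I would record first: each of the four equivalences read off the double-line rules (rp) and (drp) is \emph{polarity-preserving}, meaning that every substructure occurrence keeps its status as an antecedent or a succedent part when one passes between premise and conclusion. Consequently an antecedent part can only ever be displayed as a whole antecedent, and a succedent part only as a whole succedent, which is exactly the form asserted by the theorem.

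In the base case the depth is $0$, so $Z$ is already the entire antecedent or the entire succedent and nothing is to be done. For the inductive step suppose, say, that $Z$ is a proper antecedent part (the succedent case being symmetric). Because $Z$ lies strictly inside the antecedent, the root connective of the antecedent cannot be a formula or $\Phi$, and the grammar $X_a ::= A \mid \Phi \mid X_a,X_a \mid X_a<X_s$ leaves only two possibilities. If the antecedent is $X_1,X_2$ with $Z$ occurring (without loss of generality, as comma is read as a multiset) inside $X_1$, then one application of (rp) rewrites $X_1,X_2\turn Y$ to $X_1\turn X_2>Y$; now $X_1$ is the whole antecedent, so $Z$ sits one connective shallower. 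If instead the antecedent is $X_1<X_2$, then since every part of the succedent structure $X_2$ is a succedent part whereas $Z$ is an antecedent part, $Z$ must lie in $X_1$; one application of (drp) rewrites $X_1<X_2\turn Y$ to $X_1\turn X_2,Y$, again moving $Z$ one level shallower. In either case the new sequent is obtained purely by a display postulate and the depth of $Z$ has strictly decreased, so the induction hypothesis applies and displays $Z$. The succedent case is handled dually: a succedent comma $Y_1,Y_2$ is peeled with (drp) to give $X<Y_2\turn Y_1$, and a succedent $Y_1>Y_2$ is peeled with (rp) to give $X,Y_1\turn Y_2$.

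I expect the difficulties to be matters of bookkeeping rather than of substance. The first is to fix a measure that provably decreases; ``depth of $Z$'' works precisely because each display step strips away the root structural connective on $Z$'s side while leaving the context of $Z$ internal to that side untouched, so the path to $Z$ shortens by exactly one. The main point requiring care --- and the real content of the lemma --- is the interplay between the polarity invariant and the structure grammar: one must verify in each case that an antecedent part always descends into the antecedent-typed child of the root connective, so that the matching postulate is genuinely applicable and keeps $Z$ on the correct side of the turnstile. Finally, I use commutativity of comma freely when choosing which comma-component to retain; this is harmless, since the (Com) rules are derivable from (rp) alone, as $X,Y\turn Z$ yields $Y\turn X>Z$ and hence $Y,X\turn Z$.
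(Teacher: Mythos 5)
Your overall skeleton --- induction on the depth of $Z$, peeling one structural connective per step with (rp)/(drp), guarded by a polarity-preservation invariant --- is the standard Belnap argument, and the paper in fact states Thm.~\ref{thm:display} without proof precisely because that argument is routine. But your execution has a genuine gap: you conflate the \emph{polarity} of an occurrence (its being an antecedent or succedent \emph{part}) with the \emph{syntactic type} of the structure containing it. These come apart because polarity flips at the mixed slots of $<$ and $>$: in an antecedent $X_1<X_2$ the child $X_2$ is succedent-typed and is itself a succedent part, but substructures \emph{inside} $X_2$ need not be succedent parts. In $X_1<(U>V)\turn Y$, the structure $U$ is an antecedent part lying inside $X_2=U>V$, directly refuting your claim that ``since every part of the succedent structure $X_2$ is a succedent part \ldots\ $Z$ must lie in $X_1$''. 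For the same reason your opening move ``because $Z$ lies strictly inside the antecedent'' is unjustified: an antecedent part can live in the succedent, e.g.\ $U$ in $X\turn U>V$, and this configuration is reached by neither branch of your case split (your succedent case symmetrically assumes succedent parts sit inside the succedent, which also fails, e.g.\ for the right child of a $<$ nested in the antecedent).

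The repair is to run the induction on the depth of the \emph{position} of $Z$, ignoring its polarity while peeling: at each step, look at the root connective of whichever side of the turnstile currently contains $Z$, with \emph{two} subcases per connective according to which child contains $Z$. The cases you gave cover the child whose type matches its side; the missing ones need two postulate applications each. For antecedent $X_1<X_2\turn Y$ with $Z$ in $X_2$, apply (drp) twice: $X_1<X_2\turn Y$ to $X_1\turn X_2,Y$ to $X_1<Y\turn X_2$, displaying $X_2$ as the whole succedent. Dually, for $X\turn Y_1>Y_2$ with $Z$ in $Y_1$, apply (rp) twice to reach $Y_1\turn X>Y_2$. Each such pair still strips exactly one connective off the path to $Z$, so your depth measure still decreases strictly. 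Your polarity-preservation observation --- which is correct, and is the real content of the theorem --- is then invoked only at the very end: once $Z$ occupies a whole side, invariance guarantees that side matches its part-ness, i.e.\ $Z\turn Y'$ for an antecedent part and $X'\turn Z$ for a succedent part.
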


\begin{theorem}[Cut-Admissibility]\label{thm:cutad}
  From cut-free $\biilldc$-derivations of $X \turn A$ and 
  $A \turn Y$ there is an effective procedure to obtain
  a cut-free $\biilldc$-derivation of $X
  \turn Y$.
\end{theorem}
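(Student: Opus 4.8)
The plan is to derive this theorem as a direct instance of Belnap's generic cut-elimination metatheorem for display calculi~\cite{Belnap82JPL}, which guarantees an effective cut-elimination procedure for any display calculus satisfying the eight conditions C1--C8. The work therefore reduces to inspecting the rules of $\biilldc$ in Fig.~\ref{Fig:dc} and checking that each condition holds. Most of these checks are routine syntactic observations about the shape of the rules, so I would dispatch them quickly and concentrate the effort on the single genuinely computational condition, C8.

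First I would verify the parametric conditions C1--C4. C1 (every formula constituent of a premise is a subformula of a formula in the conclusion) is immediate: each logical rule has exactly the expected subformula behaviour, and the structural rules (rp), (drp), the $\Phi$-rules, (Ass), (Com) and (Grnb) introduce no new formulas at all. Conditions C2--C4 govern the non-principal parameters: in every rule the parametric structures are copied verbatim from premise to conclusion, never duplicated into two distinct constituents, and never moved between antecedent and succedent position, since the display postulates and (drp) swap a substructure across the turnstile only by flipping polarity through a $>$ or $<$, so antecedent parts remain antecedent parts and succedent parts remain succedent parts. Condition C5, the display of principal constituents, is exactly our Display Property (Thm.~\ref{thm:display}): any principal formula can be exhibited as a whole side of the sequent using only (rp) and (drp). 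The substitution conditions C6 and C7 hold because replacing a displayed parametric structure by an arbitrary structure of the same polarity commutes with each rule.

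The heart of the argument is C8, the reduction of a cut whose cut-formula is principal in both premises. Here I would proceed connective by connective, rewriting the cut as cuts on immediate subformulas and using the display postulates to rearrange the surrounding structure. For instance, for $\otimes$ a cut on $A\otimes B$ whose right premise ends in ($\turn\otimes$) from $X\turn A$ and $Y\turn B$, and whose left premise ends in ($\otimes\turn$) from $A,B\turn Z$, is replaced as follows: from $A,B\turn Z$ display $A$ by (rp) to get $A\turn B>Z$; cut with $X\turn A$ on the smaller formula $A$ to get $X\turn B>Z$; redisplay to $B\turn X>Z$ by (rp); cut with $Y\turn B$ on the smaller formula $B$ to get $Y\turn X>Z$; and finally redisplay to $X,Y\turn Z$ by (rp). The cases for $\parr$ and the units $I,\bot$ are dual or analogous, while the two genuinely implicational connectives $\limp$ and $\excl$ are symmetric to one another, the latter using (drp) in place of (rp) to display across the succedent.

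The main obstacle I anticipate is one of completeness of the case analysis rather than conceptual depth: one must check C8 for every connective, including the constants, and confirm that the structural rules peculiar to this calculus --- in particular (drp), which can manufacture the $<$ connective, and the Grishin rule (Grnb) --- respect C1--C7. Since these rules are purely parametric and polarity-preserving, they cause no difficulty for the conditions, so once the reductions above are carried out uniformly for each connective the metatheorem applies and yields the desired effective procedure.
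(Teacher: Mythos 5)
Your proposal is correct and takes essentially the same route as the paper, which proves Thm.~\ref{thm:cutad} by appealing to Belnap's generic cut-elimination theorem and verifying conditions C1--C8 for $\biilldc$ (App.~\ref{app:display}); your explicit C8 reductions via (rp)/(drp) are exactly the standard principal-cut rewrites that verification entails. One small correction: C5 is the syntactic condition that each principal formula in a rule's conclusion is the \emph{entire} antecedent or succedent (checked by inspecting Fig.~\ref{Fig:dc}), not the Display Property of Thm.~\ref{thm:display} itself --- though both hold here, so nothing in your argument breaks.
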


\begin{proof}
  $\biilldc$ obeys Belnap's conditions for
  cut-admissibility~\cite{Belnap82JPL}: see
  App.~\ref{app:display}.
\end{proof}

\subsection{Semantics}

\begin{definition}\label{Def:FILLcat}
A \emph{FILL-category} is a category equipped with
\begin{itemize}
\item a symmetric monoidal closed structure $(\otimes,I,\limp)$
\item a symmetric monoidal structure $(\parr,\bot)$
\item a natural family of \emph{weak distributivity} arrows $A\otimes(B\parr C)\to(A\otimes B)\parr C$.
\end{itemize}
A \emph{BiILL-category} is a FILL-category where the $\parr$ bifunctor has a
\emph{co-closure} $\excl$, so there is a natural isomorphism between arrows $A
\to B\parr C$ and $A\excl B\to C$.

\end{definition}

\begin{definition}\label{def:free}
  The \emph{free} FILL- (resp. BiILL-) category has FILL- (resp.
  BiILL-) formulae as objects and the following arrows (quotiented by
  certain equations) where we are given objects $A, A',A'',B,B'$ and
  arrows $f:A\to A',f':A'\to A'',g:B\to B'$,
  $(\binop,K)\in\{(\otimes,I),(\parr,\bot)\}$, and where the
  co-closure arrows exist in the free BiILL-category only:
\begin{description}
\item[Category:] 
  $\xymatrix{A \ar[r]^{id} & A}
  \qquad
  \xymatrix{A \ar[r]^-{f'\circ f} & A''}$
\item[Symmetric Monoidal:]
  $\xymatrix{A\binop B \ar[r]^-{f\binop g} & A'\binop B'}
   \qquad
  \xymatrix{(A\binop B)\binop C \ar@<0.5ex>[r]^{\alpha} & A\binop(B\binop C)
  \ar@<0.5ex>[l]^{\alpha^{-1}}}
  $   
\item[\hphantom{Symmetric Monoidal:}]
  $
  \xymatrix{K\binop A \ar@<0.5ex>[r]^-{\lambda} & A
  \ar@<0.5ex>[l]^-{\lambda^{-1}}}
  \qquad
  \xymatrix{A\binop K \ar@<0.5ex>[r]^-{\rho} & A \ar@<0.5ex>[l]^-{\rho^{-1}}}
  \qquad
  \xymatrix{A\binop B \ar[r]^{\gamma} & B\binop A}
  $
\item[Closed:] 
  $\xymatrix{A\limp B \ar[r]^-{A\limp g} & A\limp B'}
   \qquad
   \xymatrix{(A\limp B)\otimes A \ar[r]^-{\varepsilon} & B}
   \qquad
  \xymatrix{A \ar[r]^-{\eta} & B\limp A\otimes B}$
\item[Weak Distributivity:]
  $\xymatrix{A\otimes(A'\parr A'') \ar[r]^{\omega} & (A\otimes A')\parr A''}$
\item[Co-Closed:] 
  $\xymatrix{A\excl B \ar[r]^-{f\excl B} & A'\excl B}
   \qquad
   \xymatrix{A\parr B\excl A \ar[r]^-{\varepsilon} & B}
   \qquad
   \xymatrix{A \ar[r]^-{\eta} & B\parr(A\excl B)}$
\end{description}

We will suppress explicit reference to
the associativity and symmetry arrows.
\end{definition}

\begin{definition}\label{Def:valid}
  A FILL- (resp. BiILL-) sequent $X\turn Y$ is \emph{satisfied} by a
  FILL- (resp.\ BiILL-) category if, given any valuation of its
  propositional variables as objects, there exists an arrow
  $I\to\tau^a(X)\limp\tau^s(Y)$.
  It is FILL- (resp. BiILL-) \emph{valid} if
  it is satisfied by all such categories. In fact, we only need to
  check the free categories under their generic valuations.
\end{definition}

\begin{remark}
Those familiar with categorical logic will note that our use of category theory
here is rather shallow, looking only at whether hom-sets are populated,
and not at the rich structure of equivalences between proofs that
categorical logic supports. This is
an adequate basis for this work because the question of FILL-validity
alone has proved so vexed.
\end{remark}

\begin{theorem}\label{thm:dc=biill}
$\biilldc$ (Fig.~\ref{Fig:dc}) is \emph{sound} and \emph{cut-free complete} for
BiILL-validity.
\end{theorem}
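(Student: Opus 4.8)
The plan is to prove soundness and cut-free completeness separately, exploiting the display calculus machinery already set up.

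For soundness, I would proceed by induction on the height of a $\biilldc$-derivation of $X \turn Y$, showing that whenever $X \turn Y$ is derivable, the sequent is BiILL-valid, i.e.\ there exists an arrow $I \to \tau^a(X) \limp \tau^s(Y)$ in the free BiILL-category (equivalently, by the adjunction defining $\limp$, an arrow $\tau^a(X) \to \tau^s(Y)$). The key is to check that each inference rule in Fig.~\ref{Fig:dc} corresponds to a construction on arrows in a BiILL-category. The identity axiom gives $\mathrm{id}$; (cut) composes arrows; the logical rules for $\otimes, \parr, \limp$ and their units come directly from the symmetric monoidal closed and symmetric monoidal structures; the ($\excl$) rules come from the co-closure adjunction; and ($\turn$ Grnb) / (Grnb $\turn$) correspond to the weak distributivity arrow $\omega$. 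The display postulates (rp) and (drp) are reversible and must each translate to a \emph{natural isomorphism} of hom-sets: (rp) is precisely the closed adjunction $\mathrm{Hom}(\tau^a(X) \otimes \tau^a(Y), \tau^s(Z)) \cong \mathrm{Hom}(\tau^a(X), \tau^a(Y) \limp \tau^s(Z))$ read through the $\tau$-translation, while (drp) is the co-closure adjunction. The associativity, commutativity, and unit structural rules are handled by the corresponding coherence isomorphisms $\alpha, \gamma, \lambda, \rho$, which is why the remark suppresses them. The main subtlety here is bookkeeping: one must verify that the $\tau$-translation is compatible with each rule, checking that the overloading of comma and $\Phi$ (as $\otimes/I$ on the left and $\parr/\bot$ on the right) lines up correctly with the monoidal structures, but no single step is deep.

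For cut-free completeness, I would show that every BiILL-valid sequent has a cut-free $\biilldc$-derivation. The cleanest route is to first establish \emph{completeness with cut}: show that for every arrow in the free BiILL-category, the corresponding sequent is $\biilldc$-derivable (possibly using cut). Because the free category is generated by the basic arrows of Def.~\ref{def:free} under composition and the functorial/structural constructions, it suffices to exhibit, for each generating arrow (the units and counits $\varepsilon, \eta$ of the two adjunctions, the weak distributivity $\omega$, the structural isomorphisms, and the functoriality of $\otimes, \parr, \limp, \excl$), a derivation of the sequent translating that arrow, and to show that composition of arrows is matched by (cut) on derivations and that the arrow-forming operations are matched by the corresponding rules. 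Since the free category \emph{is} the term model, this is essentially a syntactic induction on the construction of arrows. Finally, cut-free completeness follows by invoking Cut-Admissibility (Thm.~\ref{thm:cutad}): any cut in the completeness derivation can be eliminated, yielding a cut-free proof.

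The step I expect to be the main obstacle is completeness, specifically deriving the weak distributivity and the two adjunction counit/unit arrows cut-free, or at least with cut in a way that Thm.~\ref{thm:cutad} can later remove. Deriving $\omega$ corresponds to proving the structural Grishin~(b) rule can simulate the distributivity arrow $A \otimes (B \parr C) \to (A \otimes B) \parr C$; the Bierman example in Fig.~\ref{Fig:Bierman} already demonstrates the characteristic interplay of (drp), ($\turn$ Grnb), and (rp) needed, so the pattern is available but the general case requires care. The other delicate point is ensuring that the functoriality arrows such as $f \excl B$ and $f \limp g$ are derivable, which amounts to showing the introduction rules are closed under the relevant substitutions; here the display property (Thm.~\ref{thm:display}) is the crucial tool, since it lets us display any substructure and apply the appropriate introduction rule. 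Provided these generating arrows are covered, the inductive closure under composition and functorial construction is routine, and cut-elimination does the rest.
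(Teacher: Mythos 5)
Your proposal takes essentially the same approach as the paper, whose entire proof is the one-line observation that $\biilldc$-proof rules and the arrows of the free BiILL-category are interdefinable, relying implicitly---exactly as you do explicitly---on cut-admissibility (Thm.~\ref{thm:cutad}) to upgrade completeness-with-cut to cut-free completeness. Your elaboration of both directions (rules-to-arrows for soundness via the $\tau$-translation and the adjunctions behind (rp)/(drp); derivations of the generating arrows, composition via (cut), and the display property for the functoriality arrows for completeness) is a correct fleshing-out of that same argument.
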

\begin{proof}
$\biilldc$-proof rules and the arrows of the free BiILL-category are
interdefinable.
\end{proof}

\begin{corollary}\label{cor:filldc_complete}
The display calculus $\filldc$ is cut-free complete for FILL-validity.
\end{corollary}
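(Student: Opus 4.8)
The plan is to obtain this as a short corollary of Theorem~\ref{thm:dc=biill} combined with the subformula property of cut-free display derivations. The only rules that $\biilldc$ has over $\filldc$ are the two exclusion rules $(\excl\turn)$ and $(\turn\excl)$; all the structural rules (including (drp), $(\turn \mathrm{Grnb})$, and those manipulating $<$) already belong to $\filldc$. So it suffices to show that every FILL-valid FILL-sequent admits a cut-free $\biilldc$-derivation in which these two exclusion rules never appear.

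The first step is to compare the two notions of validity. By Definition~\ref{Def:FILLcat} a BiILL-category is just a FILL-category carrying an additional co-closure for $\parr$, so every BiILL-category is in particular a FILL-category. Since a FILL-sequent is also a BiILL-sequent and its $\tau$-translation is the same exclusion-free formula in either reading, a sequent satisfied by all FILL-categories is a fortiori satisfied by the subclass of BiILL-categories. Hence FILL-validity implies BiILL-validity (the converse is the genuinely hard conservativity direction and is not needed here). Thus, given a FILL-valid FILL-sequent $X\turn Y$, it is BiILL-valid, and Theorem~\ref{thm:dc=biill} furnishes a cut-free $\biilldc$-derivation $\pi$ of $X\turn Y$.

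The second and crucial step is the subformula property for cut-free $\biilldc$, which I would read off by inspecting Fig.~\ref{Fig:dc}: the identity axiom introduces only $p\turn p$; each logical rule has as its premise formulae the immediate subformulae of the single connective formula in its conclusion; and the structural rules merely rearrange structures without creating or altering formulae. With cut absent, an upward induction from the end-sequent shows that every formula occurring anywhere in $\pi$ is a subformula of a formula of $X\turn Y$. Now $X\turn Y$ is $\excl$-free, so no formula of the form $A\excl B$ can occur anywhere in $\pi$. But the conclusions of $(\excl\turn)$ and $(\turn\excl)$ each display such a principal formula, so neither rule can have been used. Therefore $\pi$ is already a cut-free $\filldc$-derivation, establishing cut-free completeness; note that $\pi$ may still contain the structural connective $<$ via (drp) and $(\turn\mathrm{Grnb})$, but these rules are part of $\filldc$ and cause no difficulty.

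The expected main obstacle is pinning down the subformula property cleanly, since it is what guarantees that the exclusion connective — introduced into $\biilldc$ purely to gain the display property and Belnap's cut-elimination — leaves no trace in a proof of an exclusion-free sequent. Once that property is in hand, the elimination of the two exclusion rules, and hence the corollary, is immediate.
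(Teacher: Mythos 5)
Your proposal is correct and follows essentially the same route as the paper, whose one-line proof ("BiILL-categories are FILL-categories, and $\biilldc$ proofs of FILL-sequents are $\filldc$ proofs") compresses exactly your two steps: FILL-validity implies BiILL-validity by the subclass inclusion of categories, and then Theorem~\ref{thm:dc=biill} plus the cut-free subformula property (Belnap's condition C1) rules out any occurrence of $\excl$, hence any use of $(\excl\turn)$ or $(\turn\excl)$. You have merely made explicit the subformula-property justification that the paper leaves implicit, including the correct observation that $<$, (drp) and $(\turn\mathrm{Grnb})$ already belong to $\filldc$ and so pose no problem.
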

\begin{proof}
Because BiILL-categories are FILL-categories, and $\biilldc$ proofs of
FILL-sequents are $\filldc$ proofs.
\end{proof}

We will return to the question of \emph{soundness} for $\filldc$ in
Sec.~\ref{sec:conserv}.




\section{Deep Inference and Proof Search}
\label{sec:nested-sequent}

We now present a refinement of the display calculus $\biilldc$,
in the form of a nested sequent calculus, that is more suitable
for proof search. A nested sequent
is essentially just a structure in display calculus, but presented in a more sequent-like notation. 
This change of notation allows us to present the proof systems much more concisely. 
The proof system we are interested in is the deep inference
system in Sec.~\ref{sec:deep}, but we shall first present an intermediate system, 
$\biillsn$, which is closer to display calculus, and which eases the proof of
correspondence between the deep inference calculus and the display calculus for $\biill.$


\subsection{The Shallow Inference Calculus}
\label{sec:shallow}

The syntax of nested sequents is given by the grammar below
where $A_i$ and $B_j$ are formulae. 
$$
S~~T ::= S_1,\dots,S_k,A_1,\dots,A_m \seq B_1,\dots, B_n, T_1, \dots,T_l
$$
We use $\Gamma$ and $\Delta$ for multisets of formulae and use
$P$, $Q$, $S$, $T$, $X$, $Y$, etc., for sequents, and $\Sc$, $\Xc$, etc., for
multisets of sequents and formulae. The empty multiset is 
$\sunit$ (`dot'). 


A nested sequent can naturally be represented as a tree structure as follows.
The nodes of the tree are traditional two-sided sequents (i.e., pairs of multisets). The edges
between nodes are labelled with either a $-$, denoting nesting to the left of the sequent
arrow, or a $+$, denoting nesting to the right of the sequent arrow. For example,
the nested sequent below
can be visualised as the tree in Fig.~\ref{fig:partition} (i):
\begin{equation}
  \label{eq:1}
(e,f \seq g), (p, (u,v \seq x,y) \seq q,r), a, b \seq c, d, (\cdot \seq s)
\end{equation}
A display sequent can be seen as a nested sequent, where $\turn$,
$>$ and $<$ are all replaced by $\seq$ and the unit $\Phi$ is represented by the empty multiset.
The definition of a nested sequent incorporates implicitly the associativity and
commutativity of comma, and the effects of its unit, via the multiset structure.

\begin{definition}\label{def:tau_nested}
Following Def.~\ref{def:tau}, we can translate nested sequents into equivalence
classes of BiILL-formulae (modulo associativity, commutativity, and unit laws)
via
\emph{$\tau$-translations}:
$$
\begin{array}{l}
\tau^a(S_1,\dots,S_k, A_1,\dots,A_m \seq B_1,\dots,B_n, T_1,\dots,T_l) \\
= 
(\tau^a(S_1) \ltens \cdots \ltens \tau^a(S_k) \ltens A_1 \ltens \cdots \ltens A_m) 
\excl
(B_1 \parr \cdots \parr B_n \parr \tau^s(T_1) \parr \cdots \parr \tau^s(T_l))
\\[1em]
\tau^s(S_1,\dots,S_k, A_1,\dots,A_m \seq B_1,\dots,B_n, T_1,\dots,T_l) \\
= 
(\tau^a(S_1) \ltens \cdots \ltens \tau^a(S_k) \ltens A_1 \ltens \cdots \ltens A_m) 
\limp
(B_1 \parr \cdots \parr B_n \parr \tau^s(T_1) \parr \cdots \parr \tau^s(T_l)).
\end{array}
$$
The translations
$\tau^a$ and $\tau^s$ differ only in their translation of the sequent symbol
$\seq$ to $\limp$ and $\excl$ respectively. Where $m = 0$, $A_1\ltens\cdots
\ltens A_m$ translates to $I$, and similarly $B_1\parr\cdots\parr B_n$
translates to
$\bot$ when $n =0$. These translations each extend to a map from multisets of nested
sequents and formulae to formulae: $\tau^a$ (resp. $\tau^s$) acts on each
sequent as above, leaves formulae unchanged, and connects the resulting formulae
with $\otimes$ (resp. $\parr$). Empty multisets are mapped to $I$ (resp.
$\bot$).
\end{definition}



A {\em context} is either a `hole' $[~]$, called the {\em empty context}, 
or a sequent where exactly one node has been replaced by a hole $[~]$. 
Contexts are denoted by $X[~]$.
We write $X[S]$ to denote a sequent
resulting from replacing the hole $[~]$ in $X[~]$ with
the sequent $S$.
A non-empty context $X[~]$ is {\em positive} if the hole $[~]$ occurs immediately
to the right of a sequent arrow $\seq$, and {\em negative} otherwise.
This simple definition of polarities of a context is made possible by the use of the same symbol $\seq$
to denote the structural counterparts of $\limp$ and $\excl$. 
As we shall see in Sec.~\ref{sec:deep}, this overloading of $\seq$ allows a
presentation of deep inference rules that ignores context polarity. 



The shallow inference system $\biillsn$ for $\biill$ is given
in Fig.~\ref{fig:fbills}.
The main difference from $\biilldc$ is that
we allow multiple-conclusion logical rules. This implicitly builds the Grishin (b) rules into 
the logical rules (see App. D).

\begin{theorem}
\label{thm:biill-sn-eq-dc}
A formula is cut-free $\biillsn$-provable iff it is cut-free $\biilldc$-provable. 
\end{theorem}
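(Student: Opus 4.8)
The plan is to establish both directions of the biconditional by mutual simulation between the two calculi, using the fact that nested sequents and display structures are essentially the same objects under the translation described before the theorem (where $\turn$, $>$, $<$ all become $\seq$, and $\Phi$ becomes the empty multiset). The key conceptual point is that $\biillsn$ uses \emph{shallow} inference acting only on the topmost sequent, just like $\biilldc$, so the two systems differ only in (a) $\biillsn$ allowing multiple-conclusion logical rules that absorb the Grishin~(b) structural steps, and (b) the notational identification of the structural connectives via the overloaded $\seq$.

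First I would make precise the correspondence between a $\biilldc$-sequent and its nested-sequent reading, verifying that the $\tau$-translations of Def.~\ref{def:tau} and Def.~\ref{def:tau_nested} agree (modulo associativity, commutativity, and unit laws) on corresponding objects. This justifies that ``provability of a formula'' means the same thing on both sides. Because the display postulates $(\rp)$ and $(\drp)$ are reversible, and because comma is treated as a multiset in both systems, the implicit structural rules (associativity, commutativity, $\Phi$-rules) are free on the nested side; so the real content is matching the logical rules.

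For the direction from $\biilldc$ to $\biillsn$, I would argue by induction on the height of a cut-free $\biilldc$-derivation, showing each $\biilldc$-rule instance can be simulated in $\biillsn$. The single-conclusion display rules such as $(\parr\turn)$, $(\otimes\turn)$, $(\turn\limp)$, $(\turn\excl)$ translate directly into their $\biillsn$ counterparts, while the explicit $\biilldc$ structural rules $(\rp)$, $(\drp)$, and the two Grishin~(b) rules must be shown admissible or derivable in $\biillsn$ — this is where the remark that $\biillsn$'s multiple-conclusion rules ``build in'' Grishin~(b) is used (cf.\ App.~D). For the converse, from $\biillsn$ to $\biilldc$, I would take a cut-free $\biillsn$-derivation and, for each multiple-conclusion logical rule, insert the explicit display/Grishin steps of $\biilldc$ needed to first display the relevant substructure (using Thm.~\ref{thm:display}), apply the corresponding single-conclusion $\biilldc$-rule, and then restore the context, exactly as illustrated in Fig.~\ref{Fig:Bierman} where $(\drp)$ moves a formula out of the way so $(\turn\parr)$ applies.

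The main obstacle I expect is the converse direction, specifically faithfully reconstructing the explicit display-calculus bookkeeping around a single multiple-conclusion $\biillsn$ rule: one must check that for \emph{every} ambient context the required substructure can indeed be displayed and then reassembled, and that the Grishin~(b) steps line up correctly with the multiple-conclusion form. Keeping the induction cut-free throughout — ensuring no cut is silently introduced when simulating in either direction — is the delicate point, since the whole value of the theorem lies in the phrase ``cut-free.''
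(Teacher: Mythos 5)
Your proposal is correct and follows essentially the same route as the paper's proof (App.~D): generalise the statement from formulae to sequents via the translation between display structures and nested sequents, note that the $\biilldc$-to-$\biillsn$ direction is immediate because every $\biillsn$ rule (including the primitive structural rules $rp_1, rp_2, drp_1, drp_2, gl, gr$, so no admissibility argument is even needed there) subsumes its display counterpart, and simulate each multiple-conclusion $\biillsn$ rule in $\biilldc$ by displaying the principal formula, applying the single-conclusion rule, and reassembling the context with the display postulates and Grishin~(b) — exactly the pattern the paper illustrates for $\parr_l$. Your only slight misplacement is attributing the ``built-in Grishin~(b)'' remark to the easy direction, when it is precisely the converse simulation that needs the explicit ($\turn$ Grnb) steps.
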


\begin{corollary}
\label{cor:cut-elim-biillsn}
The cut rule is admissible in $\biillsn.$
\end{corollary}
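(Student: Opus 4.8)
The plan is to obtain cut-admissibility for $\biillsn$ as a corollary of cut-admissibility for the display calculus (Thm.~\ref{thm:cutad}), by routing any cut in $\biillsn$ through $\biilldc$ and back. The key observation is that the cut rule of $\biillsn$ is literally the same formula-cut as the cut rule of $\biilldc$, so the translation already used to prove Thm.~\ref{thm:biill-sn-eq-dc} needs only a trivial extension to carry cuts across.

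First I would recall the two directions implicit in the proof of Thm.~\ref{thm:biill-sn-eq-dc}: every cut-free $\biillsn$ rule instance is simulated by a block of cut-free $\biilldc$ inferences (the multiple-conclusion, Grishin-(b)-absorbing logical rules of $\biillsn$ being recovered from the corresponding $\biilldc$ logical rules together with the display postulates and the structural rule ($\turn$ Grnb)), and conversely. I would then extend the $\biillsn$-to-$\biilldc$ simulation to proofs that contain cut by mapping each instance of the $\biillsn$ cut rule to the identical instance of the $\biilldc$ cut rule; since both calculi share this rule verbatim, no further work is required in this case.

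Given these, the argument is a three-step round trip. Suppose a formula $A$ is $\biillsn$-provable using cut. By the extended simulation, $A$ is $\biilldc$-provable using cut. By Thm.~\ref{thm:cutad}, repeated application of the effective cut-elimination procedure removes every cut, yielding a cut-free $\biilldc$-derivation of $A$. Finally, by the (cut-free) equivalence of Thm.~\ref{thm:biill-sn-eq-dc}, $A$ is cut-free $\biillsn$-provable, which is the claim.

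The main obstacle is not any of these steps individually but ensuring that the simulation invoked in the first step genuinely handles the \emph{non-atomic} logical rules of $\biillsn$: because those rules are multiple-conclusion and silently build in Grishin~(b), one must check that each is derivable in $\biilldc$ using only the display postulates (rp), (drp) and the structural Grishin rule, so that translating a $\biillsn$-with-cut proof never introduces any inference outside $\biilldc$. This verification is exactly the content already established for the cut-free direction of Thm.~\ref{thm:biill-sn-eq-dc}; once it is in hand, the treatment of cut is immediate and the corollary follows.
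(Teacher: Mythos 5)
Your overall strategy is exactly the paper's: Cor.~\ref{cor:cut-elim-biillsn} is stated immediately after Thm.~\ref{thm:biill-sn-eq-dc} precisely because it follows by the round trip you describe --- simulate a $\biillsn$ proof containing cuts inside $\biilldc$, eliminate the cuts there via Belnap's theorem (Thm.~\ref{thm:cutad}), and return to a cut-free $\biillsn$ proof by the cut-free equivalence of Thm.~\ref{thm:biill-sn-eq-dc} (whose proof in the appendix is stated at the level of sequents, not just formulae, so the round trip is available).

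There is, however, one step that fails as you state it: the two calculi do \emph{not} share the cut rule verbatim. In $\biilldc$ the cut formula must be displayed --- the premises are $X \turn A$ and $A \turn Y$, with $A$ the \emph{entire} succedent, respectively antecedent --- whereas the $\biillsn$ cut is context-splitting on both sides: from $\Sc \seq \Sc', A$ and $A, \Tc \seq \Tc'$ it infers $\Sc, \Tc \seq \Sc', \Tc'$. So you cannot map a $\biillsn$ cut to ``the identical instance'' of the $\biilldc$ cut; you must derive it. The derivation is short but, notably, needs more than the display postulates: translating the premises, (drp) displays $A$ in the first, giving $\seqtodisp{\Sc} < \seqtodisp{\Sc'} \turn A$, and (rp) displays it in the second, giving $A \turn \seqtodisp{\Tc} > \seqtodisp{\Tc'}$; the display cut then yields $\seqtodisp{\Sc} < \seqtodisp{\Sc'} \turn \seqtodisp{\Tc} > \seqtodisp{\Tc'}$; applying (drp) gives $\seqtodisp{\Sc} \turn (\seqtodisp{\Tc} > \seqtodisp{\Tc'}), \seqtodisp{\Sc'}$, and one application of ($\turn$~Grnb) followed by (rp) restores $\seqtodisp{\Sc}, \seqtodisp{\Tc} \turn \seqtodisp{\Tc'}, \seqtodisp{\Sc'}$. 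That Grishin~(b) is required here is no accident: $\biillsn$ builds Grishin~(b) into \emph{all} its context-splitting rules, cut included --- exactly the phenomenon you flagged for the logical rules but exempted the cut rule from. With this one derivation supplied, your argument is complete and coincides with the paper's.
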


Just as in display calculus (Thm. \ref{thm:display}), the display property holds
for $\biillsn$. 

\begin{proposition}[Display property]
\label{prop:display}
Let $X[~]$ be a positive (negative) context. For every $\Sc$,
there exists $\Tc$ such that $\Tc \seq \Sc$ (respectively $\Sc \seq \Tc$) 
is derivable from $X[\Sc]$ 
using only the structural rules from $\{ \drp_1, drp_2, \rp_1,
rp_2\}.$
Thus $\Sc$ is ``displayed'' in $\Tc\seq\Sc$ ($\Sc\seq\Tc$).
\end{proposition}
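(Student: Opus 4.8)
The plan is to prove the display property for nested sequents by \emph{structural induction on the context} $X[~]$, mirroring the proof of Thm.~\ref{thm:display} for the display calculus but using the nested-sequent structural rules $\{\drp_1,\drp_2,\rp_1,\rp_2\}$ in place of (rp) and (drp). The statement is symmetric: for a positive context we must display $\Sc$ on the right of the turnstile, and for a negative context on the left. The base case is when $X[~]$ is the empty context $[~]$, where $X[\Sc]=\Sc$ is already displayed (as either $\cdot\seq\Sc$ or $\Sc\seq\cdot$ depending on how one reads the convention), so nothing need be done.

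For the inductive step, I would observe that a context $X[~]$ is built by nesting: the hole sits inside some sequent which is itself nested, either to the left ($-$ edge) or to the right ($+$ edge) of its parent node's turnstile. So $X[~]=Y[Z[~]]$ where $Z[~]$ is the ``outermost'' layer enclosing the hole and $Y[~]$ is the remaining context. Concretely, the subsequent containing the hole has the shape $\Gamma, (Z[\Sc]) \seq \Delta$ or $\Gamma \seq \Delta, (Z[\Sc])$, and my aim at each step is to use a single application of a residuation/dual-residuation rule to ``pull'' the nested subsequent containing $\Sc$ out across its parent turnstile, thereby reducing the nesting depth of the hole by one while preserving (or flipping, as appropriate) its polarity. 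The four structural rules are exactly the reversible postulates that effect this move: $\rp_1,\rp_2$ push material across $\seq$ from a positive position, and $\drp_1,\drp_2$ do the dual from a negative position. Each such step peels away one level of nesting, and crucially it toggles the polarity in the expected way so that the induction hypothesis can be applied to the shortened context $Y[~]$ acting on the now-larger displayed multiset.

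The two key bookkeeping points are tracking \emph{polarity} and tracking \emph{what gets displayed}. On polarity: a positive context has its hole to the right of a $\seq$, and applying an $\rp$ or $\drp$ rule to bring the enclosing subsequent across the parent turnstile changes whether the hole now sits to the left or right, which in turn must match the polarity of the parent context $Y[~]$ so the induction hypothesis fires. I expect this to work out cleanly precisely because $\seq$ is overloaded to denote the structural counterparts of both $\limp$ and $\excl$, as the excerpt emphasises: the same four rules suffice regardless of which connective a given nesting morally encodes. On what gets displayed: at each peeling step the target multiset $\Sc$ stays fixed while the complementary side $\Tc$ accumulates the material we have moved out; so $\Tc$ is constructed incrementally along the induction, and I would verify by the $\tau$-translation of Def.~\ref{def:tau_nested} that the sequent remains derivable (equivalently, that each $\rp/\drp$ step is reversible and hence derivability is preserved in the required direction from $X[\Sc]$).

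The main obstacle I anticipate is \emph{not} any deep combinatorial difficulty but rather getting the inductive formulation exactly right so that the polarity flips and the choice among the four structural rules are forced by the shape of the enclosing nesting edge. In particular one must be careful that the nested sequents here are read modulo associativity, commutativity, and unit of comma (as noted after Def.~\ref{def:tau_nested}), so the ``single node'' enclosing the hole can always be normalised into the canonical $\Gamma,(\cdot)\seq\Delta$ or $\Gamma\seq\Delta,(\cdot)$ shape before the appropriate residuation rule is applied; I would make this normalisation explicit in the induction so that exactly one of $\{\drp_1,\drp_2,\rp_1,\rp_2\}$ applies at each step. Once the correspondence between nesting edges, context polarity, and the four rules is pinned down, the induction closes immediately, just as the display-calculus argument underlying Thm.~\ref{thm:display} does.
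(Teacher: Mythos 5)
Your proposal is correct and is essentially the argument the paper intends: the paper states Prop.~\ref{prop:display} without explicit proof, as the standard induction on the depth of the hole that peels away one nesting layer at a time using the display postulates, exactly mirroring Thm.~\ref{thm:display}. Two details in your sketch need repair, though neither threatens the induction: in general each layer costs a \emph{pair} of rule applications rather than a single one (e.g.\ for $\Uc \seq \Vc, (\Gamma_1 \seq \Delta_1)$ you first apply $\drp_1$ to make the hole-containing node the entire succedent and then $\rp_2$ to unpack it, since $\rp_2$ only fires when that node is the whole side), and the polarity never ``toggles'' --- the hole's side relative to its immediate $\seq$ is invariant under every peel, which is precisely the simplification bought by overloading $\seq$ for both $\limp$ and $\excl$, so the final step is forced to be $\drp_1$ (positive) or $\rp_1$ (negative) with no flip bookkeeping required.
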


\begin{figure}[t]
{\footnotesize
$$
\begin{array}{ccc}
\mbox{$
\xymatrix@C=0pt@R=12pt{
 & {a, b \seq c,d} \ar[dl]_{-} \ar[d]^{-} \ar[dr]^{+}\\
{e,f \seq g} & {p \seq q,r} \ar[d]^{-} & {\cdot \seq s} \\
 & {u, v \seq x,y }
}
$}
&
\mbox{
$
\xymatrix@C=0pt@R=12pt{
 & {a \seq c} \ar[dl]_{-} \ar[d]^{-} \ar[dr]^{+}\\
{e \seq g} & {p \seq } \ar[d]^{-} & {\cdot \seq \cdot} \\
 & { u \seq x }
}
$}
&
\mbox{
$
\xymatrix@C=0pt@R=12pt{
 & {b \seq d} \ar[dl]_{-} \ar[d]^{-} \ar[dr]^{+}\\
{f \seq \cdot} & {\cdot \seq q,r} \ar[d]^{-} & {\cdot \seq s} \\
 & {v \seq y }
}
$} \\
(i) & (ii) & (iii)
\end{array}
$$
}
\caption{A tree representation of a nested sequent (i), and its partitions (ii and iii). }
\label{fig:partition}
\end{figure}

\begin{figure}[t]
{\small
Cut and identity: 
$
\qquad
\vcenter{
\infer[id]
{p \seq p}{}
}
\qquad
\vcenter{
\infer[cut]
{\Sc,\Tc \seq \Sc', \Tc'}
{\Sc \seq \Sc', A & A, \Tc \seq \Tc'}
}
$

Structural rules:
$$
\begin{array}{ccc}
\mbox{$
\infer[\drp_1]
{(\Sc \seq \Tc) \seq \Tc'}
{\Sc \seq \Tc, \Tc'}
$}
&
\mbox{$
\infer[\rp_1]
{\Sc \seq (\Tc \seq \Tc')}
{
\Sc, \Tc \seq \Tc'
}
$}
& 
\mbox{$\infer[gl]
{(\Sc, \Tc \seq \Sc') \seq \Tc'}
{(\Sc \seq \Sc'), \Tc \seq \Tc'}
$} \\ \\
\mbox{$
\infer[drp_2]
{\Sc \seq \Tc, \Tc'}
{(\Sc \seq \Tc) \seq \Tc'}
$}
& 
\mbox{$
\infer[rp_2]
{\Sc, \Tc \seq \Tc'}
{\Sc \seq (\Tc \seq \Tc')}
$}
& 
\mbox{$
\infer[gr]
{\Sc \seq (\Sc' \seq \Tc', \Tc)}
{
 \Sc \seq (\Sc' \seq \Tc'), \Tc
}
$}
\end{array}
$$


Logical rules:
$$
\infer[\punit_l]
{\punit \seq \sunit}
{}
\qquad
\infer[\punit_r]
{\Sc \seq \Tc, \punit}
{\Sc \seq \Tc}
\qquad
\infer[\tunit_l]
{\Sc, \tunit \seq \Tc}
{\Sc \seq \Tc}
\qquad
\infer[\tunit_r]
{\sunit \seq \tunit}{}
$$
$$
\infer[\ltens_l]
{\Sc, A \ltens B \seq \Tc}
{\Sc, A, B \seq \Tc}
\qquad 
\infer[\ltens_r]
{\Sc, \Sc' \seq A \ltens B, \Tc, \Tc'}
{\Sc \seq A, \Tc & \Sc' \seq B, \Tc'}
$$
$$
\infer[\parr_l]
{\Sc, \Sc', A \parr B \seq \Tc, \Tc'}
{\Sc, A \seq \Tc & \Sc', B \seq \Tc'}
\qquad
\infer[\parr_r]
{\Sc \seq A \parr B, \Tc}
{\Sc \seq A, B, \Tc}
$$
$$
\infer[\limp_l]
{\Sc, \Sc', A \limp B \seq \Tc, \Tc'}
{
\Sc \seq A, \Tc
&
\Sc', B \seq \Tc'
}
\qquad
\infer[\limp_r]
{\Sc \seq \Tc, A \limp B}
{\Sc \seq \Tc, (A \seq B)}
$$
$$
\infer[\excl_l]
{\Sc, A \excl B \seq \Tc}
{
 \Sc, (A \seq B) \seq \Tc
}
\qquad
\infer[\excl_r]
{\Sc, \Sc' \seq A \excl B, \Tc, \Tc'}
{
 \Sc \seq A, \Tc & 
 \Sc', B \seq \Tc'
}
$$
}
\caption{The shallow inference system $\biillsn$, where $gl$ and $gr$
  capture Grishin (b).}
\label{fig:fbills}
\end{figure}



\subsection{The Deep Inference Calculus}
\label{sec:deep}

\begin{figure}[t]
{\small


Propagation rules:

$$
\infer[pl_1]
{X[\Sc, A \seq (\Sc' \seq \Tc'), \Tc]}
{
 X[\Sc \seq (A, \Sc' \seq \Tc'), \Tc]
}
\qquad
\infer[pr_1]
{X[(\Sc \seq \Tc), \Sc' \seq A, \Tc']}
{X[(\Sc \seq \Tc, A), \Sc' \seq \Tc']}
$$
$$
\infer[pl_2]
{X[\Sc, (\Sc', A \seq \Tc') \seq \Tc]}
{X[\Sc, A, (\Sc' \seq \Tc') \seq \Tc]}
\qquad
\infer[pr_2]
{X[\Sc \seq \Tc, (\Sc' \seq \Tc', A)]}
{X[\Sc \seq \Tc, A, (\Sc' \seq \Tc')]}
$$

Identity and logical rules: In branching rules, $X[~] \in X_1[~] \merge X_2[~]$,
$\Sc \in \Sc_1 \merge \Sc_2$ and $\Tc \in \Tc_1 \merge \Tc_2$.
$$
\infer[id^d]
{X[\Uc, p \seq p, \Vc]}
{\mbox{$X[~]$, $\Uc$ and $\Vc$ are hollow.}}
\qquad
\infer[\punit_l^d]
{X[\punit, \Uc \seq \Vc]}
{\mbox{$X[~]$, $\Uc$ and $\Vc$ are hollow. }}
\qquad
\infer[\punit_r^d]
{X[\Sc \seq \Tc, \punit]}
{X[\Sc \seq \Tc]}
$$
$$
\infer[\tunit_l^d]
{X[\Sc, \tunit \seq \Tc]}
{X[\Sc \seq \Tc]}
\qquad
\infer[\tunit_r^d]
{X[\Uc \seq \tunit, \Vc]}
{\mbox{$X[~]$, $\Uc$ and $\Vc$ are hollow. }}
$$
$$
\infer[\ltens_l^d]
{X[\Sc, A \ltens B \seq \Tc]}
{X[\Sc, A, B \seq \Tc]}
\qquad
\infer[\ltens_r^d]
{X[\Sc \seq A \ltens B, \Tc]}
{X_1[\Sc_1 \seq A, \Tc_1] & X_2[\Sc_2 \seq B, \Tc_2]}
$$
$$
\infer[\limp_l^d]
{X[\Sc, A \limp B \seq \Tc]}
{
X_1[\Sc_1 \seq A, \Tc_1] & X_2[\Sc_2, B \seq \Tc_2]
}
\qquad
\infer[\limp_r^d]
{X[\Sc \seq \Tc, A \limp B]}
{X[\Sc \seq \Tc, (A \seq B)]}
$$
$$
\infer[\parr_l^d]
{X[\Sc, A \parr B \seq \Tc]}
{
 X_1[\Sc_1, A \seq \Tc_1]
 &
 X_2[\Sc_2, B \seq \Tc_2]
}
\qquad
\infer[\parr_r^d]
{X[\Sc \seq A \parr B, \Tc]}
{X[\Sc \seq A, B, \Tc]}
$$
$$
\infer[\excl_l^d]
{X[\Sc, A \excl B \seq \Tc]}
{X[\Sc, (A \seq B) \seq \Tc]}
\qquad
\infer[\excl_r^d]
{X[\Sc \seq A \excl B, \Tc]}
{X_1[\Sc_1 \seq A, \Tc_1] 
&
 X_2[\Sc_2, B \seq \Tc_2]
}
$$
}
\caption{The deep inference system $\biilldn$. }
\label{fig:fbilld}
\end{figure}

A deep inference rule can be applied to any sequent within a nested sequent. This poses a problem
in formalising context splitting rules, e.g., $\ltens$ on the right. To be sound, we need
to consider a context splitting that splits an entire tree of sequents, as formalised next. 



Given two sequents $X_1$ and $X_2$, their {\em merge set} $X_1
\merge X_2$ is defined inductively as:
\begin{center}
\begin{tabular}[c]{l@{\extracolsep{2cm}}cl}
 \multicolumn{3}{l}{
  $X_1 \merge X_2 = \{~ 
   (\Gamma_1, \Gamma_2, Y_1, \dots, Y_m \seq \Delta_1, \Delta_2, Z_1,
    \dots, Z_n) ~\mid$
 }
\\
  & & $X_1 = (\Gamma_1, P_1, \dots, P_m \seq \Delta_1, Q_1,\dots,Q_n)$
  \mbox{ and }
\\
  & & 
  $X_2 = (\Gamma_2, S_1, \dots, S_m \seq \Delta_2, T_1, \dots, T_n)$
 \mbox{ and }
\\
 & &
  $Y_i \in P_i \merge S_i$ for $1 \leq i \leq m$ 
  and $Z_j \in Q_j \merge T_j$ for $1 \leq j \leq n$
  $~\}$
\end{tabular}
\end{center}


Note that the merge set of two sequents may not always be defined
since mergeable sequents need to have the same structure.
Note also that, because there can be more than one way to enumerate
elements of a multiset in the left/right hand side of a sequent, 
the result of the merging of two nested sequents is a set, rather
than a single nested sequent. 
When
$X \in X_1 \merge X_2$, we say that $X_1$ and $X_2$ are a {\em
  partition} of $X.$ Fig.~\ref{fig:partition} (ii) and (iii) show a partitioning
of the nested sequent~\eqref{eq:1}
in the tree representation.
Note that the partitions (ii) and (iii) must have the same tree
structure as the original sequent (i).


Given two contexts
$X_1[~]$ and $X_2[~]$ 
their merge set $X_1[~] \merge X_2[~]$
is defined as follows:

  \begin{tabular}[c]{lcl}
  \multicolumn{3}{l}{  
     If $X_1[~] = [~]$ and $X_2[~] = [~]$ 
     then $X_1[~] \merge X_2[~] = \{ [~] \}$
   }
\\
    If $X_1[~]$ & $=$ &
          $(\Gamma_1, Y_1[~], P_1,\dots,P_m \seq \Delta_1, Q_1,\dots,Q_n)$
    and
\\
    \hphantom{If} $X_2[~]$ & $=$ &
          $(\Gamma_2, Y_2[~], S_1,\dots,S_m \seq \Delta_2, T_1,\dots,T_n)$
   then
\\
   $X_1[~] \merge X_2[~]$ & $=$ &
      $\{~ (\Gamma_1, \Gamma_2, Y[~], U_1,\dots,U_m \seq \Delta_1,\Delta_2,
          V_1,\dots,V_n)~~ \mid$
\\
   & &
        $
        \begin{array}{l} 
         Y[~] \in Y_1[~]\merge Y_2[~] 
        \mbox{ and }
        U_i \in P_i \merge S_i 
        \mbox{ for } 
        1 \leq i \leq m
        \mbox{ and } \\
        V_j \in Q_j \merge T_j
        \mbox{ for } 
        1 \leq j \leq n 
       ~\}
        \end{array}$
\\
    If $X_1[~]$ & $=$ & $(\Gamma_1, P_1,\dots,P_m \seq \Delta_1, Y_1[~], Q_1,\dots,Q_n)$
    and
    \\
    \hphantom{If} $X_2[~]$ & $=$ & $(\Gamma_2, S_1,\dots,S_m \seq \Delta_2, Y_2[~], T_1,\dots,T_n)$
    then
    \\
    $X_1[~] \merge X_2[~]$ & $=$ & 
    $\{~
      (\Gamma_1, \Gamma_2, U_1,\dots,U_m \seq \Delta_1,\Delta_2, Y[~],
       V_1,\dots,V_n)$ $\mid$
    \\
     & &
    $\begin{array}{l}
     Y[~] \in Y_1[~]\merge Y_2[~] \mbox{ and }
     U_i \in P_i \merge S_i \mbox{ for } 1 \leq i \leq m \mbox{ and }\\
     V_j \in Q_j \merge T_j$ for $1 \leq j \leq n ~\}
     \end{array}
    $
  \end{tabular}

If $X[~] = X_1[~] \merge X_2[~]$ we say $X_1[~]$ and $X_2[~]$
are a {\em partition} of $X[~]$.



We extend the notion of a merge set between multisets of formulae and sequents
as follows. Given 
$
\Xc = \Gamma \cup \{X_1,\dots,X_n\}
$
and
$
\Yc = \Delta \cup \{Y_1,\dots,Y_n\} 
$
their merge set contains all multisets of the form:
$
\Gamma \cup \Delta \cup \{Z_1,\dots,Z_n\}
$
where $Z_i \in X_i \merge Y_i.$

A nested sequent $X$ (resp. a context $X[~]$) 
is said to be {\em hollow} iff it contains no occurrences of formulae.
For example, 
$(\sunit \seq \sunit) \seq (\sunit \seq [~]), (\sunit \seq \sunit)$ is a hollow context.

The deep inference system for $\biill$, called $\biilldn$, 
is given in Fig.~\ref{fig:fbilld}. 
Fig.~\ref{fig:bierman-deep} shows a cut-free derivation of Bierman's example in $\biilldn.$

\begin{figure}
{\footnotesize
$$
\infer[\limp_r^d]
{(a \parr b) \parr c \seq a, \mathbf{(b \parr c \limp d) \parr e \limp d \parr e} }
{
 \infer[\parr_r^d]
 { (a \parr b) \parr c \seq a, ((b \parr c \limp d) \parr e \seq \mathbf{d \parr e})}
 {
  \infer[\parr_l^d]
  { (a \parr b) \parr c \seq a, (\mathbf{(b \parr c \limp d) \parr e} \seq d, e)}
  {
   \infer[\limp_l^d]
   {(a \parr b) \parr c \seq a, (\mathbf{b \parr c \limp d} \seq d)} 
   {
    \infer[\parr_r^d]
    {(a \parr b) \parr c \seq a, (\sunit \seq \mathbf{b \parr c})} 
    {
     \infer[\parr_l^d]
     {\mathbf{(a \parr b) \parr c} \seq a, (\sunit \seq b, c)} 
     {
      \infer[\parr_l^d]
      {\mathbf{a \parr b} \seq a, (\sunit \seq b)}
      {
        \infer[id^d]
        {a \seq a, (\sunit \seq \sunit)}
        { }
        &
        \infer[pl_1]
        {b \seq (\sunit \seq b)}
        {
           \infer[id^d]
           {\sunit \seq (b \seq b)}{}
        }
      }
      &
      \infer[pl_1]
      {c \seq (\sunit \seq c)}
      {
       \infer[id^d]
       {\sunit \seq (c \seq c)}
       {}
      }
     }
    }
    &
    \!\!
    \infer[id^d]
    {\sunit \seq (d \seq d)}
    {}
   }
   &
   \infer[id^d]
   { \sunit \seq (e \seq e)}
   {}
  }
 } 
}
$$
}
\caption{A cut-free derivation of Bierman's example in $\biilldn$.}
\label{fig:bierman-deep}
\end{figure}



\subsection{The Equivalence of the Deep and Shallow Nested Sequent Calculi}
\label{sec:equiv}



From $\biilldn$ to $\biillsn$, it is enough to show that
every deep inference rule is {\em cut-free derivable} in $\biillsn$.
For the identity and the constant rules, this follows from the fact that hollow structures
can be weakened away, as they add nothing to provability (see App. E).
For the other logical rules, a key idea to their soundness 
is that the context splitting operation is derivable in $\biillsn.$ 
This is a consequence of the following lemma (see App. E.1).

\begin{lemma}
\label{lm:dist}
The following rules are derivable in $\biillsn$ without cut:
$$
\infer[dist_l]
{(\Xc_1, \Xc_2  \seq \Yc_1,\Yc_2), \Uc \seq \Vc}
{
(\Xc_1 \seq \Yc_1), (\Xc_2 \seq \Yc_2), \Uc \seq \Vc
}
\qquad
\infer[dist_r]
{\Uc \seq \Vc, (\Xc_1, \Xc_2  \seq \Yc_1,\Yc_2)}
{
 \Uc \seq \Vc, (\Xc_1 \seq \Yc_1), (\Xc_2 \seq \Yc_2)
}
$$
\end{lemma}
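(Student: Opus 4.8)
The rules $dist_l$ and $dist_r$ rewrite only the structural skeleton of a sequent---no formula is ever decomposed---so a derivation may use nothing but the structural rules $\{\drp_1,\drp_2,\rp_1,\rp_2,gl,gr\}$, and in particular the Grishin rules must do the real work of fusing two sibling nested sequents into one. The plan is to give a fixed, uniform derivation of $dist_r$ (no induction is needed) and to obtain $dist_l$ as its mirror image under the antecedent/succedent duality of $\biillsn$, which swaps $\rp_i\leftrightarrow\drp_i$ and $gr\leftrightarrow gl$, exchanges comma-on-the-left with comma-on-the-right, and sends the premise/conclusion of $dist_r$ exactly to those of $dist_l$.

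For $dist_r$ I would proceed in three phases. \emph{Park}: apply $\drp_1$ to the premise $\Uc\seq\Vc,(\Xc_1\seq\Yc_1),(\Xc_2\seq\Yc_2)$ with $\Tc'=\Vc$, obtaining $(\Uc\seq\Vc)\seq(\Xc_1\seq\Yc_1),(\Xc_2\seq\Yc_2)$; writing $\Wc:=(\Uc\seq\Vc)$, the context $\Vc$ is now locked inside a single antecedent structure while the two target sequents stand alone in the succedent. \emph{Merge}: transform $\Wc\seq(\Xc_1\seq\Yc_1),(\Xc_2\seq\Yc_2)$ by the chain $gr$ (pushing $(\Xc_2\seq\Yc_2)$ into the succedent of $(\Xc_1\seq\Yc_1)$), then $\rp_2$, then $gr$ (pushing $\Yc_1$ into the succedent of $(\Xc_2\seq\Yc_2)$), then $\rp_2$, then $\rp_1$, ending at $\Wc\seq(\Xc_1,\Xc_2\seq\Yc_1,\Yc_2)$; this chain is precisely the rule in the special case of empty succedent context, applied with the arbitrary antecedent $\Wc$. \emph{Unpark}: a single $\drp_2$ (with $\Sc=\Uc$, $\Tc=\Vc$) restores $\Uc\seq\Vc,(\Xc_1,\Xc_2\seq\Yc_1,\Yc_2)$, the conclusion. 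Checking each step is routine: treat comma as multiset union and read off $\Sc,\Tc,\Tc'$ (and the extra $\Sc',\Tc'$ for the Grishin steps) from the displayed positions.

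The one genuine obstacle, and the reason for the parking manoeuvre, is the interaction of $gr$ with the context $\Vc$. Since $gr$ absorbs the \emph{entire} remaining succedent into the nested sequent, it cannot single out $(\Xc_2\seq\Yc_2)$ while leaving $\Vc$ in place; and because $gl,gr$---unlike the display postulates---are one-directional, anything that $gr$ sweeps into a nested sequent can never be pulled back out. Parking $\Vc$ with the \emph{reversible} postulate $\drp_1$ side-steps both problems at once: $\Vc$ is stored on the far side of the turnstile, where no Grishin step touches it, the merge runs cleanly on the two isolated sequents, and the inverse postulate $\drp_2$ returns $\Vc$ to its place. The degenerate cases (any of $\Uc,\Vc,\Xc_i,\Yc_i$ equal to $\sunit$) need no separate treatment, since every step above remains a valid rule instance when the relevant multiset is empty. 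The corresponding mirror derivation for $dist_l$---park $\Uc$ with $\rp_1$, merge with $gl,\drp_2,gl,\drp_2,\drp_1$, unpark with $\rp_2$---then follows by duality, and the only remaining thing to confirm is that this symmetry transports the derivation across faithfully.
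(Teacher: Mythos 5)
Your proposal is correct and takes essentially the same approach as the paper: the paper's own derivation of $dist_l$ uses exactly your park--merge--unpark pattern (an $rp_1$ to store $(\Uc \seq \Vc)$ as a nested sequent in the succedent, a chain of Grishin rules and display postulates to fuse $(\Xc_1 \seq \Yc_1)$ and $(\Xc_2 \seq \Yc_2)$, then $drp_1$ and $rp_2$ to restore the context), and it dispatches $dist_r$ with a ``derived similarly'' remark that is precisely your mirror-symmetry argument run in the other direction. The only nit is a schematic-variable slip in your Park step: in that instance of $drp_1$ one has $\Tc = \Vc$ and $\Tc' = (\Xc_1 \seq \Yc_1), (\Xc_2 \seq \Yc_2)$ rather than $\Tc' = \Vc$, though the resulting sequent you state is the correct one and every subsequent step checks out.
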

Intuitively, these rules embody
the weak distributivity formalised by the Grishin (b) rule. 

\begin{lemma}
\label{lm:merge}
If $\Xc \in \Xc_1 \merge \Xc_2$ then the rules below are cut-free derivable in $\biillsn$:
$$
\infer[m_l]
{\Xc, \Uc \seq \Vc}
{
\Xc_1, \Xc_2, \Uc \seq \Vc
}
\qquad
\infer[m_r]
{\Uc \seq \Vc, \Xc}
{
\Uc \seq \Vc, \Xc_1, \Xc_2
}
$$
\end{lemma}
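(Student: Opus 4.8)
The plan is to prove both rules by a single induction, after first reducing the multiset statement to the merging of one pair of sequents lying side by side at the top level. Writing $\Xc_1 = \Gamma \cup \{P_1,\dots,P_n\}$ and $\Xc_2 = \Delta \cup \{S_1,\dots,S_n\}$, the merged multiset $\Xc = \Gamma\cup\Delta\cup\{Z_1,\dots,Z_n\}$ with $Z_i \in P_i\merge S_i$ differs from $\Xc_1,\Xc_2$ only in that each pair of siblings $P_i,S_i$ has been fused into $Z_i$; the formula components $\Gamma,\Delta$ are carried along unchanged. Hence $m_l$ is obtained by applying, for $i=1,\dots,n$ in turn, the single-pair rule that derives $Z_i,\Uc\seq\Vc$ from $P_i,S_i,\Uc\seq\Vc$, and $m_r$ is obtained dually. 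It thus suffices to prove, for sequents $P,S$ and any $Z\in P\merge S$, that $Z,\Uc\seq\Vc$ is cut-free derivable from $P,S,\Uc\seq\Vc$, together with the symmetric right-hand version.

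I would prove this single-pair statement by a simultaneous induction on the total number of nodes of $P$ and $S$, establishing the left and right versions together since, as will be seen, each calls the other on strictly smaller inputs. Write $P=(\Sc_1\seq\Tc_1)$ and $S=(\Sc_2\seq\Tc_2)$. The first step is to apply $dist_l$ of Lemma~\ref{lm:dist} at the top level, deriving the single nested sequent $(\Sc_1,\Sc_2\seq\Tc_1,\Tc_2),\Uc\seq\Vc$ from $P,S,\Uc\seq\Vc$. This performs the outermost amalgamation demanded by $\merge$ and leaves every pair of corresponding sub-sequents --- the antecedent pairs inside $\Sc_1,\Sc_2$ and the succedent pairs inside $\Tc_1,\Tc_2$ --- sitting as siblings one level deep inside this nested sequent. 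If $P$ and $S$ are leaves (no nested sub-sequents), this single $dist_l$ step already yields $Z$, which is the base case.

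The inductive step must now merge each deep sibling pair, and since $\biillsn$ has only shallow rules this is where the \emph{display property} of Prop.~\ref{prop:display} does the essential work; it is the main obstacle. To merge an antecedent pair, I would use Prop.~\ref{prop:display} to display the antecedent multiset of the nested sequent as the whole of the left-hand side, using only the reversible postulates $\{\drp_1, drp_2, \rp_1, rp_2\}$; the two sub-sequents then become genuine top-level antecedent siblings, and because they are proper sub-structures of $P$ and $S$ the pair has strictly fewer nodes, so the induction hypothesis for the left rule fuses them. Reversing the same postulates reinstates the surrounding context with the merged sub-sequent in place of the pair. Doing this for every antecedent pair, and dually --- displaying the succedent multiset as the whole of the right-hand side and invoking the induction hypothesis for the right rule --- for every succedent pair, turns the nested sequent into $Z$ and so yields $Z,\Uc\seq\Vc$. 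The right-hand rule $m_r$ is proved symmetrically, using $dist_r$ and the positive-context case of Prop.~\ref{prop:display}. As $dist_l$, $dist_r$, the display postulates and their reverses are all cut-free, the whole derivation is cut-free.

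The points I would be most careful about are that the display postulates move the displayed antecedent (resp.\ succedent) multiset as a single unit, so that the sub-sequents inside it are left untouched and the induction hypothesis applies to exactly the intended siblings, and that reversing these postulates faithfully restores the ambient context with only the one merged sub-sequent changed. Well-foundedness of the induction is guaranteed because each recursive merge acts on sub-sequents of strictly smaller total size, these being proper parts of $P$ and $S$.
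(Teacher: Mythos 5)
Your proposal is correct and takes essentially the same route as the paper: the paper's entire proof of Lemma~\ref{lm:merge} is the remark that it ``follows straightforwardly from Lem.~\ref{lm:dist}'', and your argument --- reducing the multiset statement to a single mergeable pair, fusing the pair's top level with $dist_l$ (resp.\ $dist_r$), and then recursing on the deeper sibling pairs by bringing them to the top with the display postulates of Prop.~\ref{prop:display} and reversing those steps afterwards --- is exactly the routine induction that remark leaves implicit. You merely spell out the bookkeeping the paper compresses (the simultaneous induction on total size for the left and right rules, and the reversibility/parametricity of the display steps), and that bookkeeping is sound.
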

\begin{proof}
This follows straightforwardly from Lem.~\ref{lm:dist}.
\end{proof}

\begin{lemma}
\label{lm:context-merge}
Suppose $X[~] \in X_1[~] \merge X_2[~]$ and suppose there exists $Y[~]$
such that for any $\Uc$ 
and any $\rho \in \{drp_1, drp_2, rp_1, rp_2\}$,
the figure below left is a valid inference
rule in $\biillsn$:
\[
\qquad
\qquad\qquad
\infer[\rho]
{X[\Uc]}
{Y[\Uc]}
\qquad\qquad
\infer[\rho]
{X_1[\Uc]}
{Y_1[\Uc]}
\qquad
\qquad
\infer[\rho]
{X_2[\Uc]}
{Y_2[\Uc]}
\]
Then there exists $Y_1[~]$ and $Y_2[~]$ such that
$Y[~] \in Y_1[~] \merge Y_2[~]$ and the second and the third figures above 
are also valid instances of $\rho$ in $\biillsn$.
\end{lemma}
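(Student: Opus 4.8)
The plan is to argue by case analysis on the display postulate $\rho \in \{drp_1, drp_2, rp_1, rp_2\}$, exploiting two facts. First, each such postulate acts only on the topmost sequent, relocating an \emph{entire} sub-multiset between a top-level antecedent/succedent slot and an adjacent nested sequent, without ever splitting or touching an individual formula. Second, each postulate is reversible, its inverse again lying in $\{drp_1, drp_2, rp_1, rp_2\}$. The idea is therefore to read off from $X[~] \in X_1[~] \merge X_2[~]$ the decomposition induced at the topmost sequent of $X[~]$, and to define $Y_1[~]$ and $Y_2[~]$ by applying $\rho^{-1}$ to $X_1[~]$ and $X_2[~]$ — that is, by performing on each partition the very structural move that carries $Y[~]$ to $X[~]$. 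Since $Y[~]$ is forced to be $\rho^{-1}(X[~])$, and partitioning preserves the tree structure while merely distributing formulae (recursively merging sub-sequents), this move commutes with partitioning.

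Concretely, take the representative case $\rho = rp_1$, so that $X[\Uc] = \Sc \seq (\Tc \seq \Tc')$ arises from $Y[\Uc] = \Sc, \Tc \seq \Tc'$. The definition of the context-merge, applied at the topmost sequent of $X[~]$, yields $X_i[~] = \Sc^i \seq (\Tc^i \seq {\Tc'}^i)$ for $i \in \{1,2\}$, with $\Sc \in \Sc^1 \merge \Sc^2$, $\Tc \in \Tc^1 \merge \Tc^2$, $\Tc' \in {\Tc'}^1 \merge {\Tc'}^2$, and with the hole occupying the same branch in all three contexts. I then define $Y_i[~] := \Sc^i, \Tc^i \seq {\Tc'}^i$, the result of applying $rp_2$ to $X_i[~]$. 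The two required validity conditions, namely that $X_i[\Uc]$ is obtained from $Y_i[\Uc]$ by $rp_1$, hold by construction. For the merge condition, at the top level the antecedent $\Sc, \Tc$ of $Y[~]$ lies in $(\Sc^1, \Tc^1) \merge (\Sc^2, \Tc^2)$ since merge commutes with the multiset union $\Sc, \Tc$, the succedent $\Tc'$ lies in ${\Tc'}^1 \merge {\Tc'}^2$, and the hole sits in the matching branch; hence $Y[~] \in Y_1[~] \merge Y_2[~]$. The cases $rp_2, drp_1, drp_2$ are symmetric, interchanging the roles of antecedent and succedent and of the two nesting directions.

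I expect the main obstacle to be the bookkeeping of the hole's position across the change of nesting level performed by $\rho$. When the hole lies inside the sub-multiset that $\rho$ relocates (for $rp_1$, inside $\Tc$ or $\Tc'$), it sits one nesting level deeper in $X[~]$ than in $Y[~]$, so the recursive clause of the context-merge definition triggered when forming $Y_1[~] \merge Y_2[~]$ is not literally the one used for $X_1[~] \merge X_2[~]$. The verification then requires checking, in each configuration of the hole, that it still occupies corresponding positions in $Y_1[~]$ and $Y_2[~]$ and that the recursion bottoms out identically, so that $Y[~]$ really does appear in $Y_1[~] \merge Y_2[~]$. Once the correspondence between the antecedent-clause and succedent-clause of the merge definition before and after applying $\rho^{-1}$ is set up, this is routine, precisely because $\rho$ alters only the structural address of formulae and never their occurrences.
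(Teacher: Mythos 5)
Your proposal is correct and takes essentially the same route as the paper, whose entire proof is the one-line observation that $X[~]$, $X_1[~]$ and $X_2[~]$ have exactly the same nested tree structure, so whatever display rule applies to one applies to the others. Your explicit case analysis (defining $Y_i[~]$ by applying $\rho^{-1}$ to $X_i[~]$, checking that multiset merge commutes with the union $\Sc,\Tc$, and tracking the hole across the change of nesting level) is just a careful elaboration of that observation.
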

\begin{proof}
This follows from the fact that $X[~]$, $X_1[~]$ and $X_2[~]$ have exactly the same
nested structure, so whatever display rule applies to one also applies to the others. 
\end{proof}

\begin{theorem}
\label{thm:soundness}
If a sequent $X$ is provable in $\biilldn$ then it is cut-free provable in $\biillsn.$
\end{theorem}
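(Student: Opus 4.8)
The plan is to argue by induction on the height of the given $\biilldn$-derivation of $X$ and, as already noted just before the statement, it suffices to show that each deep inference rule of Fig.~\ref{fig:fbilld} is cut-free derivable in $\biillsn$: replacing the topmost deep rule by its $\biillsn$-simulation and invoking the induction hypothesis on the premises then yields a cut-free $\biillsn$-proof of $X$. The engine that converts a \emph{deep} inference into a \emph{shallow} one is the display property (Prop.~\ref{prop:display}): since every deep rule acts on a single node of the nested sequent, I first use the display postulates $\{drp_1, drp_2, rp_1, rp_2\}$ to bring that node to the topmost sequent, where the corresponding shallow rule of Fig.~\ref{fig:fbills} applies, and then run the (reversible) display postulates backwards to restore the context.

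For the unary logical rules ($\ltens_l^d, \parr_r^d, \limp_r^d, \excl_l^d$) and the unit rules $\punit_r^d, \tunit_l^d$, this recipe is essentially immediate: after displaying the active node, the conclusion and premise of the deep rule become exactly the conclusion and premise of the matching shallow rule, modulo the surrounding display steps. The identity and zero-premise constant rules ($id^d, \punit_l^d, \tunit_r^d$) require the additional observation that hollow structures carry no logical content and can be weakened away; after displaying, the shallow axioms $id$, $\punit_l$, $\tunit_r$ suffice once the surrounding hollow material is discharged. The propagation rules $pl_1, pr_1, pl_2, pr_2$ are themselves deep versions of display manipulations, so after displaying the relevant node they reduce to the shallow display postulates together with the Grishin rules $gl, gr$ witnessing weak distributivity.

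The genuinely delicate cases are the branching rules $\ltens_r^d, \limp_l^d, \parr_l^d, \excl_r^d$, where the conclusion context $X[~]$ is split as $X[~] \in X_1[~] \merge X_2[~]$ and the active formula-multisets are likewise partitioned. Here the induction hypothesis gives cut-free $\biillsn$-proofs of both premises, but the shallow branching rules of Fig.~\ref{fig:fbills} merge contexts only \emph{at the top level}, whereas the deep rule merges two entire trees. The plan is to display the active node in each premise, using Lemma~\ref{lm:context-merge} to choose the two displayings \emph{compatibly}: because $X_1[~]$ and $X_2[~]$ share the nested shape of $X[~]$, the same sequence of display postulates brings the active node to the top in both partitions, and the displayed outer structures remain a partition of the displayed outer structure of the conclusion. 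With the active formulae exposed at the top, one application of the corresponding shallow rule combines the two premises, after which the split outer trees are reassembled into $X[~]$ by repeated use of the merge rules $m_l, m_r$ of Lemma~\ref{lm:merge}, which in turn rest on the weak-distributivity rules $dist_l, dist_r$ of Lemma~\ref{lm:dist}.

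The main obstacle is precisely this reconciliation of deep tree-splitting with shallow top-level merging in the branching cases; the trio of Lemmas~\ref{lm:dist}, \ref{lm:merge} and \ref{lm:context-merge} is designed exactly to discharge it, so once they are in hand the induction goes through routinely.
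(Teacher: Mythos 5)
Your proposal is correct and follows essentially the same route as the paper: reduce the theorem to cut-free derivability of each $\biilldn$-rule in $\biillsn$, handle the zero-premise rules by weakening hollow structures, and resolve the branching rules by displaying the active node compatibly in both premises via Lem.~\ref{lm:context-merge}, applying the shallow rule, and reassembling the split context with the merge rules of Lem.~\ref{lm:merge} built on Lem.~\ref{lm:dist}. This matches the paper's proof (which works out the $\limp_l^d$ case explicitly with exactly this display--merge--redisplay derivation), and your observation that the propagation rules reduce to the display postulates plus $gl,gr$ is likewise consistent with the appendix.
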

\begin{proof}
We show that every rule of $\biilldn$ is cut-free derivable in $\biillsn.$
We show here a derivation of the rule $\limp^d_l$; the rest can be proved similarly.
So suppose the conclusion of the rule is $X[\Sc, A \limp B \seq \Tc]$, and the
premises are $X_1[\Sc_1 \seq A, \Tc_1]$ and $X_2[\Sc_2, B \seq \Tc_2]$, where
$X[~] \in X_1[~]\merge X_2[~]$, $\Sc \in \Sc_1 \merge \Sc_2$ and
$\Tc \in \Tc_1 \merge \Tc_2.$
There are two cases to consider, depending on whether $X[~]$ is positive or negative.
We show here the former case, as the latter case is similar. 
Prop.~\ref{prop:display} entails that $X[\Sc, A \limp B \seq \Tc]$ is display equivalent to
$\Uc \seq (\Sc, A \limp B \seq \Tc)$ for some $\Uc$. By Lem.~\ref{lm:context-merge}, we have
$\Uc_1$ and $\Uc_2$ such that
$\Uc \in \Uc_1 \merge \Uc_2$, and $(\Uc_1 \seq \Vc)$ and $(\Uc_2 \seq \Vc)$ 
are display equivalent to, respectively, $X_1[\Vc]$ and $X_2[\Vc]$, for any $\Vc.$
The derivation of $\limp^d_l$ in $\biillsn$ is thus constructed as follows:
$$
\infer[\mbox{Prop. \ref{prop:display}}]
{X[\Sc, A \limp B \seq \Tc]}
{
 \infer[rp_1]
 {\Uc \seq (\Sc, A \limp B \seq \Tc)}
 {
  \infer[ml;ml;mr]
  {\Uc, \Sc, A \limp B \seq \Tc}
  {
   \infer[\limp_l]
   {\Uc_1, \Uc_2, \Sc_1, \Sc_2, A \limp B \seq \Tc_1, \Tc_2}
   {
    \infer[rp_2]
    {\Uc_1, \Sc_1 \seq A, \Tc_1}
    {
     \infer[\mbox{Lem. \ref{lm:context-merge}}]
     {\Uc_1 \seq (\Sc_1 \seq A, \Tc_1)}
     {X_1[\Sc_1 \seq A, \Tc_1]}
    }
    & 
    \infer[rp_2]
    {\Uc_2, \Sc_2, B \seq \Tc_2}
    {
     \infer[\mbox{Lem. \ref{lm:context-merge}}]
     {\Uc_2 \seq (\Sc_2, B \seq \Tc_2)}
     {
      X_2[\Sc_2, B \seq \Tc_2]
     }
    }
   }
  }
 }
}
$$
\end{proof}



The other direction of the equivalence is proved by a permutation argument: we first add the structural rules to 
$\biilldn$, then we show that these structural rules permute up over all (non-constant) 
logical rules of $\biilldn.$ Then when the structural rules appear just below the $id^d$ or
the constant rules, they become redundant. There are quite a number of cases to consider, but
they are not difficult once one observes the following property of $\biilldn$: 
in every rule, every context in the premise(s) has the same tree structure 
as the context in the conclusion of the rule. This observation takes care of permuting
up structural rules that affect only the context. The non-trivial cases are those where
the application of the structural rules changes the sequent where the logical rule is applied. 
We illustrate a case in the following lemma. The detailed proof 
can be found in App. E.2.


\begin{lemma}
\label{lm:permute}
The rules $drp_1$, $rp_1$, $drp_2$, $rp_2$, $gl$, and $gr$ permute up over all logical rules of $\biilldn.$
\end{lemma}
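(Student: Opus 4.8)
The plan is to prove Lemma~\ref{lm:permute} by a systematic case analysis over each pairing of a structural rule $\rho \in \{drp_1, rp_1, drp_2, rp_2, gl, gr\}$ sitting immediately below a logical rule of $\biilldn$, and to show in each case that we can swap the order so that the logical rule is applied first and $\rho$ (or a finite collection of instances of structural rules) is applied afterwards. The payoff, as the surrounding text indicates, is that once a structural rule has been permuted all the way up to sit just above an axiom ($id^d$) or a constant rule, it can be discarded; so permutability is the engine that makes the structural rules admissible in $\biilldn$.

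The key organising observation, which I would state up front and lean on heavily, is the one flagged just before the lemma: in every deep inference rule of $\biilldn$, each context appearing in a premise has exactly the same underlying tree structure as the context in the conclusion. This immediately handles the ``easy'' bulk of the cases: whenever the structural rule $\rho$ acts on a part of the nested sequent that is disjoint from the node where the logical rule $r$ introduces or decomposes a formula, the two rules act on structurally independent locations, and because the tree shape is preserved by $r$, the very same instance of $\rho$ is applicable after $r$. So I would first dispose of all such ``non-interfering'' cases by this generic argument, noting that it also covers the subtlety that $\rho$ might need to be replicated across the two premises of a branching logical rule (such as $\ltens_r^d$, $\limp_l^d$, $\parr_l^d$, $\excl_r^d$), since both partitions $X_1[~]$ and $X_2[~]$ share the tree shape of $X[~]$ by the definition of the merge set; here Lemma~\ref{lm:context-merge} is exactly the tool that guarantees $\rho$ lifts to each partition.

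The real work, and the main obstacle, lies in the ``interfering'' cases where $\rho$ changes the very node at which the logical rule is applied --- for instance when a display postulate moves a formula or a whole subsequent across the $\seq$ that the logical rule is about to act on, or when $gl/gr$ reshapes the nesting around that node. These are the cases the authors single out as ``non-trivial''. For each such pair I would exhibit the explicit local rewriting: apply the logical rule to the reconfigured node first, then recover the conclusion by applying $\rho$ together with whatever additional structural steps (themselves among $\{drp_1, rp_1, drp_2, rp_2, gl, gr\}$, or the admissible commutativity/associativity of comma built into the multiset notation) are needed to restore the displaced material. I expect the thorniest subcases to be those involving $gl$ and $gr$ interacting with the two-premise multiplicative rules, since there the Grishin reshaping must be propagated coherently through a context split; this is precisely where I would invoke Lemma~\ref{lm:context-merge} again to argue that the split and the structural move commute.

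Because each individual permutation step is a finite local diagram-chase and the structural rules strictly decrease a suitable measure (the depth at which the topmost structural rule sits, or the number of logical rules above it) as they move upward, I would close the argument by a straightforward induction on the height of the derivation: repeatedly permute the lowest-sitting structural rule upward until it reaches an axiom or constant rule and vanishes. The heart of the proof is therefore the case table for interfering permutations, and since it is long but mechanical, I would present one representative interfering case in full in the body and relegate the exhaustive enumeration to the appendix, exactly as the excerpt signals with its reference to App.~E.2.
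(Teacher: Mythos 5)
Your generic argument for the non-interfering cases (tree-structure preservation by every $\biilldn$ rule, partitions sharing the shape of $X[~]$, Lemma~\ref{lm:context-merge} to lift a structural move to both premises of a branching rule) matches the paper's. The genuine gap is in your treatment of the interfering cases: you propose to repair the displaced principal formula \emph{below} the logical rule using further structural steps drawn from $\{drp_1, rp_1, drp_2, rp_2, gl, gr\}$. That cannot work, and the paper's proof hinges on precisely this point: the only residual inferences left below the logical rule are the \emph{propagation rules} $pl_1$, $pl_2$, $pr_1$, $pr_2$, which are native rules of $\biilldn$. Concretely, take $rp_1$ below $\limp_l$: applying $rp_1$ to each premise gives $\Sc_1 \seq (\Tc_1 \seq C, \Uc_1)$ and $\Sc_2 \seq (\Tc_2, B \seq \Uc_2)$, and applying $\limp_l$ deeply then yields $\Sc \seq (C \limp B, \Tc \seq \Uc)$, with the principal formula trapped inside the nested sequent. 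You could reach the desired conclusion $\Sc, C \limp B \seq (\Tc \seq \Uc)$ with structural rules ($rp_2$ followed by $rp_1$), but then a structural rule again sits below a logical rule with the same number of logical rules above it: your termination measure does not decrease, and iterating the permutation loops forever. The paper instead closes this case with a single $pl_1$ step below $\limp_l$, and says explicitly that all non-trivial cases follow this pattern, i.e.\ propagation rules move the principal formula to the required node. The same device is needed at the top of the derivation: after permuting, say, $drp_1$ into an $id^d$-redex $\Sc, A \seq A, \Uc, \Vc$, the result $(\Sc, A \seq \Uc) \seq A, \Vc$ is \emph{not} an instance of $id^d$ (the two occurrences of $A$ lie in different nodes), and only $pl_2$ recovers derivability (Lemma~\ref{lm:idd-permute}); so your claim that structural rules simply vanish once they reach an axiom or constant rule also silently depends on propagation.

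Two further differences, one cosmetic and one not. The paper first replaces $gl$, $gr$, $drp_2$, $rp_2$ by interderivable generalised rules $eg$ and $ig$, shrinking the case table to $\{drp_1, rp_1, eg, ig\}$; omitting this is harmless bookkeeping. But the paper also proves permutation of the structural rules over the propagation rules themselves (the delicate cases being those where the structural move enables, absorbs, or splits a propagation step, e.g.\ $pl_1$ over $drp_1$ becoming $pl_2$ then $pl_1$). Your sketch quantifies only over logical rules, yet your concluding induction pushes a structural rule past everything above it, so these cases are needed and missing.
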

\begin{proof}
({\em Outline}) 
We illustrate here 
a non-trivial interaction between a structural rule
and $\limp_l$, where the conclusion sequent of $\limp_l$ is changed by 
that structural rule. The other non-trivial cases follow the same pattern, i.e., propagation
rules are used to move the principal formula to the required structural context. 

{\footnotesize
$$
\infer[rp_1]
{\Sc, C \limp B \seq (\Tc \seq \Uc)}
{
\infer[\limp_l]
{\Sc, C \limp B, \Tc \seq \Uc}
{
 {\Sc_1, \Tc_1 \seq C, \Uc_1}
&
 {\Sc_2, \Tc_2 , B \seq  \Uc_2}
}
}
\quad
\leadsto
\quad
\infer[pl_1]
{\Sc, C \limp B \seq (\Tc \seq \Uc)}
{
 \infer[\limp_l]
 {\Sc \seq (C \limp B, \Tc \seq \Uc)}
 {
  \infer[rp_1] 
  {\Sc_1 \seq (\Tc_1 \seq C, \Uc_1)}
  {\Sc_1, \Tc_1 \seq C, \Uc_1}
  &
  \infer[rp_1]
  {\Sc_2 \seq (\Tc_2, B \seq \Uc_2)}
  {\Sc_2, \Tc_2, B \seq \Uc_2}
 }
}
$$
}
\end{proof}

\begin{theorem}
If a sequent $X$ is cut-free $\biillsn$-derivable then it is also $\biilldn$-derivable.
\end{theorem}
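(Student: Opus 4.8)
The plan is to carry out the permutation argument foreshadowed after Thm.~\ref{thm:soundness}. First I recast the given cut-free $\biillsn$-derivation of $X$ as a derivation in the auxiliary calculus $\biilldn^{+}$ obtained from $\biilldn$ by adjoining the six structural rules $drp_1, rp_1, drp_2, rp_2, gl, gr$, now read as rules acting on the topmost sequent of a nested sequent. Each $\biillsn$ structural rule is one of these; each unary $\biillsn$ logical rule ($\parr_r,\limp_r,\excl_l,\ltens_l,\punit_r,\tunit_l$) is literally the empty-context instance of its $\biilldn$ counterpart; each nullary axiom transfers as the empty-context, empty-hollow instance of the matching $\biilldn$ axiom; and each binary $\biillsn$ rule ($\ltens_r,\parr_l,\limp_l,\excl_r$) is derivable in $\biilldn^{+}$, the adjoined structural rules providing the display and weak-distributivity reasoning needed to turn a shallow multiset split into the context merge demanded by the corresponding deep rule. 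Hence $X$ is $\biilldn^{+}$-derivable, and the theorem reduces to showing the six structural rules admissible in $\biilldn$.

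I would remove them by an outer induction on the height of the $\biilldn^{+}$-derivation, taking cases on its last rule. If that rule is a genuine $\biilldn$ rule, I apply the induction hypothesis to its premises and reapply it. If it is a (unary) structural rule $\rho$, the induction hypothesis first turns its premise into a structural-rule-free $\biilldn$-derivation $D$, so it remains to absorb a single $\rho$ sitting on top of a pure $\biilldn$-derivation $D$. This I do by an inner induction on the size of $D$, using Lem.~\ref{lm:permute} (extended routinely to the propagation rules, which also preserve tree structure): when $D$ ends in a non-constant logical or propagation rule I commute $\rho$ above it, so that $\rho$ migrates into the premise subderivation(s)---duplicated at a branching rule, and possibly leaving a propagation rule $pl_i/pr_i$ behind---each of strictly smaller size, whence the inner hypothesis applies; and when $D$ ends in a leaf ($id^d,\punit_l^d,\tunit_r^d$) or a constant rule, the sequent produced by $\rho$ is already $\biilldn$-derivable using only propagation and axiom/constant rules, because those leaves are parametric in their hollow context and $\rho$ merely rearranges hollow structure around the active formulae. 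Iterating both inductions yields a $\biilldn$-derivation of $X$.

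The main obstacle is Lem.~\ref{lm:permute} itself: the exhaustive check that each structural rule commutes above each logical and propagation rule. The observation that every $\biilldn$ rule sends each context of a premise to one with the same tree shape as in the conclusion disposes of all cases in which $\rho$ touches only the surrounding context; the delicate cases are those, such as the displayed $rp_1$/$\limp_l$ interaction, in which $\rho$ reshapes the very node at which the connective rule fires. There one must first use a propagation rule to drive the principal formula into the nested position that $\rho$ demands, then fire the connective rule deeply, and finally let the copies of $\rho$ act at the tops of the premises; for a branching connective rule $\rho$ must moreover be distributed to both premises in a way compatible with the merge decomposition of their contexts and multisets, which is exactly the kind of fact guaranteed by Lem.~\ref{lm:context-merge}. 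A last point to verify is that commuting past $gl$ and $gr$ never calls for the converse, FILL-invalid Grishin direction; this holds because those rules are applied only downwards here, so weak distributivity is invoked only in its sound direction.
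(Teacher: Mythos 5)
Your proposal is essentially the paper's own proof: adjoin the six structural rules $drp_1$, $rp_1$, $drp_2$, $rp_2$, $gl$, $gr$ to $\biilldn$, permute them upwards over the logical and propagation rules via Lem.~\ref{lm:permute} (which the paper's appendix establishes by generalising $gl$, $gr$, $drp_2$, $rp_2$ to the rules $eg$ and $ig$), and absorb them when they reach the $id^d$ and constant leaves using propagation rules, exactly as in the paper's Lemmas~\ref{lm:idd-permute} and~\ref{lm:const}. The one soft spot is your claim that the shallow binary rules become derivable from the deep ones \emph{through the adjoined structural rules alone}: turning a union split $\Sc,\Sc'$ into the merge split $\Sc\in\Sc_1\merge\Sc_2$ demanded by a deep branching rule requires padding each premise with hollow sequents mirroring the other premise's nested structure, and the display and Grishin rules cannot manufacture hollow sequents in antecedent position; what is actually needed (and suffices, so this is a repair rather than a change of route --- the paper itself glosses over the point) is height-preserving admissibility of weakening by hollow sequents in $\biilldn$, the deep-inference analogue of the paper's Lem.~\ref{lm:weak}.
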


\begin{corollary}\label{cor:dc=dn}
A formula is cut-free $\biilldc$-derivable iff it is $\biilldn$-derivable.
\end{corollary}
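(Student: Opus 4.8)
The plan is to obtain the corollary as a routine composition of the three equivalence results already established, all routed through the intermediate shallow calculus $\biillsn$. No new proof-theoretic machinery is needed: the substantive work (the merge-set simulation in Thm.~\ref{thm:soundness}, the permutation argument in the Theorem immediately above, and the bidirectional simulation between $\biillsn$ and $\biilldc$ in Thm.~\ref{thm:biill-sn-eq-dc}) is already done, so the corollary merely packages these together.

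First I would record the sequent-level equivalence of the deep and shallow calculi. Theorem~\ref{thm:soundness} gives the left-to-right direction (any $\biilldn$-derivable sequent $X$ is cut-free $\biillsn$-derivable), and the Theorem immediately preceding this corollary gives the converse (any cut-free $\biillsn$-derivable $X$ is $\biilldn$-derivable). Together these yield, for every sequent $X$, that $X$ is $\biilldn$-derivable iff $X$ is cut-free $\biillsn$-derivable.

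Next I would descend from sequents to formulae. A formula $A$ is counted as provable in each system exactly when the distinguished sequent representing $A$ is derivable; under the dictionary between display sequents and nested sequents described earlier (where $\turn$, $>$ and $<$ are all written $\seq$, and $\Phi$ is the empty multiset), this representative is literally the same object in the display and nested notations. Specialising the sequent-level equivalence of the previous paragraph to this representative shows that $A$ is $\biilldn$-derivable iff $A$ is cut-free $\biillsn$-provable. Finally, Theorem~\ref{thm:biill-sn-eq-dc} states precisely that $A$ is cut-free $\biillsn$-provable iff $A$ is cut-free $\biilldc$-provable. Chaining the two biconditionals delivers the claim: $A$ is $\biilldn$-derivable iff $A$ is cut-free $\biilldc$-derivable.

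The only point requiring care -- and it is minor -- is the bookkeeping at the formula/sequent interface: one must check that the single sequent representing a formula $A$ is identified consistently across all three calculi, so that the sequent-level equivalence specialises correctly and the shallow calculus genuinely serves as the hinge linking the deep calculus to the display calculus. Beyond this notational verification there is no real obstacle, since every inequality of provability classes invoked has already been proved.
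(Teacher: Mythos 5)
Your proposal is correct and matches the paper's intended argument exactly: the corollary is obtained by chaining Thm.~\ref{thm:soundness}, the shallow-to-deep theorem immediately preceding it, and Thm.~\ref{thm:biill-sn-eq-dc}, with $\biillsn$ as the hinge. Your extra remark about identifying the formula's representative sequent across the notations is a harmless bookkeeping observation that the paper leaves implicit.
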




\section{Separation, Conservativity, and Decidability}
\label{sec:conserv}



In this section we return our attention to the relationship between our calculi
and the categorical semantics (Defs.~\ref{Def:FILLcat} and~\ref{def:free}).
Def.~\ref{def:tau_nested} gave a translation of nested sequents to formulae;
we can hence define validity for nested sequents.

\begin{definition}
A nested sequent $S$ is BiILL-\emph{valid} if there is an arrow $I\to\tau^s(S)$
in the free BiILL-category.

A nested sequent is a (nested) \emph{FILL-sequent} if it has no nesting of
sequents on the left of $\seq$, and no occurrences of $\excl$ at all. The formula
translation of Def.~\ref{def:tau_nested} hence maps FILL-sequents to
FILL-formulae. Such a sequent $S$ is FILL-valid if there is
an arrow $I\to\tau^s(S)$ in the free FILL-category.
\end{definition}

The calculus $\biilldn$ enjoys a `separation' property
between the FILL fragment using only $\punit$, $\tunit$,
$\ltens$, $\parr$, and $\limp$ and the dual fragment
using only $\punit$, $\tunit$, $\ltens$, $\parr$, $\excl$.
Let us define $\filldn$ as the proof system obtained from
$\biilldn$ by restricting to FILL-sequents and removing the rules
$pr_1$, $pl_2$, $\excl_l^d$ and $\excl_r^d$.

\begin{theorem}[Separation]
\label{thm:separation}
Nested FILL-sequents are $\filldn$-provable iff they are
$\biilldn$-provable.
\end{theorem}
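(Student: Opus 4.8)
The plan is to prove the two directions of the biconditional separately, noting that the interesting direction is that $\biilldn$-provability of a FILL-sequent implies $\filldn$-provability, since the converse is immediate. For the easy direction: $\filldn$ is by construction a subsystem of $\biilldn$, obtained simply by deleting four rules and restricting to FILL-sequents. Every $\filldn$-rule is literally a $\biilldn$-rule, and a $\filldn$-derivation manipulates only FILL-sequents, so it \emph{is} a $\biilldn$-derivation. Thus any $\filldn$-proof is already a $\biilldn$-proof.

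The substance is the forward direction. First I would observe that the four removed rules are exactly those that can introduce either the exclusion connective $\excl$ or left-nesting of sequents: the two $\excl$-rules manipulate $\excl$ directly, while $pr_1$ and $pl_2$ are the propagation rules that move a formula \emph{into} a sequent nested on the \emph{left} of $\seq$ (i.e.\ across a negative nesting). The key structural fact I would isolate is an invariant: in any $\biilldn$-rule, reading from conclusion to premises, if the conclusion is a FILL-sequent (no $\excl$, no left-nesting) then \emph{each premise of a rule that belongs to $\filldn$ is again a FILL-sequent}, whereas the four excised rules are precisely the ones that, applied bottom-up, could create an $\excl$ or a left-nesting. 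I would verify this by inspecting the conclusions of the remaining rules: $\ltens_l^d$, $\ltens_r^d$, $\parr_l^d$, $\parr_r^d$, $\limp_l^d$, $\limp_r^d$, the unit rules, $id^d$, and the propagation rules $pl_1$, $pr_2$ all preserve the FILL-sequent shape in their premises, since $\limp_r^d$ creates only a right-nested sequent $(A \seq B)$, and $pl_1$, $pr_2$ move formulae only across right-nestings.

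The main obstacle is the subtlety of the bottom-up / top-down reasoning when combined with the subformula and separation structure, so I would argue top-down from the root instead. Given a $\biilldn$-proof of a FILL-sequent $S$, I would argue that no rule instance in the proof can be one of the four excised rules and that no sequent appearing in the proof can contain $\excl$ or left-nesting. Concretely, define a sequent to be \emph{FILL-shaped} if it has no $\excl$ and no left-nesting; I claim the root is FILL-shaped and that FILL-shapedness propagates upward under exactly the retained rules. Since the conclusion $S$ is FILL-shaped, and each retained rule has the property that a FILL-shaped conclusion forces FILL-shaped premises, induction on the proof height shows every sequent in the proof is FILL-shaped; in particular no rule instance is an $\excl$-rule (whose conclusion carries $\excl$, contradicting FILL-shapedness) and none is $pr_1$ or $pl_2$ (whose premises, or conclusions, require left-nesting). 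Hence every rule used is a $\filldn$-rule acting on FILL-sequents, so the proof is already a $\filldn$-proof. I expect the delicate bookkeeping to be checking that $pr_1$ and $pl_2$ genuinely require left-nesting in the conclusion (so that a FILL-shaped conclusion rules them out), and confirming that the retained propagation rules $pl_1$ and $pr_2$ never produce left-nesting; this is a finite case analysis over the rule figures in Fig.~\ref{fig:fbilld}.
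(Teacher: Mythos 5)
Your proof is correct and follows essentially the same route as the paper's own (very terse) argument: the paper also reduces everything to the invariant that every sequent in a $\biilldn$-derivation of a FILL-sequent is itself a FILL-sequent, which rules out the excised rules $pr_1$, $pl_2$, $\excl_l^d$, $\excl_r^d$ since their conclusions require left-nesting or $\excl$. Your version simply spells out the top-down induction and the case analysis (including the observation that merge/partition preserves tree structure) that the paper leaves implicit.
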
 
\begin{proof}
  One direction, from $\filldn$ to $\biilldn$, is easy. The other
  holds because every sequent in a $\biilldn$ derivation
  of a FILL-sequent is also a FILL-sequent. 
\end{proof}

Thm.~\ref{thm:separation} tells us that every deep inference proof of a
FILL-sequent is entirely constructed from FILL-sequents,
each with a $\tau$-translation to FILL-formulae. This contrasts 
with display
calculus proofs, which must
introduce the FILL-untranslatable $<$ even for simple theorems.
By separation, and the equivalence of $\biilldc$ and $\biilldn$
(Cor.~\ref{cor:dc=dn}), the conservativity of BiILL over FILL reduces to
checking the soundness of each rule of $\filldn$.

\begin{lemma}\label{lem:basic_arr}
An arrow $A\otimes B\to C$ exists in the free FILL-category iff an arrow $A\to
B\limp C$ exists. Further, arrows of the following types exist for all formulae
$A,B,C$:
\begin{enumerate}
\item $A\limp B\limp C \;\to\; A\otimes B\limp C$ and
  $A\otimes B\limp C \;\to\; A\limp B\limp C$
\item $(A\limp B)\parr C \;\to\; A\limp B\parr C$.
\end{enumerate}
\end{lemma}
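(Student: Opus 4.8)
The plan is to work entirely inside the free FILL-category, using its universal properties (the closed structure, the symmetric monoidal structure of $\parr$, and the weak distributivity arrow $\omega$) rather than the sequent calculus, since the lemma is purely semantic. The first claim — that $A\otimes B\to C$ exists iff $A\to B\limp C$ exists — I would get directly from the symmetric monoidal closed structure $(\otimes,I,\limp)$ of Def.~\ref{Def:FILLcat}, which gives a natural isomorphism $\mathrm{Hom}(A\otimes B, C)\cong\mathrm{Hom}(A, B\limp C)$; populatedness of one hom-set transfers to the other. Concretely, given $f:A\otimes B\to C$ I would produce $A\xrightarrow{\eta} B\limp(A\otimes B)\xrightarrow{B\limp f} B\limp C$, and conversely given $g:A\to B\limp C$ I would form $A\otimes B\xrightarrow{g\otimes B}(B\limp C)\otimes B\xrightarrow{\gamma}B\otimes(B\limp C)\xrightarrow{\varepsilon}C$ (inserting symmetry $\gamma$ because $\varepsilon$ is stated as $(A\limp B)\otimes A\to B$).

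For item (1), the two arrows are really witnesses to the associativity/currying isomorphism $A\otimes B\limp C\cong A\limp(B\limp C)$. Using the biconditional just established together with associativity $\alpha$ of $\otimes$: an arrow $A\otimes B\limp C\to A\limp B\limp C$ is equivalent, by currying on the $A$ in the target, to an arrow $A\otimes(A\otimes B\limp C)\to B\limp C$, and then to $(A\otimes(A\otimes B\limp C))\otimes B\to C$; reassociating via $\alpha$ and using $\varepsilon$ gives the map. The reverse arrow $A\limp B\limp C\to A\otimes B\limp C$ is obtained symmetrically by uncurrying twice. Rather than chase these diagrams explicitly, I would simply observe that both directions follow from the iterated hom-set isomorphism $\mathrm{Hom}(-, A\limp B\limp C)\cong\mathrm{Hom}(-\otimes A\otimes B, C)\cong\mathrm{Hom}(-, A\otimes B\limp C)$ applied to the identity arrows, which populates both hom-sets.

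The genuinely FILL-specific part is item (2): the arrow $(A\limp B)\parr C\to A\limp(B\parr C)$, which is exactly the Grishin~(b) formula from \eqref{eq:wdis}. Here I would use weak distributivity $\omega$. By the closed adjunction, giving the arrow $(A\limp B)\parr C\to A\limp(B\parr C)$ is the same as giving $((A\limp B)\parr C)\otimes A\to B\parr C$. Applying symmetry so that $A$ sits on the left, I would invoke the weak distributivity arrow $\omega: A\otimes((A\limp B)\parr C)\to(A\otimes(A\limp B))\parr C$, then apply $\varepsilon: (A\limp B)\otimes A\to B$ (with a symmetry) in the left component of the $\parr$ to obtain $(A\otimes(A\limp B))\parr C\to B\parr C$, and compose. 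The main obstacle — and the step I would be most careful about — is bookkeeping the symmetry arrows $\gamma$ so that $\omega$ and $\varepsilon$ apply to the correct tensor factors; the remark after Def.~\ref{def:free} that we may suppress associativity and symmetry arrows makes this a routine, if fiddly, diagram rather than a conceptual difficulty. No converse arrow is needed here, consistent with the paper's earlier observation that Grishin~(a) is not FILL-valid.
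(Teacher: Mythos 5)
Your proof is correct and takes essentially the same approach as the paper: the paper's proof simply declares the lemma ``basic category theory'' and gives exactly your currying construction ($\eta:A\to B\limp(A\otimes B)$ followed by $B\limp f$) as its sole illustrative example, leaving the rest---including the adjunction argument for item (1) and weak distributivity composed with evaluation for item (2)---implicit, which your write-up fills in with the intended toolkit. One immaterial slip: in your uncurrying composite the inserted $\gamma$ is unnecessary and in fact leaves the factors in the wrong order, since $(B\limp C)\otimes B$ already matches the stated orientation of $\varepsilon:(A\limp B)\otimes A\to B$.
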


In the proofs below we will abuse notation by omitting explicit reference to
$\tau^a$ and $\tau^s$, writing $\Gamma_1\limp\Delta_1$ for $\tau^a(\Gamma_1)\limp
\tau^s(\Delta_1)$ for example.

\begin{lemma}\label{lem:ctxt}
Let $X[~]$ be a positive FILL-context. If there exists an arrow $f:
\tau^s(S)\to\tau^s(T)$ in the free FILL-category then there also exists an arrow
$\tau^s(X[S])\to\tau^s(X[T])$.
Hence if $X[S]$ is FILL-valid then so is $X[T]$.
\end{lemma}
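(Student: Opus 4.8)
The plan is to establish the arrow $\tau^s(X[S])\to\tau^s(X[T])$ by induction on the structure of the positive FILL-context $X[~]$, and then to read off the ``Hence'' clause by a single composition. The underlying point is a polarity observation: because a FILL-sequent has no nesting of sequents on the left of $\seq$, every nested sequent occurring in $X[~]$ sits in a succedent, so the path from the root of $X[~]$ down to its hole consists entirely of right-nestings. Hence the hole always occupies a covariant position, which is precisely what allows the translation to lift $f$ monotonically. This is the structural fact that makes the restriction to positive FILL-contexts essential, and it is where I expect the only genuine care is needed; once it is in place the rest is functoriality and composition.

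For the base case, $X[~]=[~]$ is the empty context, so $\tau^s(X[S])=\tau^s(S)$ and $\tau^s(X[T])=\tau^s(T)$, and $f$ itself is the required arrow. For the inductive step, I would write the outermost sequent of $X[~]$ as $\Gamma\seq\Delta,X'[~]$, where $\Gamma$ is a multiset of formulae (there is no left nesting in FILL), $\Delta$ collects the remaining succedent formulae and nested sequents, and $X'[~]$ is a strictly smaller positive FILL-context (or the empty context in the depth-one case) carrying the hole; using commutativity of $\parr$ we may assume the component containing the hole is written last. By Def.~\ref{def:tau_nested},
\[
\tau^s(X[S]) = \tau^a(\Gamma)\limp\bigl(\tau^s(\Delta)\parr\tau^s(X'[S])\bigr),
\]
and symmetrically for $T$. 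The induction hypothesis (or, in the depth-one case, the hypothesis $f$ directly) supplies an arrow $g:\tau^s(X'[S])\to\tau^s(X'[T])$. Functoriality of $\parr$ from Def.~\ref{def:free} then yields $id\parr g:\tau^s(\Delta)\parr\tau^s(X'[S])\to\tau^s(\Delta)\parr\tau^s(X'[T])$, and covariant functoriality of $\limp$ (the arrow of the form $A\limp g'$) yields
\[
\tau^a(\Gamma)\limp(id\parr g)\;:\;\tau^s(X[S])\to\tau^s(X[T]),
\]
as required.

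For the final sentence, suppose $X[S]$ is FILL-valid, i.e.\ there is an arrow $I\to\tau^s(X[S])$ in the free FILL-category. Composing it with the arrow $\tau^s(X[S])\to\tau^s(X[T])$ just constructed gives $I\to\tau^s(X[T])$, so $X[T]$ is FILL-valid. I do not expect to need Lem.~\ref{lem:basic_arr} here: the only categorical ingredients are the bifunctoriality of $\parr$, the covariance of $\limp$ in its second argument, and composition, all provided directly by Def.~\ref{def:free}. The main obstacle is purely organisational, namely confirming that a positive FILL-context decomposes into an outer layer $\Gamma\seq\Delta,(-)$ together with a smaller positive FILL-context, and that covariance is preserved at each peeling step; the no-left-nesting condition on FILL-sequents is exactly what guarantees this.
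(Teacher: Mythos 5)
Your proof is correct and takes essentially the same route as the paper's: induction on the context, with the inductive step peeling off an outer layer $\Gamma\seq\Delta,X'[~]$ and lifting the arrow via bifunctoriality of $\parr$ and covariance of $\limp$ in its second argument (both supplied by Def.~\ref{def:free}), then composing with $I\to\tau^s(X[S])$ for the validity clause. The only difference is presentational: you make explicit the polarity observation (no left-nesting in FILL-sequents forces the hole into a covariant position) that the paper leaves implicit in writing the induction case as $\Gamma\seq X[~],\Tc$.
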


\begin{lemma}\label{lem:hollow}
Given a multiset $\Vc$ of \emph{hollow} FILL-sequents, there exists an
arrow $\bot\to\tau^s(\Vc)$ in the free FILL-category.
\end{lemma}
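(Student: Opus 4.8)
The plan is to reduce the statement, via the shape of hollow FILL-sequents together with two simple closure properties, to a short induction. Since a hollow FILL-sequent contains no formulae and no nesting to the left of $\seq$, it must have the form $\cdot\seq T_1,\dots,T_l$ with each $T_i$ again a hollow FILL-sequent; hence $\tau^s(V) = I\limp(\tau^s(T_1)\parr\cdots\parr\tau^s(T_l))$, degenerating to $I\limp\bot$ when $l=0$, while the empty multiset translates to $\bot$.

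First I would record that the class of objects $C$ for which an arrow $\bot\to C$ exists is closed under $\parr$: given arrows $\bot\to A$ and $\bot\to B$, I compose the unit isomorphism $\bot\to\bot\parr\bot$ (namely $\lambda^{-1}$, as $\bot$ is the $\parr$-unit) with the bifunctorial image of the two arrows under $\parr$, obtaining $\bot\to A\parr B$. Together with the identity $\bot\to\bot$ handling the empty multiset, iterating this shows that it suffices to produce, for each individual hollow FILL-sequent $V$, an arrow $\bot\to\tau^s(V)$.

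Second, I would observe that an arrow $\bot\to C$ yields an arrow $\bot\to I\limp C$: by the adjunction of Lemma~\ref{lem:basic_arr}, an arrow $\bot\to I\limp C$ exists iff an arrow $\bot\otimes I\to C$ does, and the latter is the composite of the unit isomorphism $\rho\colon\bot\otimes I\to\bot$ with the given $\bot\to C$. The lemma then follows by induction on the number of nodes of $V$: writing $V=(\cdot\seq T_1,\dots,T_l)$, each child $T_i$ has strictly fewer nodes, so the induction hypothesis supplies $\bot\to\tau^s(T_i)$; closure under $\parr$ assembles these into $\bot\to\tau^s(T_1)\parr\cdots\parr\tau^s(T_l)$ (which is just $\bot\to\bot$ when $l=0$); and the closure property just noted caps this with $I\limp(-)$ to give $\bot\to\tau^s(V)$.

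I do not expect any real obstacle, since every ingredient is already available: the unit and symmetry isomorphisms (which we are licensed to suppress), the bifunctoriality of $\parr$, and the $(\otimes,\limp)$-adjunction packaged as Lemma~\ref{lem:basic_arr}. The only point needing care is the bookkeeping of translation conventions at the leaves — ensuring that an empty antecedent contributes the $\otimes$-unit $I$ to the left of the implication and an empty succedent the $\parr$-unit $\bot$ to its right — so that closure property (b) is applied to precisely the object $I\limp(\cdots)$ that $\tau^s$ produces.
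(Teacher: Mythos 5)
Your proof is correct and takes essentially the same route as the paper's: an induction on the size of a single hollow sequent, using $\rho\colon\bot\otimes I\to\bot$ together with the adjunction of Lemma~\ref{lem:basic_arr} for the $I\limp(-)$ step and $\bot\to\bot\parr\cdots\parr\bot$ with bifunctoriality of $\parr$ for the children, followed by the easy multiset case. If anything, your explicit isolation of the two closure properties is slightly cleaner than the paper's write-up, which leaves the re-application of the adjunction in the inductive case implicit.
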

\begin{proof}
We will prove this for a single sequent first, by induction on its size. The
base case is the sequent $\cdot\seq\cdot$, whose $\tau^s$-translation is $I\limp
\bot$. The existence of an arrow $\bot\to I\limp\bot$ is, by
Lem.~\ref{lem:basic_arr}, equivalent to the existence of $\bot\otimes I\to\bot$;
this is the unit arrow $\rho$. The induction case involves the sequent $\cdot\to
T_1,\ldots,T_l$, with each $T_i$ hollow; the required arrow exists by
composing the arrows 
given by the induction hypothesis with
$\bot\to\bot\parr\cdots\parr\bot$. The multiset case then follows easily
by considering the cases where $\Vc$ is empty and non-empty.
\end{proof}

\begin{lemma}\label{lem:merg}
Given a multiset $\Tc\in\Tc_1\merge\Tc_2$ of sequents and formulae, there is an
arrow $\tau^s(\Tc_1)\parr\tau^s(\Tc_2)\to\tau^s(\Tc)$ in the free FILL-category.
\end{lemma}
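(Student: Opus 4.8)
The plan is to prove the statement by a well-founded induction on the total size of $\Tc$, reducing the multiset-level claim to a single-sequent claim using only the symmetric monoidal structure of $\parr$ and the two transposition arrows of Lemma~\ref{lem:basic_arr}. First I would fix the decomposition supplied by the merge: write $\Tc_1=\Gamma\cup\{X_1,\dots,X_n\}$ and $\Tc_2=\Delta\cup\{Y_1,\dots,Y_n\}$, so that $\Tc=\Gamma\cup\Delta\cup\{Z_1,\dots,Z_n\}$ with $Z_i\in X_i\merge Y_i$. Since $\parr$ is symmetric monoidal, $\tau^s(\Tc_1)\parr\tau^s(\Tc_2)$ is isomorphic to the $\parr$ of the pure-formula parts $\Gamma,\Delta$ together with the pairs $\tau^s(X_i)\parr\tau^s(Y_i)$. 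The formula parts map to themselves by the identity, so by bifunctoriality of $\parr$ it suffices to exhibit, for each $i$, an arrow $\tau^s(X_i)\parr\tau^s(Y_i)\to\tau^s(Z_i)$ with $Z_i\in X_i\merge Y_i$; this is the single-sequent case.

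For the single-sequent case I would write a FILL-sequent $X$ as $\tau^s(X)=A_X\limp B_X$, where $A_X$ is the $\otimes$ of its (formula-only) antecedent and $B_X$ the $\parr$ of its succedent, and similarly for $Y$ and $Z$. The crucial simplification, special to FILL, is that FILL-sequents have no left nesting, so when $Z\in X\merge Y$ the antecedents merge as plain multiset union of formulae and $A_Z\cong A_X\otimes A_Y$ is an isomorphism from the symmetric monoidal structure of $\otimes$ --- no dual ``antecedent'' merge lemma (which would involve $\excl$) is needed. The succedents, by contrast, merge as multisets of formulae and strictly smaller right-nested sequents, with the succedent of $Z$ lying in the merge of the succedents of $X$ and $Y$; the induction hypothesis therefore supplies $B_X\parr B_Y\to B_Z$.

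It then remains to chain these together. Applying Lemma~\ref{lem:basic_arr}(2), in the form $(A\limp B)\parr C\to A\limp(B\parr C)$, twice, interleaved with the commutativity of $\parr$, and then uncurrying via Lemma~\ref{lem:basic_arr}(1), produces an arrow $(A_X\limp B_X)\parr(A_Y\limp B_Y)\to(A_X\otimes A_Y)\limp(B_X\parr B_Y)$. Composing with the covariant functorial action of $\limp$ on the induction arrow $B_X\parr B_Y\to B_Z$, and with the isomorphism $A_Z\cong A_X\otimes A_Y$ acting contravariantly in the first argument, lands in $A_Z\limp B_Z=\tau^s(Z)$, as required. The induction is well founded because the single-sequent case only ever invokes the multiset case on the (strictly smaller) succedent multisets.

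The main obstacle is the chained application of Lemma~\ref{lem:basic_arr}(2): this arrow is exactly the weak-distributivity/Grishin-(b) map, so getting its direction and the accompanying $\parr$-commutations right is where the proof genuinely uses the asymmetry of FILL, and is the step I would check most carefully. A subsidiary point to verify is that FILL-sequent antecedents really are formula-only ($m=0$ in the merge definition), since this is what makes $A_Z\cong A_X\otimes A_Y$ a bona fide isomorphism and lets the whole argument avoid any appeal to exclusion.
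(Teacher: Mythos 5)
Your proposal is correct and follows essentially the same route as the paper: a mutual induction in which the single-sequent case reduces to the multiset case on the (strictly smaller) succedents, with the key step being the arrow $(A_X\limp B_X)\parr(A_Y\limp B_Y)\to(A_X\otimes A_Y)\limp(B_X\parr B_Y)$ built from Lemma~\ref{lem:basic_arr}(ii) and (i), exactly as the paper's proof invokes them. You simply spell out details the paper leaves terse (the well-foundedness bookkeeping, the formula-only antecedents of FILL-sequents, and the $\parr$-commutations), all of which check out.
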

\begin{proof}
We prove this for a single sequent first, by induction on its size. The base
case requires an arrow
$
(\Gamma_1\limp\Delta_1)\parr(\Gamma_2\limp\Delta_2)
\;\to\;
\Gamma_1\otimes\Gamma_2\limp\Delta_1\parr\Delta_2
$
(ref. Lem.~\ref{lm:dist}), which exists by Lem.~\ref{lem:basic_arr}(ii) and (i).
The induction case
follows similarly. The multiset case then follows easily by
considering the cases where $\Tc$ is empty and non-empty.
\end{proof}

\begin{lemma}\label{lem:merg_tens}
Take $X[~]\in X_1[~]\merge X_2[~]$ and $\Tc\in\Tc_1\merge\Tc_2$. Then the
following arrows exist in the free FILL-category for all $A, B,
\Gamma_1 $ and $\Gamma_2$:
\begin{enumerate}
\item $\tau^s(X_1[\Gamma_1\seq A,\Tc_1])\otimes\tau^s(X_2[\Gamma_2\seq B,\Tc_2])
  \;\to\; \tau^s(X[\Gamma_1,\Gamma_2\seq A\otimes B,\Tc])$;
\item $\tau^s(X_1[\Gamma_1\seq A,\Tc_1])\otimes\tau^s(X_2[\Gamma_2,B\seq \Tc_2])
  \;\to\; \tau^s(X[\Gamma_1,\Gamma_2,A\limp B\seq \Tc])$;
\item $\tau^s(X_1[\Gamma_1,A\seq \Tc_1])\otimes\tau^s(X_2[\Gamma_2,B\seq \Tc_2])
  \;\to\; \tau^s(X[\Gamma_1,\Gamma_2,A\parr B\seq \Tc])$;
\end{enumerate}
\end{lemma}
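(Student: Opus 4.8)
The plan is to prove all three arrows simultaneously by reducing each to the key distributivity lemmas already available, exploiting the structural uniformity of the merge operation. For each item I would proceed by induction on the common tree structure shared by $X_1[~]$, $X_2[~]$ and $X[~]$ (recall these three contexts have identical nesting shape, differing only in how formulae are distributed). The base case is when the context is the empty hole $[~]$, so that $X_1[S_1]=S_1$, $X_2[S_2]=S_2$ and $X[S]=S$ are plain sequents; the inductive step threads the already-constructed arrow through one additional layer of nesting using the functoriality of $\otimes$, $\parr$, $\limp$ together with Lem.~\ref{lem:ctxt}.

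For the base cases I would reduce each item to the concrete distributivity facts of Lem.~\ref{lem:basic_arr} and Lem.~\ref{lem:merg}. Item (i) with empty context asks for an arrow
\[
(\Gamma_1\limp(A\parr\Delta_1))\otimes(\Gamma_2\limp(B\parr\Delta_2))
\;\to\;
(\Gamma_1\otimes\Gamma_2)\limp((A\otimes B)\parr\Delta),
\]
where $\Delta\in\Delta_1\merge\Delta_2$. I would build this by first applying weak distributivity and Lem.~\ref{lem:basic_arr}(i) to pull the two tensored implications into a single implication $(\Gamma_1\otimes\Gamma_2)\limp\bigl((A\parr\Delta_1)\otimes(B\parr\Delta_2)\bigr)$, then use weak distributivity repeatedly to migrate $A$ and $B$ together so that $A\otimes B$ emerges, and finally invoke Lem.~\ref{lem:merg} to absorb $\Delta_1\parr\Delta_2$ into $\tau^s(\Delta)$. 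Items (ii) and (iii) are structurally the same computation: in (ii) the $A$ on the right of the first premise and the $B$ on the left of the second premise are combined via the arrow $(A)\otimes(A\limp B)\to B$ (namely $\varepsilon$, after using Lem.~\ref{lem:basic_arr}) so that the residual $A\limp B$ lands on the left, while in (iii) the two left-hand formulae $A$ and $B$ combine into $A\parr B$ on the left, again using weak distributivity to shuttle them to a common position.

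For the inductive step I would treat the two cases of Def.~\ref{def:free}'s context recursion (hole nested positively versus negatively) in turn. Since the three contexts share their tree shape, each peels off one common outer sequent layer in lockstep; the sub-hole arrow supplied by the induction hypothesis is inserted into the surrounding sequent by Lem.~\ref{lem:ctxt}, which guarantees monotonicity of $\tau^s$ along positive contexts, after possibly re-expressing a negative occurrence through the residuation $A\otimes B\to C$ iff $A\to B\limp C$ of Lem.~\ref{lem:basic_arr}. The main obstacle is the bookkeeping in the base case: correctly sequencing the weak-distributivity arrow $\omega$ so that the two designated formulae ($A$ and $B$ in item (i), and their analogues in (ii) and (iii)) are brought into adjacency \emph{before} the surrounding multisets $\Delta_1,\Delta_2$ are merged, since $\omega$ is directional and cannot be reversed. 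Once the order of composition respecting $\omega$'s orientation is fixed, each item is a routine chaining of the lemmas already proved.
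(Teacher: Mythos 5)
Your proposal matches the paper's proof in structure and in all essentials: induction on the (shared) size of the contexts, with all the real work in the base cases, each discharged exactly as in the paper by residuating via Lem.~\ref{lem:basic_arr}, applying evaluation arrows $\varepsilon$, orienting weak distributivity so as to bring the two designated formulae together, and finishing with Lem.~\ref{lem:merg}; your closing remark about respecting the direction of $\omega$ is precisely the right concern, and your treatments of (ii) via $A\otimes(A\limp B)\to B$ and of (iii) via two weak distributivities are the paper's computations.

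One imprecision in your inductive step deserves flagging. Lem.~\ref{lem:ctxt} cannot by itself ``insert'' the induction-hypothesis arrow, because that arrow's domain is a tensor $\tau^s(Y_1[\dots])\otimes\tau^s(Y_2[\dots])$ spanning the two distinct premise structures, whereas Lem.~\ref{lem:ctxt} only transports a single arrow $\tau^s(S)\to\tau^s(T)$ through one context. The repair is cheap and uses only tools you already deploy: after peeling one layer you first need the arrow $(\Gamma'_1\limp P\parr\tau^s(\Sc_1))\otimes(\Gamma'_2\limp Q\parr\tau^s(\Sc_2))\to(\Gamma'_1\otimes\Gamma'_2)\limp(P\otimes Q)\parr\tau^s(\Sc)$, which is your item-(i) base case instantiated at $P=\tau^s(Y_1[\dots])$ and $Q=\tau^s(Y_2[\dots])$; only then does functoriality of $\limp$ and $\parr$ (equivalently, Lem.~\ref{lem:ctxt} on the remaining one-layer context) let you apply the induction hypothesis inside. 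So the step is ``easy'' as the paper says, but it reuses the base-case computation rather than mere monotonicity. Two minor points: the negative-hole case you set aside is vacuous here, since the lemma lives in the free FILL-category and FILL-sequents nest nothing to the left of $\seq$, so every hole is positive; and the arrow combining the two implications in your base case for (i) needs only residuation, symmetry and $\varepsilon$ --- $\omega$ enters afterwards, to migrate $A$ and $B$ across the pars.
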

\begin{proof}
All three cases follow by induction on the size of $X[~]$. In all 
three cases
the induction step
is easy, and so we focus on the base cases. By Lem.~\ref{lem:basic_arr} the base
case for (i) requires an arrow:
\begin{equation}\label{eq:mt_1}
(\Gamma_1\limp A\parr\Tc_1)\otimes(\Gamma_2\limp B\parr\Tc_2)\otimes
\Gamma_1\otimes\Gamma_2 \quad\to\quad (A\otimes B)\parr\Tc.
\end{equation}
By the `evaluation' arrows $\varepsilon$ there is an arrow from the left hand
side of \eqref{eq:mt_1} to $(A\parr\Tc_1)\otimes(B\parr\Tc_2)$.
Composing this with weak distributivity takes us to $((A\parr\Tc_1)
\otimes B)\parr\Tc_2$, and then to $(A\otimes B)\parr\Tc_1\parr
\Tc_2$. Lem.~\ref{lem:merg} completes the result. The base cases for (ii) and
(iii) follow by similar arguments (App.~\ref{app:conserv}).
\end{proof}

\begin{theorem}\label{thm:filldn_sound}
For every rule of $\filldn$, if the premises are FILL-valid then so
is the conclusion.
\end{theorem}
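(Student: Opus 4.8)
The plan is to read off \emph{FILL-validity} of a nested sequent $S$ as the existence of an arrow $I\to\tau^s(S)$ in the free FILL-category, and then verify rule by rule that validity of the premise(s) forces validity of the conclusion. The thirteen rules of $\filldn$ ($pl_1$, $pr_2$, $id^d$, $\punit_l^d$, $\punit_r^d$, $\tunit_l^d$, $\tunit_r^d$, $\ltens_l^d$, $\ltens_r^d$, $\limp_l^d$, $\limp_r^d$, $\parr_l^d$, $\parr_r^d$) split into three groups, and the point is that every non-trivial arrow we shall need has already been packaged into Lemmas~\ref{lem:basic_arr},~\ref{lem:ctxt},~\ref{lem:hollow} and~\ref{lem:merg_tens}. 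A uniform simplification is that, since FILL-sequents have no nesting on the left of $\seq$, every hole below the root sits immediately to the right of some $\seq$; hence every non-empty FILL-context is \emph{positive}, which licenses the covariant use of Lemma~\ref{lem:ctxt} throughout (the root case, where the context is empty, being handled directly).

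For the single-premise rules $\punit_r^d$, $\tunit_l^d$, $\ltens_l^d$, $\limp_r^d$, $\parr_r^d$, $pl_1$ and $pr_2$, premise and conclusion agree outside one node, so I would exhibit a local arrow $\tau^s(N_p)\to\tau^s(N_c)$ between the translations of the premise and conclusion nodes and lift it through the ambient positive context by Lemma~\ref{lem:ctxt}. For $\ltens_l^d$, $\limp_r^d$ and $\parr_r^d$ this local arrow is an identity modulo associativity and the definition of $\tau^s$; for $\punit_r^d$ and $\tunit_l^d$ it is a unit arrow ($\rho^{-1}$ or $\rho$) transported through the (derivable) monotonicity of $\limp$ in each argument. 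The only substantive local arrows are those of the propagation rules, which is where the genuine FILL content enters: writing $E,F,G$ for $\tau^a(\Sc'),\tau^s(\Tc'),\tau^s(\Tc)$ and currying off $\tau^a(\Sc)$ and $A$ via Lemma~\ref{lem:basic_arr}(i), the $pl_1$ step reduces to the composite
\[
(A\otimes E\limp F)\parr G\;\longrightarrow\;(A\limp(E\limp F))\parr G\;\longrightarrow\;A\limp((E\limp F)\parr G),
\]
whose two legs are precisely Lemma~\ref{lem:basic_arr}(i) (applied under $-\parr G$) and Lemma~\ref{lem:basic_arr}(ii); the $pr_2$ step uses symmetry of $\parr$ followed by Lemma~\ref{lem:basic_arr}(ii).

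For the two-premise rules $\ltens_r^d$, $\limp_l^d$ and $\parr_l^d$ I would invoke Lemma~\ref{lem:merg_tens}(i), (ii) and (iii) respectively, which yield an arrow $\tau^s(P_1)\otimes\tau^s(P_2)\to\tau^s(C)$ from the premises $P_1,P_2$ to the conclusion $C$ (here the left multisets, being purely formulae in the FILL fragment, merge as plain union, so the hypotheses are met). Given $I\to\tau^s(P_1)$ and $I\to\tau^s(P_2)$, precomposing with $I\cong I\otimes I$ and tensoring gives $I\to\tau^s(C)$; as the proof of Lemma~\ref{lem:merg_tens} already absorbs the weak distributivity $\omega$, nothing further is needed. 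For the zero-premise rules $id^d$, $\punit_l^d$, $\tunit_r^d$ the context and side multisets $\Uc,\Vc$ are \emph{hollow}, so in the FILL fragment $\Uc$ is empty. I would first establish $I\to\tau^s(N)$ for the node $N$: for $\punit_l^d$ this is exactly the arrow $\bot\to\tau^s(\Vc)$ of Lemma~\ref{lem:hollow}, while for $id^d$ and $\tunit_r^d$ it follows by composing a unit arrow ($p\to p\parr\bot$ or $I\cong I\parr\bot$) with the functorial image of that same Lemma~\ref{lem:hollow} arrow. A short induction on the hollow context, again using Lemma~\ref{lem:hollow} on the hollow siblings together with the $\limp$/$\parr$ unit laws, then produces $\tau^s(N)\to\tau^s(X[N])$, and composing gives validity of the conclusion.

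I expect the propagation rules $pl_1$ and $pr_2$ to be the main obstacle: identifying the correct composite of Lemma~\ref{lem:basic_arr} arrows, and recognising that the FILL-specific non-availability of Grishin~(a) is exactly what leaves these the only realisable directions, is the delicate part. By contrast, bookkeeping of context polarity is painless, because the no-left-nesting shape of FILL-sequents forces every non-root context to be positive, so Lemma~\ref{lem:ctxt} always applies covariantly.
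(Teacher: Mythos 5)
Your proposal is correct and takes essentially the same route as the paper's proof: node-local arrows lifted through (necessarily positive) contexts via Lemma~\ref{lem:ctxt}, Lemma~\ref{lem:basic_arr} for the propagation rules, Lemma~\ref{lem:hollow} plus an induction on the hollow context for the zero-premise rules, and Lemma~\ref{lem:merg_tens} composed with $I\to I\otimes I$ for the branching rules. Indeed your explicit $pl_1$ composite, using Lemma~\ref{lem:basic_arr}(i) under $-\parr G$ followed by (ii), spells out precisely what the paper compresses into its one-line justification.
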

\begin{proof}
As FILL-sequents nest no sequents to the left of $\seq$, we can
modify the rules of Fig.~\ref{fig:fbilld} to replace the multisets $\Sc,\Sc'$
of sequents
and formulae with multisets $\Gamma,\Gamma'$ of formulae only, and
remove the hollow multisets of sequents $\Uc$ entirely (see
App.~\ref{app:conserv}).

Therefore by Lem.~\ref{lem:ctxt} the soundness of $pl_1$ amounts to the
existence in the free FILL-category of an arrow
$$
\Gamma\limp(A\otimes\Gamma'\limp\Tc')\parr\Tc \quad\to\quad
\Gamma\otimes A\limp(\Gamma'\limp\Tc')\parr\Tc.
$$
This follows by two uses of Lem.~\ref{lem:basic_arr}(i). Similarly $pr_2$
requires an arrow
$$
\Gamma\limp\Tc\parr A\parr(\Gamma'\limp\Tc') \quad\to\quad
\Gamma\limp\Tc\parr(\Gamma'\limp\Tc'\parr A)
$$
which exists by Lem.~\ref{lem:basic_arr}(ii).

$id^d$: by induction on the size of $X[~]$. The base case requires an arrow
$I\to p\limp p\parr\Vc$, which exists by Lems.~\ref{lem:hollow}
and~\ref{lem:basic_arr}. Induction involves a sequent $\cdot\seq
X[p\seq p,\Vc],\Tc'$, with $\Tc'$ hollow, and hence requires an arrow $I\to I\limp
X[p\seq p,\Vc]\parr\Tc'$. By Lem.~\ref{lem:basic_arr} and the arrow
$I\otimes I\to I$
we need an arrow $I\to X[p\seq p,\Vc]\parr\Tc'$; by the
induction hypothesis we have $I\to X[p\seq p,\Vc]$; this extends to $I\to
X[p\seq p,\Vc]\parr\bot$; Lem.~\ref{lem:hollow} completes the proof.

$\bot^d_l$: by another induction on $X[~]$. The base case $I\to\bot\limp\Vc$
follows by Lems.~\ref{lem:basic_arr} and~\ref{lem:hollow}; induction
follows as with $id^d$.

$\bot^d_r$: By Lem.~\ref{lem:ctxt} and the unit property of $\bot$.

$I^d_l$: By Lem.~\ref{lem:ctxt} we need an arrow $(\Gamma\limp\Tc)
\otimes\Gamma\otimes I\to\Tc$; this exists by the unit property of $I$
and the `evaluation' arrow $\varepsilon$.

$I^d_r$: another induction on $X[~]$. The base case arrow $I\to I\limp I\parr
\Vc$ exists by Lems.~\ref{lem:basic_arr} and~\ref{lem:hollow};
induction follows as with $id^d$.

$\otimes^d_l$, $\limp^d_r$, and $\parr^d_r$ are trivial by the formula
translation.

$\otimes^d_r$: compose the arrow $I\to I\otimes I$ with the arrows defined by
the validity of the premises, then use Lem.~\ref{lem:merg_tens}(i). $\limp^d_l$
and $\parr^d_r$ follow similarly via Lem.~\ref{lem:merg_tens}(ii) and (iii).
\end{proof}

\begin{theorem}
\label{thm:filldn-conserv}
A FILL-formula is $\fill$-valid iff
it is $\filldn$-provable, and BiILL is conservative over FILL.
\end{theorem}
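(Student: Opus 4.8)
The plan is to prove the biconditional and the conservativity claim by assembling the chain of equivalences established throughout the paper, so that no genuinely new work is required beyond stitching together results we may already assume. I would state at the outset that the two halves of the theorem, the FILL-validity/provability biconditional and the conservativity of BiILL over FILL, are in fact two faces of the same argument, and I would organize the proof around the two directions of the biconditional.

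First I would handle the direction from $\filldn$-provability to $\fill$-validity. This is exactly the soundness content of Theorem~\ref{thm:filldn_sound}: every rule of $\filldn$ preserves FILL-validity from premises to conclusion. Since axioms ($id^d$, the constant rules) are themselves FILL-valid, a straightforward induction on the height of a $\filldn$-derivation shows that any $\filldn$-provable FILL-sequent is FILL-valid; translating via Def.~\ref{def:tau_nested}, any $\filldn$-provable FILL-formula is $\fill$-valid. This is the easy half and I would dispatch it in one or two sentences.

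The converse direction, from $\fill$-validity to $\filldn$-provability, is where the real machinery lives, and I would present it as the following pipeline. Suppose a FILL-formula $A$ is $\fill$-valid. Since every BiILL-category is a FILL-category (Def.~\ref{Def:FILLcat}), $A$ is a fortiori BiILL-valid. By the cut-free completeness of $\biilldc$ for BiILL-validity (Thm.~\ref{thm:dc=biill}), $A$ has a cut-free $\biilldc$-derivation. By the equivalence of the display and deep-inference calculi (Cor.~\ref{cor:dc=dn}), $A$ is $\biilldn$-derivable. Finally, since $A$ is a FILL-formula, the Separation Theorem (Thm.~\ref{thm:separation}) guarantees that $A$ is already $\filldn$-provable. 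This closes the biconditional, and conservativity then drops out immediately: a FILL-formula valid in BiILL is, by the chain just traced, $\filldn$-provable and hence (by the forward direction) $\fill$-valid, so BiILL proves no new FILL-theorems.

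The step I expect to be the main obstacle, or at least the one carrying the most weight, is marshalling the semantic entry point correctly: the inference that $\fill$-validity implies BiILL-validity relies on the fact that the \emph{free} BiILL-category contains the free FILL-category in the sense required by Def.~\ref{Def:valid}, namely that FILL-validity really does entail populated hom-sets in every BiILL-category under the generic valuation. Since validity is defined over all categories of the relevant kind and every BiILL-category is a FILL-category, this containment is immediate, but it must be stated cleanly because it is the sole place where the semantic direction of the argument is used; every subsequent link in the chain is purely proof-theoretic and already established. The remainder is bookkeeping: verifying that the object $A$ stays a genuine FILL-formula throughout (so that Separation applies) and that the translations of Defs.~\ref{def:tau} and~\ref{def:tau_nested} agree across the display and nested presentations.
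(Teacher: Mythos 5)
Your proposal is correct and matches the paper's own argument: the paper proves this theorem by exactly the chain you trace, citing Cor.~\ref{cor:filldc_complete} (which is itself just Thm.~\ref{thm:dc=biill} plus the observation that every BiILL-category is a FILL-category), Cor.~\ref{cor:dc=dn}, Thm.~\ref{thm:separation}, and Thm.~\ref{thm:filldn_sound}. Your only cosmetic difference is invoking Thm.~\ref{thm:dc=biill} directly rather than through the corollary, which is the same argument unfolded.
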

\begin{proof}
By Cors.~\ref{cor:filldc_complete} and~\ref{cor:dc=dn} and
Thms.~\ref{thm:separation} and~\ref{thm:filldn_sound}.
\end{proof}

Note that it is also possible to prove soundness of $\filldn$ w.r.t. FILL syntactically,
i.e., via a translation into Schellinx's sequent calculus for FILL~\cite{Schellinx:Some}. 
See App. G
for details. 

Thm.~\ref{thm:filldn-conserv} gives us a sound and complete calculus 
for $\fill$ that enjoys a genuine subformula property. This in turn allows one to prove
NP-completeness of the tautology problem for $\fill$ (i.e., deciding whether a formula
is provable or not), as we show next. The complexity does not in fact change even
when one adds exclusion to $\fill.$



\begin{theorem}
The tautology problems for BiILL and FILL are NP-complete. 
\end{theorem}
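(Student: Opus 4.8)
The plan is to prove NP-completeness by establishing membership in NP and NP-hardness separately, handling both BiILL and FILL together since the separation and conservativity results let us move freely between them.

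For \textbf{NP-membership}, the key resource is the deep inference calculus $\biilldn$ together with its subformula property. First I would observe that every rule of $\biilldn$ satisfies the subformula property: reading a rule upwards, every formula occurring in a premise is a subformula of some formula in the conclusion. In particular the logical rules strictly decrease the total number of connectives, and the propagation rules $pl_1, pr_1, pl_2, pr_2$ only relocate an \emph{existing} formula deeper into the nested structure without introducing new formulae. The main task is therefore to bound the size of a cut-free $\biilldn$-derivation polynomially in the size of the endsequent. I would argue that, since the multiset of formula-occurrences appearing anywhere in a derivation is drawn from the subformulae of the goal, the total number of formula-occurrences in any single nested sequent along a branch is bounded by a polynomial in the input size; and since each logical rule consumes a connective, each branch has length bounded by the number of connective-occurrences in the goal. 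The subtle point requiring care is the propagation rules, which do not consume connectives: I would show that repeated propagation of the same formula in the same direction cannot recur indefinitely — once a formula has been propagated into its target sequent it need never be moved back — so one can restrict to derivations in which the number of propagation steps between two consecutive logical steps is polynomially bounded by the nesting depth, itself bounded by the number of $\limp$ and $\excl$ occurrences. This gives a polynomially-bounded witness, so a nondeterministic machine can guess a derivation and verify it in polynomial time, placing the tautology problem in NP.

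For \textbf{NP-hardness}, I would reduce from a known NP-complete problem. Since BiILL and FILL extend multiplicative linear logic, the natural route is to encode an NP-complete fragment of linear logic, or more directly to reduce SAT. The cleanest approach is to exhibit a polynomial-time reduction from the tautology/provability problem of a multiplicative system already known to be NP-hard into FILL-provability, using the conservativity (Thm.~\ref{thm:filldn-conserv}) to transfer hardness from FILL up to BiILL for free. Concretely I would encode propositional formulae so that satisfiability corresponds to FILL-validity, taking care that the linear discipline (no contraction or weakening) is respected — the usual device is to introduce fresh atoms controlling the duplication of clause-gadgets so that a single use of each resource models a truth assignment. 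Because FILL is conservatively contained in BiILL, a formula is FILL-valid iff it is BiILL-valid, so the same reduction establishes hardness for both logics simultaneously.

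The main obstacle I anticipate is the NP-membership bound, specifically controlling the propagation rules: unlike the logical rules they do not decrease any obvious measure, so a naive search could loop. Ruling this out requires a precise argument that productive derivations need only finitely (indeed polynomially) many propagation steps — essentially a normal-form or termination argument for backward proof search. The paper's own remark that $\biilldn$ ``enjoys the subformula property and terminating backward proof search'' suggests this is exactly where the real work lies, and I would expect the detailed bound on derivation size, and hence the proof of termination of proof search with a polynomial certificate, to be the technical heart of the argument.
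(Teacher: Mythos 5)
Your NP-membership argument is essentially the paper's own (App.~F): it uses $\biilldn$, observes that each connective is introduced exactly once along a branch, and correctly isolates the propagation rules as the only rules not consuming connectives. Your proposed bound is also the paper's: each formula occurrence can be propagated at most $k$ times, where $k$ is the number of $\limp$ and $\excl$ occurrences in the goal (each propagation moves a formula across one sequent arrow, and arrows are created only by decomposing $\limp$ or $\excl$), giving polynomially many propagation steps between logical steps and a derivation of total size $O(|B|^5)$. The one point you gloss is verification: checking a branching inference requires deciding the merge relations $X[~]\in X_1[~]\merge X_2[~]$ and $\Tc\in\Tc_1\merge\Tc_2$, and the paper needs a unique-labelling scheme on sequent arrows to make this checkable in PTIME. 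This is minor but is part of ``verify the certificate in polynomial time.''

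The genuine gap is in the NP-hardness half. Your first option (reduce from an NP-hard multiplicative fragment) is the paper's route, but you leave unaddressed the step that actually carries the proof: FILL is \emph{strictly weaker} than MLL (Grishin~(a) fails), so a reduction whose correctness is established for MLL does not automatically transfer to FILL-provability --- an MLL-valid encoding of a satisfiable instance might simply not be FILL-valid. The paper's choice of Constants-Only MLL (COMLL, NP-hard by~\cite{Lincoln94}) is engineered precisely to close this gap: a COMLL formula in negation normal form contains no propositional variables and hence no negation or implication at all, and $\filldn$ restricted to $\otimes$, $\parr$, $I$, $\bot$ degenerates into the usual classical sequent calculus for COMLL, so FILL- and MLL-provability coincide on these formulae. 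Your fallback of encoding SAT directly ``with fresh atoms controlling the duplication of clause-gadgets'' is doubly problematic: multiplicative linear logic has no duplication to control (the known SAT encodings for MLL work quite differently), and any such encoding would again be verified against MLL semantics, leaving the FILL/MLL discrepancy unresolved. Without identifying a fragment on which the two logics provably agree, the hardness argument does not go through.
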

\begin{proof}
({\em Outline.}) 
Membership in NP is proved by showing that every cut-free proof of a formula $A$ in $\biilldn$
can be checked in PTIME in the size of $A$. This is not difficult
to prove given that each connective in $A$ is introduced exactly once in the proof. 
NP-hardness is proved by encoding Constants-Only MLL (COMLL), which is 
NP-hard~\cite{Lincoln94}, in $\filldn$. See App. F for details.
\end{proof}



\section{Conclusion}

We have given three cut-free sequent calculi for FILL without complex 
annotations, showing that, far from being a
curiosity that demands new approaches to proof theory, 
FILL is in a broad family of linear and substructural logics 
captured by display calculi.

Various substructural logics can be defined by using a (possibly
non-associative or non-commutative) multiplicative conjunction and its
left and right residual(s) (implications). Many of these logics have
cut-free sequent calculi with
comma-separated structures in the antecedent and a single formula in
the succedent.
Each of these logics has a dual  logic with disjunction and its
residual(s) (exclusions); their proof theory requires sequents built out
of comma-separated structures in the succedent and a single formula in the
antecedent. 
These logics can then be combined using numerous ``distribution
principles''
\cite{Grishin:Generalisation,DBLP:journals/jphil/Moortgat09}, of which
weak distributivity is but one example.
However, obtaining an adequate sequent calculus for these
combinations is often non-trivial. On the other hand, display calculi
for these logics, their duals, and their combinations, are extremely
easy to obtain using the known methodology for building display
calculi~\cite{Belnap82JPL,Gore:SLD}. We followed this methodology to obtain
$\biill$ in this paper, but needed a conservativity result to ensure the
resulting calculus $\biilldc$ was sound for FILL.
We finally note some specific variations on FILL deserving particular
attention.

\noindent
\textbf{Grishin (a).} Adding the converse of Grishin (b) to FILL
recovers MLL. For example $(B\limp\bot)\parr C\turn B\limp C$ is provable
using Grn(b), but its converse requires Grn(a). Thus there is another
`full' non-classical extension of MILL with Grishin (a) as its
interaction principle \emph{instead} of (b). We do not know what
significance this logic may have.

\noindent
\textbf{Mix rules.} It is easy to give structural rules for the \emph{mix}
sequents $A,B\turn A,B$ and $\Phi\turn\Phi$ which have been studied
in FILL \cite{Cockett:Proof,Bellin:Subnets} and so it is natural
to ask if the results of this paper
can be extended to
them. Intriguingly, our new structural connectives suggest a new mix rule with
sequent form $A<B\turn B>A$ which, given Grishin (b), is stronger than
the mix rule for comma (given Grishin (a), it is weaker).

\noindent
\textbf{Exponentials.} Adding exponentials~\cite{Brauner:Cut} to
our display calculus for
FILL may be possible~\cite{belnap-linear}.

\noindent
\textbf{Additives.} While it has been suggested that FILL could be
extended with additives, the only attempt in the literature is
erroneous~\cite{DBLP:conf/obpdc/GalmicheB95}. It is not clear how easy
this extension would be
\cite[Sec. 1]{Chang:Judgmental}; it is certainly not straightforward
with the display calculus. The problem is most easily seen through the
categorical semantics: additive conjunction $\adcn$ and its
unit $\adtp$ are limits, and $p\parr\mbox{-}$ is a right
adjoint in BiILL but is not necessarily so in FILL. But right adjoints
preserve limits. Then BiILL plus additives is
not conservative over FILL plus additives, because the sequents
$(p\parr q)\adcn(p\parr r)\turn p,(q\land r)$ and $\adtp\turn p,
\adtp$ are valid in the former but not the latter, despite the
absence of $\excl$ or $<$. 
We are
currently investigating solutions.



\bibliographystyle{abbrv}
\bibliography{biblio}

\newpage
\appendix
\section{Display Calculus}\label{app:display}


We outline the conditions that are easily checked to confirm that the
display calculi enjoy cut-elimination (Thm. \ref{thm:cutad}):

\begin{definition}[Belnap's Conditions C1-C8]
The set of display conditions appears in various guises in the literature. Here we follow
the presentation given in Kracht~\cite{kracht-power}.
\begin{description}
\item[(C1)] Each formula variable occurring in some premise of a rule $\rho$ is a subformula
of some formula in the conclusion of~$\rho$.
\item[(C2)] \textit{Congruent parameters} is a relation between parameters of the identical structure variable occurring 
in the premise and conclusion sequents. 
\item[(C3)] Each parameter is congruent to at most one structure variable in the conclusion. Equivalently,
no two structure variables in the conclusion are congruent to each other.
\item[(C4)] Congruent parameters are either all antecedent or all succedent parts of their respective sequent.
\item[(C5)] A formula in the conclusion of a rule~$\rho$ is either the entire
antecedent or the entire succedent. Such a formula is called a \textbf{principal formula} of~$\rho$.
\item[(C6/7)] Each rule is closed under simultaneous substitution of arbitrary structures for congruent
parameters.
\item[(C8)] If there are rules $\rho$ and $\sigma$ with respective
conclusions $X\turn A$ and $A\turn Y$ with formula $A$ principal in
both inferences (in the sense of C5) and if $cut$ is applied to
yield $X\turn Y$, then either $X\turn Y$ is identical to either
$X\turn A$ or $A\turn Y$; or it is possible to pass from the
premises of $\rho$ and $\sigma$ to $X\turn Y$ by means of inferences
falling under $cut$ where the cut-formula always is a proper
subformula of $A$. 
%
\end{description}
\end{definition}

\section{Conservativity of BiILL over FILL}\label{app:conserv}

Fig.~\ref{fig:filld} explicitly gives the proof rules for $\filldn$, the nested
sequent calculus with deep inference for FILL. These are easily derived from
$\biilldn$ (Fig.~\ref{fig:fbilld}).

\begin{figure}[t]
{\small


Propagation rules:

$$
\infer[pl_1]
{X[\Gamma, A \seq (\Gamma' \seq \Tc'), \Tc]}
{
 X[\Gamma \seq (A, \Gamma' \seq \Tc'), \Tc]
}
\qquad
\infer[pr_2]
{X[\Gamma \seq \Tc, (\Gamma' \seq \Tc', A)]}
{X[\Gamma \seq \Tc, A, (\Gamma' \seq \Tc')]}
$$

Identity and logical rules: In branching rules, $X[~] \in X_1[~] \merge X_2[~]$
and $\Tc \in \Tc_1 \merge \Tc_2$.
$$
\infer[id^d]
{X[p \seq p, \Vc]}
{\mbox{$X[~]$ and $\Vc$ are hollow.}}
\qquad
\infer[\punit_l^d]
{X[\punit \seq \Vc]}
{\mbox{$X[~]$ and $\Vc$ are hollow. }}
\qquad
\infer[\punit_r^d]
{X[\Gamma \seq \Tc, \punit]}
{X[\Gamma \seq \Tc]}
$$
$$
\infer[\tunit_l^d]
{X[\Gamma, \tunit \seq \Tc]}
{X[\Gamma \seq \Tc]}
\qquad
\infer[\tunit_r^d]
{X[\cdot \seq \tunit, \Vc]}
{\mbox{$X[~]$ and $\Vc$ are hollow. }}
$$
$$
\infer[\ltens_l^d]
{X[\Gamma, A \ltens B \seq \Tc]}
{X[\Gamma, A, B \seq \Tc]}
\qquad
\infer[\ltens_r^d]
{X[\Gamma_1,\Gamma_2 \seq A \ltens B, \Tc]}
{X_1[\Gamma_1 \seq A, \Tc_1] & X_2[\Gamma_2 \seq B, \Tc_2]}
$$
$$
\infer[\limp_l^d]
{X[\Gamma_1,\Gamma_2, A \limp B \seq \Tc]}
{
X_1[\Gamma_1 \seq A, \Tc_1] & X_2[\Gamma_2, B \seq \Tc_2]
}
\qquad
\infer[\limp_r^d]
{X[\Gamma \seq \Tc, A \limp B]}
{X[\Gamma \seq \Tc, (A \seq B)]}
$$
$$
\infer[\parr_l^d]
{X[\Gamma_1,\Gamma_2, A \parr B \seq \Tc]}
{
 X_1[\Gamma_1, A \seq \Tc_1]
 &
 X_2[\Gamma_2, B \seq \Tc_2]
}
\qquad
\infer[\parr_r^d]
{X[\Gamma \seq A \parr B, \Tc]}
{X[\Gamma \seq A, B, \Tc]}
$$
}
\caption{The deep inference system $\filldn$. }
\label{fig:filld}
\end{figure}

\begin{proof}[Proof of Lemma~\ref{lem:basic_arr}]
This is basic category theory; we give one example to illustrate the techniques
used. Given an arrow $f:A\otimes B\to C$, we get a new arrow $A\to B\limp C$ by
composing $B\limp f$ with the `co-evaluation' arrow $\eta:A\to B\limp(A\otimes
B)$.
\end{proof}

\begin{proof}[Proof of Lemma~\ref{lem:ctxt}]
By induction on the size of $X[~]$. The base case, where $X[~]$ is a hole, is
trivial. The induction case involves a context $\Gamma\seq X[~],\Tc$ and hence
requires an arrow
$$
\Gamma\limp X[S]\parr\Tc \quad\to\quad \Gamma\limp X[T]\parr\Tc.
$$
This exists by the induction hypothesis and the inductive definitions of
Lem.~\ref{def:free}. The validity of $X[S]$ then transfers to $X[T]$ via
composition with the arrow $I\to X[S]$.
\end{proof}

\begin{proof}[Proof of Lemma~\ref{lem:merg_tens}(ii) and (iii)]
(ii): The base case requires an arrow
\begin{equation}\label{eq:mt_2}
(\Gamma_1\limp A\parr\Tc_1)\otimes(\Gamma_2\otimes B\limp \Tc_2)\otimes
\Gamma_1\otimes\Gamma_2\otimes(A\limp B) \quad\to\quad \Tc.
\end{equation}
Applying an evaluation to the left of \eqref{eq:mt_2} gives $(A\parr\Tc_1)
\otimes(\Gamma_2\otimes B\limp \Tc_2)\otimes\Gamma_2\otimes(A\limp B)$; weak
distributivity gives $\Tc_1\parr(A\otimes(\Gamma_2\otimes B\limp \Tc_2)\otimes
\Gamma_2\otimes(A\limp B))$; two more evaluations give $\Tc_1\parr\Tc_2$ and
Lem.~\ref{lem:merg} completes the result.

(iii): The base case requires an arrow
\begin{equation}\label{eq:mt_3}
(\Gamma_1\otimes A\limp \Tc_1)\otimes(\Gamma_2\otimes B\limp \Tc_2)\otimes
\Gamma_1\otimes\Gamma_2\otimes(A\parr B) \quad\to\quad \Tc.
\end{equation}
Two applications of weak distributivity map the left of \eqref{eq:mt_3} to
$$
((\Gamma_1\otimes A\limp \Tc_1)\otimes\Gamma_1\otimes A)\parr
((\Gamma_2\otimes B\limp \Tc_2)\otimes\Gamma_2\otimes B).
$$
Two evaluations and Lem.~\ref{lem:merg} complete the result.
\end{proof}

\section{Annotated Sequent Calculi Proofs}

On the next page we present cut-free proofs of the Bierman example
\eqref{eq:Bierman} in the style of the three cut-free annotated sequent calculi
in the literature: that due to Bierman \cite{Bierman:Note}; that due to Bellin
reported in \cite{Bierman:Note}, and that due to Br\"{a}uner and de Paiva
\cite{Brauner:Formulation}. Note that all three proofs contain the same
sequence of proof rules; strip out the annotations and they are MLL proofs
of the sequent. The difference between the calculi lies in the nature of their
annotations, all of which come into play to verify that the final rule
application, of ($\limp R$), is legal. The reader is invited to compare these
proofs to those presented in the paper using display calculus (Fig.
\ref{Fig:Bierman}) and deep inference (Fig. \ref{fig:bierman-deep}).

\newpage
\begin{landscape}
Bierman-style proof; ($\limp R$) is legal because $v$ and $(w\parr x\limp y)
\parr z$ share no free variables.
\begin{prooftree}
\AxiomC{$v:a\turn v:a$}
\AxiomC{$w:b\turn w:b$}
\BinaryInfC{$v\parr w:a\parr b\turn v:a,w:b$}
\AxiomC{$x:c\turn x:c$}
\BinaryInfC{$(v\parr w)\parr x:(a\parr b)\parr c \turn v:a,w:b,x:c$}
\UnaryInfC{$(v\parr w)\parr x:(a\parr b)\parr c \turn v:a,w\parr x:b\parr c$}
\AxiomC{$y:d\turn y:d$}
\BinaryInfC{$(v\parr w)\parr x:(a\parr b)\parr c,w\parr x\limp y:b\parr c\limp d\turn v: a,y:d$}
\AxiomC{$z:e\turn z:e$}
\BinaryInfC{$(v\parr w)\parr x:(a\parr b)\parr c,(w\parr x\limp y)\parr z:(b\parr c\limp d)\parr e\turn v:a,y:d,z:e$}
\UnaryInfC{$(v\parr w)\parr x:(a\parr b)\parr c,(w\parr x\limp y)\parr z:(b\parr c\limp d)\parr e\turn v:a,y\parr z:d\parr e$}
\UnaryInfC{$(v\parr w)\parr x:(a\parr b)\parr c\turn v:a,\lambda(w\parr x\limp y)\parr z^{(b\parr c\limp d)\parr e}.(y\parr z):(b\parr c\limp d)\parr e\limp d\parr e$}
\end{prooftree}

Bellin-style proof; ($\limp R$) is legal because $r$ is not free in $\mbox{let }
t\mbox{ be }u\parr\mbox{- in let }u\mbox{ be }v\parr\mbox{- in }v$. We apologise
for the extremely small font size necessary to fit this proof on the page.
{\tiny
\begin{prooftree}
\AxiomC{$v:a\turn v:a$}
\AxiomC{$w:b\turn w:b$}
\BinaryInfC{$u:a\parr b\turn \mbox{let }u\mbox{ be }v\parr\mbox{- in }v:a,\mbox{let }u\mbox{ be -}\parr w\mbox{ in }w:b$}
\AxiomC{$x:c\turn x:c$}
\BinaryInfC{$t:(a\parr b)\parr c \turn \mbox{let }t\mbox{ be }u\parr\mbox{- in let }u\mbox{ be }v\parr\mbox{- in }v:a,\mbox{let }t\mbox{ be }u\parr\mbox{- in let }u\mbox{ be -}\parr w\mbox{ in }w:b,\mbox{let }t\mbox{ be -}\parr x\mbox{ in }x:c$}
\UnaryInfC{$t:(a\parr b)\parr c \turn \mbox{let }t\mbox{ be }u\parr\mbox{- in let }u\mbox{ be }v\parr\mbox{- in }v:a,(\mbox{let }t\mbox{ be }u\parr\mbox{- in let }u\mbox{ be -}\parr w\mbox{ in }w)\parr(\mbox{let }t\mbox{ be -}\parr x\mbox{ in }x):b\parr c$}
\AxiomC{$y:d\turn y:d$}
\BinaryInfC{$t:(a\parr b)\parr c,s:b\parr c\limp d\turn  \mbox{let }t\mbox{ be }u\parr\mbox{- in let }u\mbox{ be }v\parr\mbox{- in }v:a,(s(\mbox{let }t\mbox{ be }u\parr\mbox{- in let }u\mbox{ be -}\parr w\mbox{ in }w)\parr(\mbox{let }t\mbox{ be -}\parr x\mbox{ in }x)):d$}
\AxiomC{$z:e\turn z:e$}
\BinaryInfC{$t:(a\parr b)\parr c,r:(b\parr c\limp d)\parr e\turn  \mbox{let }t\mbox{ be }u\parr\mbox{- in let }u\mbox{ be }v\parr\mbox{- in }v:a,\mbox{let } r\mbox{ be }s\parr\mbox{- in }(s(\mbox{let }t\mbox{ be }u\parr\mbox{- in let }u\mbox{ be -}\parr w\mbox{ in }w)\parr(\mbox{let }t\mbox{ be -}\parr x\mbox{ in }x)):d,\mbox{let }s\mbox{ be -}\parr z\mbox{ in }z:e$}
\UnaryInfC{$t:(a\parr b)\parr c,r:(b\parr c\limp d)\parr e\turn  \mbox{let }t\mbox{ be }u\parr\mbox{- in let }u\mbox{ be }v\parr\mbox{- in }v:a,(\mbox{let } r\mbox{ be }s\parr\mbox{- in }(s(\mbox{let }t\mbox{ be }u\parr\mbox{- in let }u\mbox{ be -}\parr w\mbox{ in }w)\parr(\mbox{let }t\mbox{ be -}\parr x\mbox{ in }x)))\parr(\mbox{let }s\mbox{ be -}\parr z\mbox{ in }z):d\parr e$}
\UnaryInfC{$t:(a\parr b)\parr c\turn \mbox{let }t\mbox{ be }u\parr\mbox{- in let }u\mbox{ be }v\parr\mbox{- in }v:a,\lambda r^{(b\parr c\limp d)\parr e}.(\mbox{let } r\mbox{ be }s\parr\mbox{- in }(s(\mbox{let }t\mbox{ be }u\parr\mbox{- in let }u\mbox{ be -}\parr w\mbox{ in }w)\parr(\mbox{let }t\mbox{ be -}\parr x\mbox{ in }x)))\parr(\mbox{let }s\mbox{ be -}\parr z\mbox{ in }z):(b\parr c\limp d)\parr e\limp d\parr e$}
\end{prooftree}
}

Br\"{a}uner and de Paiva-style proof; ($\limp R$) is legal because $(b\parr c
\limp d)\parr e$ is not related to $a$.
\begin{prooftree}
\AxiomC{ \ }\LeftLabel{$(a,a)$\;}
\UnaryInfC{$a\turn a$}
\AxiomC{ \ }\RightLabel{\;$(b,b)$}
\UnaryInfC{$b\turn b$}\LeftLabel{$(a\parr b,a),(a\parr b,b)$\;}
\BinaryInfC{$a\parr b\turn a,b$}
\AxiomC{ \ }\RightLabel{\;$(c,c)$}
\UnaryInfC{$c\turn c$}\LeftLabel{$((a\parr b)\parr c,a),((a\parr b)\parr c,b),((a\parr b)\parr c,c)$\;}
\BinaryInfC{$(a\parr b)\parr c \turn a,b,c$}\LeftLabel{$((a\parr b)\parr c,a),((a\parr b)\parr c,b\parr c)$\;}
\UnaryInfC{$(a\parr b)\parr c \turn a,b\parr c$}
\AxiomC{ \ }\RightLabel{$(d,d)$}
\UnaryInfC{$d\turn d$}\LeftLabel{$((a\parr b)\parr c,a),((a\parr b)\parr c,d),(b\parr c\limp d,d)$\;}
\BinaryInfC{$(a\parr b)\parr c,b\parr c\limp d\turn a,d$}
\AxiomC{ \ }\RightLabel{\;$(e,e)$}
\UnaryInfC{$e\turn e$}\LeftLabel{$((a\parr b)\parr c,a),((a\parr b)\parr c,d),((b\parr c\limp d)\parr e,d),((b\parr c\limp d)\parr e,e)$\;}
\BinaryInfC{$(a\parr b)\parr c,(b\parr c\limp d)\parr e\turn a,d,e$}\LeftLabel{$((a\parr b)\parr c,a),((a\parr b)\parr c,d\parr e),((b\parr c\limp d)\parr e,d\parr e)$\;}
\UnaryInfC{$(a\parr b)\parr c,(b\parr c\limp d)\parr e\turn a,d\parr e$}\LeftLabel{$((a\parr b)\parr c,a),((a\parr b)\parr c,(b\parr c\limp d)\parr e\limp d\parr e)$\;}
\UnaryInfC{$(a\parr b)\parr c\turn a,(b\parr c\limp d)\parr e\limp d\parr e$}
\end{prooftree}

\end{landscape}

\section{The Shallow Nested Sequent Calculus}
\label{apx:shallow}

A structure can be interpreted as a multiset of nested sequents by 
replacing both $>$ and $<$ with the sequent arrow $\seq$, and interpreting the structural
connective `,' (comma) as multiset union, and $\Phi$ as the empty multiset. 
That is, the structure of a nested sequent incorporates implicitly the associativity and
commutativity of comma, and its unit, via the multiset structure. Conversely, a
nested sequent can be translated to an equivalence class of structures (modulo 
the associativity, commutativity and unit laws for `,') 
by replacing sequent arrows in negative positions with $<$, and those in positive positions with $>$. 
Given a nested sequent $X$, we shall write $\seqtodisp{X}$ to denote the corresponding (equivalence class of) 
structure in display calculus. Conversely, give a structure $X$, we write 
$\disptoseq{X}$ to denote the multiset of formulas/sequents that correspond to $X.$
\\

\noindent
{\bf Theorem \ref{thm:biill-sn-eq-dc}.} 
A formula $B$ is cut-free provable in $\biillsn$ iff it is cut-free provable in $\biilldc$. 

\begin{proof}
We show that cut-free $\biillsn$ can simulate cut-free $\biilldc$ and vice versa. To prove this, we need
to generalise slightly the statement to the following:
\begin{itemize}
\item If $(X \vdash Y)$ is cut-free provable in $\biilldc$ then $(\disptoseq{X} \seq \disptoseq{Y})$
is cut-free provable in $\biillsn$.
\item If $(X \seq Y)$ is cut-free provable in $\biillsn$ then $\seqtodisp{X} \vdash \seqtodisp{Y}$
is cut-free provable in $\biilldc.$
\end{itemize}
The first statement is easy, since the rules of $\biillsn$ are more general than $\biilldc.$ 
We show here the other direction. We illustrate here the case for the $\parr_l$ rule. 
For simplicity, we omit applications of structural
rules for associativity, commutativity and unit, and obvious applications of display postulates. 
$$
\infer[\parr_l]
{\Sc, \Sc', A \parr B \seq \Tc, \Tc'}
{\Sc, A \seq \Tc & \Sc', B \seq \Tc'}
\qquad
\leadsto
\qquad
\infer[]
{\seqtodisp{\Sc}, \seqtodisp{\Sc'}, A \parr B \vdash \seqtodisp{\Tc}, \seqtodisp{\Tc'}}
{
\infer[]
{\seqtodisp{\Sc},  A \parr B \vdash \seqtodisp{\Sc'} > (\seqtodisp{\Tc}, \seqtodisp{\Tc'})}
{
 \infer[]
 {\seqtodisp{\Sc},  A \parr B \vdash (\seqtodisp{\Sc'} > \seqtodisp{\Tc'}), \seqtodisp{\Tc}}
 {
  \infer[]
  {  A \parr B \vdash \seqtodisp{\Sc} > ((\seqtodisp{\Sc'} > \seqtodisp{\Tc'}), \seqtodisp{\Tc})}
  {
   \infer[]
   { A \parr B \vdash (\seqtodisp{\Sc} > \seqtodisp{\Tc}), (\seqtodisp{\Sc'} > \seqtodisp{\Tc'})}
   {
    \infer[]
    {A \vdash (\seqtodisp{\Sc} > \seqtodisp{\Tc})}
    {\seqtodisp{\Sc}, A \vdash \seqtodisp{\Tc}}
    &
    \infer[]
    {B \vdash (\seqtodisp{\Sc'} > \seqtodisp{\Tc'})}
    {\seqtodisp{\Sc'}, B \vdash \seqtodisp{\Tc'}}
   }
  }
 }
}
}
$$
\end{proof}

\section{The Equivalence Between Shallow and Deep Inference Calculi}
\label{apx:equiv}

\subsection{From deep inference to shallow inference}
\label{apx:deep-to-shallow}

\begin{lemma}[Weakening of hollow sequents]
\label{lm:weak}
The following rules are cut-free derivable in $\biillsn$:
$$
\infer[w_l]
{X, \Uc \seq \Vc}
{\Uc \seq \Vc}
\qquad
\infer[w_r]
{\Uc \seq \Vc, X}
{\Uc \seq \Vc}
$$
provided $X$ is a hollow sequent.
\end{lemma}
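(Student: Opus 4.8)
The plan is to prove a \emph{deep} version of weakening --- that a hollow sequent may be inserted at an arbitrary position of a nested sequent --- by induction on the size of $X$ (its number of $\seq$ nodes), and then to read off $w_l$ and $w_r$ as the special case where the insertion happens at the top level. Write $W(X)$ for the statement: for every context $Z[~]$ such that $Z[\Sc]$ is cut-free $\biillsn$-derivable, the sequent $Z[\Sc, X]$ is cut-free derivable, where $X$ is added to the succedent multiset at the hole if $Z[~]$ is positive and to the antecedent multiset if $Z[~]$ is negative. Taking $Z[~]$ to be a top-level context recovers exactly $w_r$ (positive hole) and $w_l$ (negative hole).

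First I would record the reduction of deep insertion to \emph{top-level} insertion. Given a positive context $Z[~]$, the display property (Prop.~\ref{prop:display}) produces, using only $\drp_1, \drp_2, \rp_1, \rp_2$, a derivation of $\Tc \seq \Sc$ from $Z[\Sc]$ that displays $\Sc$ as the entire succedent; as each of these four rules is invertible, this derivation can also be run backwards. Hence it suffices to pass from $\Tc \seq \Sc$ to $\Tc \seq \Sc, X$ at the top level and then replay the display derivation in reverse. The negative case is dual, displaying $\Sc$ on the left. This reduces $W(X)$ to top-level weakening by the \emph{same} $X$, at the cost of the structural rules only.

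Next I would treat top-level weakening by induction on $X$. For the base case $X = (\cdot \seq \cdot)$ I would give the explicit cut-free derivations
$$
\Uc \seq \Vc \xrightarrow{\drp_1} (\Uc \seq \Vc) \seq \cdot \xrightarrow{\rp_1} (\Uc \seq \Vc) \seq (\cdot \seq \cdot) \xrightarrow{\drp_2} \Uc \seq \Vc, (\cdot \seq \cdot)
$$
for $w_r$, and its mirror image
$$
\Uc \seq \Vc \xrightarrow{\rp_1} \cdot \seq (\Uc \seq \Vc) \xrightarrow{\drp_1} (\cdot \seq \cdot) \seq (\Uc \seq \Vc) \xrightarrow{\rp_2} (\cdot \seq \cdot), \Uc \seq \Vc
$$
for $w_l$. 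For the inductive step, let $X = (\Sc_X \seq \Tc_X)$ with $\Sc_X = X_1,\dots,X_k$ and $\Tc_X = Y_1,\dots,Y_l$, each $X_i$ and $Y_j$ a strictly smaller hollow sequent. Starting from $\Uc \seq \Vc$ I first apply the base case to insert $(\cdot \seq \cdot)$ on the right. The antecedent of this new inner sequent is a negative context and its succedent a positive one, so by the induction hypotheses $W(X_i)$ and $W(Y_j)$ (deep weakening by the smaller components) I insert each $X_i$ into the inner antecedent and each $Y_j$ into the inner succedent, building up $(\Sc_X \seq \Tc_X) = X$. This is non-circular: top-level weakening by $X$ appeals only to \emph{deep} weakening by components strictly smaller than $X$.

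The main obstacle is justifying the reverse-display step of the reduction: after displaying the hole content $\Sc$ as a whole side and weakening $X$ onto it, I must argue that replaying the display derivation backwards deposits $X$ precisely at the original hole. This works because the display postulates $\drp_i, \rp_i$ only ever relocate entire substructures across a $\seq$ and are schematic in the moved multiset, so the displayed block travels as a unit; appending $X$ to it and reversing the (invertible) derivation therefore reroutes $X$ into the hole while leaving the surrounding structure untouched. Making precise this claim that $\Sc$ and $\Sc, X$ follow the same display path --- in essence, that the display derivation never splits the displayed multiset --- is the one point requiring care, and everything else is routine rule-chasing.
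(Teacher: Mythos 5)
Your proof is correct, but it takes a noticeably longer route than the paper, whose entire proof of this lemma is the single line ``by induction on the size of $X$''. A direct induction needs neither your deep-weakening statement $W(X)$ nor the display-property reduction: writing $X = (X_1,\dots,X_k \seq Y_1,\dots,Y_l)$, where every $X_i$ and $Y_j$ is a strictly smaller hollow sequent, you derive $w_r$ from $\Uc \seq \Vc$ by first inserting each $Y_j$ on the right using the induction hypothesis, obtaining $\Uc \seq \Vc, Y_1,\dots,Y_l$; then one application of $drp_1$ (splitting the succedent as $\Vc$ against $Y_1,\dots,Y_l$) gives $(\Uc \seq \Vc) \seq Y_1,\dots,Y_l$; the induction hypothesis for $w_l$ inserts each $X_i$ on the left; and $rp_1$ followed by $drp_2$ reassembles $\Uc \seq \Vc, (X_1,\dots,X_k \seq Y_1,\dots,Y_l)$. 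The case of $w_l$ is symmetric, and your base-case chain $drp_1;rp_1;drp_2$ is exactly the degenerate case $k=l=0$ of this derivation, so the whole lemma goes through with just the four structural rules and a simultaneous induction on $w_l$ and $w_r$. What your detour buys is a genuinely stronger, reusable fact --- weakening by a hollow sequent at arbitrary depth --- and your one flagged point of care (that the display derivation carries the displayed multiset as a single unsplit schematic block, so the reverse replay deposits $X$ exactly at the hole) is real but correctly discharged: the display postulates are schematic in the moved multisets, and the paper itself relies on precisely this uniformity of display steps in Lem.~\ref{lm:context-merge} and in the proof of Thm.~\ref{thm:soundness}. The cost is that your argument additionally depends on the invertibility pairing of $drp_1$/$drp_2$ and $rp_1$/$rp_2$ and on making the ``no splitting'' claim precise, none of which the elementary induction requires.
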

\begin{proof}
By induction on the size of $X.$
\end{proof}


\begin{lemma}
\label{lm:idd}
The rule $id^d$ is cut-free derivable in $\biillsn$.
\end{lemma}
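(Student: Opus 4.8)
The plan is to derive the innermost ``axiom'' sequent $\Uc, p \seq p, \Vc$ as a top-level sequent, and then to rebuild the surrounding hollow context $X[~]$ one nesting level at a time, using the invertible display postulates $\rp_1$ and $\drp_1$ of $\biillsn$ together with the hollow-weakening rules of Lem.~\ref{lm:weak}. Since all three of $X[~]$, $\Uc$ and $\Vc$ contain no formulae, every structure I introduce along the way will be hollow, which is exactly what makes weakening applicable.

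First I would derive the sequent $\Uc, p \seq p, \Vc$. Starting from the axiom $p \seq p$ (rule $id$), I use $w_l$ of Lem.~\ref{lm:weak} to weaken each hollow sequent of $\Uc$ into the antecedent and $w_r$ to weaken each hollow sequent of $\Vc$ into the succedent, yielding the desired top-level sequent $S_0 := (\Uc, p \seq p, \Vc)$. Next I would prove, by induction on the nesting depth of the hole in $X[~]$, the claim that for every hollow context $X[~]$ and every cut-free $\biillsn$-derivable top-level sequent $Z$, the sequent $X[Z]$ is cut-free $\biillsn$-derivable; instantiating $Z := S_0$ then gives the lemma. The base case is the empty context $X[~] = [~]$, where $X[Z] = Z$ is derivable by hypothesis.

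For the induction step I peel off the innermost nesting edge, i.e.\ the edge joining the hole to its parent node $N$; its polarity is precisely whether $X[~]$ is positive or negative. If $X[~]$ is positive (the hole sits in a succedent), I apply $\rp_1$ with empty antecedent to turn the derivable $Z$ into $\sunit \seq Z$, nesting $Z$ as a single element on the right; if $X[~]$ is negative, I apply $\drp_1$ with empty succedent to turn $Z$ into $Z \seq \sunit$, nesting it on the left. In either case I then use $w_l$ and $w_r$ (Lem.~\ref{lm:weak}) to weaken in the remaining hollow content of node $N$ (hollow because $X[~]$ is a hollow context), obtaining a cut-free derivable top-level sequent $N[Z]$. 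The remaining context, with its hole now at $N$'s position, has strictly smaller depth, so the induction hypothesis applied to it and to $N[Z]$ yields $X[Z]$.

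The only point requiring care is bookkeeping rather than genuine difficulty: I must check that each $\rp_1$/$\drp_1$ step is a legitimate instance with the chosen empty side, and that every structure weakened in is genuinely hollow — both of which follow immediately from the definition of a hollow context and from the fact that $\Uc$ and $\Vc$ contain no formulae. Since the construction uses only $id$, the display postulates $\rp_1$ and $\drp_1$, and the cut-free rules of Lem.~\ref{lm:weak}, the resulting $\biillsn$-derivation of $X[\Uc, p \seq p, \Vc]$ is cut-free, as required.
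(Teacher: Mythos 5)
Your proof is correct, and it arrives at the same two essential ingredients as the paper's proof — reduce to the axiom $p \seq p$ and absorb all the surrounding hollow material using the weakening rules of Lem.~\ref{lm:weak} — but it organises the derivation in a genuinely different way. The paper argues top-down: it invokes the display property (Prop.~\ref{prop:display}) to transform $X[\Uc, p \seq p, \Vc]$ into a display-equivalent top-level sequent $\Uc', \Uc, p \seq p, \Vc$, observes (without detailed justification) that the residue $\Uc'$ must be a multiset of hollow sequents, and then discharges everything by a single block of weakenings sitting above a single block of display inferences, treating the negative-context case ``dually''. You instead rebuild the context bottom-up by induction on the nesting depth of the hole, nesting the already-derived sequent one level at a time via $\rp_1$ (for a positive edge) or $\drp_1$ (for a negative edge) instantiated with an empty side, and interleaving applications of $w_l$/$w_r$ to absorb the hollow content of each parent node as you go; both instantiations are legitimate since the multisets in the $\biillsn$ rules may be empty. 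In effect you inline a proof of precisely the instance of the display property that the paper cites, and your version buys some things: it uses only two of the four display postulates, it handles polarity locally edge-by-edge instead of by a global positive/negative case split, and it makes explicit the hollowness bookkeeping that the paper compresses into the word ``clearly'' (everything you weaken in is hollow by definition of a hollow context). The trade-off is brevity: the paper's proof is three lines because Prop.~\ref{prop:display} does the heavy lifting. As a minor point in your favour, the paper's proof sketch writes the axiom as $A \seq A$ although the $id$ rule of $\biillsn$ is stated only for propositional variables; your construction correctly starts from $p \seq p$, matching the actual statement of $id^d$.
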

\begin{proof}
We show that $X[\Sc, A \seq A, \Tc]$ is provable in $\biillsn$,
where $X[~]$, $\Sc$ and $\Tc$ are hollow. 
We show the case where $X[~]$ is a positive context; the other case where $X[~]$ is negative
can be proved dually. 
Note that by the display property (Proposition \ref{prop:display}), the sequent $X[\Sc, A \seq A, \Tc]$
is display-equivalent to $\Uc, \Sc, A \seq A, \Tc$ for some $\Uc$.
Clearly the structure $\Uc$ here must be a multiset of hollow sequents. The derivation
is thus constructed as follows: 
$$
\infer[\mbox{Prop. \ref{prop:display} }]
{X[\Sc, A \seq A, \Tc]}
{
\infer[wr;wl]
{\Uc, \Sc, A \seq A, \Tc}
{
 \infer[id]
 {A \seq A}{}
}
}
$$
\end{proof}

\noindent{\bf{Lemma \ref{lm:dist}.}} 
The following rules are derivable in $\biillsn$ without cut:
$$
\infer[dist_l]
{(\Xc_1, \Xc_2  \seq \Yc_1,\Yc_2), \Uc \seq \Vc}
{
(\Xc_1 \seq \Yc_1), (\Xc_2 \seq \Yc_2), \Uc \seq \Vc
}
\qquad
\infer[dist_r]
{\Uc \seq \Vc, (\Xc_1, \Xc_2  \seq \Yc_1,\Yc_2)}
{
 \Uc \seq \Vc, (\Xc_1 \seq \Yc_1), (\Xc_2 \seq \Yc_2)
}
$$
\begin{proof}
We show here a derivation of $dist_l$. The $dist_r$ rule can be derived similarly. 
$$
\infer[rp_2]
{(\Xc_1, \Xc_2  \seq \Yc_1,\Yc_2), \Uc \seq \Vc}
{
 \infer[drp_1]
 {(\Xc_1, \Xc_2  \seq \Yc_1,\Yc_2) \seq (\Uc \seq \Vc)}
 {
   \infer[drp_2]
   {\Xc_1, \Xc_2  \seq \Yc_1,\Yc_2, (\Uc \seq \Vc)}
   {
    \infer[gl]
    {(\Xc_1, \Xc_2  \seq \Yc_2) \seq \Yc_1, (\Uc \seq \Vc)}
    {
      \infer[rp_2]
      {\Xc_1, (\Xc_2  \seq \Yc_2) \seq \Yc_1, (\Uc \seq \Vc)}  
      {
       \infer[gr]
       {\Xc_1 \seq ((\Xc_2  \seq \Yc_2) \seq \Yc_1, (\Uc \seq \Vc))}  
       {
         \infer[drp_2]
         {\Xc_1 \seq \Yc_1, ((\Xc_2  \seq \Yc_2) \seq (\Uc \seq \Vc))}  
         {
           \infer[rp_1]
           {(\Xc_1 \seq \Yc_1) \seq ((\Xc_2  \seq \Yc_2) \seq (\Uc \seq \Vc))}  
           {
            \infer[rp_1]
            {(\Xc_1 \seq \Yc_1), (\Xc_2  \seq \Yc_2) \seq (\Uc \seq \Vc)}
            {
              {(\Xc_1 \seq \Yc_1), (\Xc_2  \seq \Yc_2), \Uc \seq \Vc}
            }  
           }
         }
      }
     }
    }
    }
 }
}
$$
\end{proof}

\subsection{From shallow inference to deep inference}
\label{apx:shallow-to-deep}

\begin{lemma}
\label{lm:idd-permute}
Suppose the $id^d$ rule is applicable to $X$. Suppose also that $X$ is the premise of
an instance of a rule in $\{rp_1,rp_2,drp_1,drp_2,gl,gr \}$ and suppose $X'$
is the conclusion of the same rule instance. Then $X'$ is derivable in $\biilldn.$
\end{lemma}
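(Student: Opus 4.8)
The plan is to exploit the very restricted shape of any sequent to which $id^d$ applies. By definition $X = X[\Uc, p \seq p, \Vc]$ where the context $X[~]$ and the multisets $\Uc$ and $\Vc$ are all hollow, so $X$ contains exactly two formula occurrences, the left $p$ and the right $p$, sitting on opposite sides of the $\seq$ of a single node $N$, with everything else hollow. Each rule in $\{rp_1, rp_2, drp_1, drp_2, gl, gr\}$ merely rearranges the bracketing and comma structure at the root; none creates, deletes, or duplicates a formula occurrence, nor does any of them alter whether a given occurrence is an antecedent or a succedent part. Hence $X'$ again contains only these two occurrences of $p$, is hollow elsewhere, and still has one $p$ as an antecedent part and one as a succedent part. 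The whole argument therefore reduces to tracking where these two $p$'s land in $X'$ and showing they can always be returned to a common node so that $id^d$ fires.

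First I would dispose of the easy case, where the node $N$ lies properly inside a subtree that the root-level rule moves wholesale. A shallow structural rule only reshapes the root-level structure, wrapping or unwrapping root-level multisets, so it can never split a \emph{nested} node; thus $N$ survives intact in $X'$, surrounded by hollow material, and $id^d$ applies to $X'$ directly. The same conclusion holds whenever the rule leaves both $p$'s in one node. This covers $rp_2$ and $drp_2$, which collapse a nested sequent into its parent: there the $id^d$-node must be that nested sequent (its parent's relevant side carries no formula), and after the rule both $p$'s sit together in the parent, so $id^d$ still applies directly.

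The only remaining possibility is that $N$ is the root node and the rule distributes its two sides into two adjacent parent/child nodes. Here I would run a short case analysis over the four ``splitting'' rules $rp_1$, $drp_1$, $gl$, $gr$. In each, the antecedent $p$ and succedent $p$ end up in a parent node and one of its children, and a single propagation step reunites them: for $rp_1$ and $gr$ the newly formed nested sequent sits on the \emph{right} of its parent, and one application of $pl_1$ carries the antecedent $p$ into that child; for $drp_1$ and $gl$ the nested sequent sits on the \emph{left}, and one application of $pr_1$ carries the succedent $p$ into it. In every case the premise of this propagation step is once more hollow except for a single node holding $p$ on the left and $p$ on the right, i.e.\ an instance of $id^d$; applying the propagation rule below it then yields $X'$. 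Thus $X'$ has a two-line $\biilldn$ derivation: $id^d$ above a single $pl_1$ or $pr_1$.

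The only delicate point is the bookkeeping in this last case: for each splitting rule one must correctly identify which side the child lands on, and hence whether $pl_1$ or $pr_1$ (and in which direction) reunites the atoms. This is finite and mechanical, and the fact that the two $p$'s start on opposite sides of a single $\seq$ and remain an antecedent/succedent pair throughout guarantees that exactly one propagation rule is applicable, so the case analysis never stalls. I expect no genuine obstacle beyond getting these polarities straight.
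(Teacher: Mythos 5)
Your proposal is correct and takes essentially the same route as the paper's proof: a case analysis on where the identity node sits, with the easy cases (node untouched, or both atoms kept/collapsed into one node, as for $rp_2$ and $drp_2$) discharged directly, and the splitting cases ($rp_1$, $drp_1$, $gl$, $gr$) discharged by a single propagation rule placed beneath an $id^d$ axiom. The only cosmetic differences are that you place the intermediate $id^d$ instance in the child node and use $pl_1$/$pr_1$ where the paper places it in the parent and uses $pr_2$/$pl_2$ (both single-step routes work in every case, so your remark that \emph{exactly} one propagation rule is applicable is a harmless overstatement), and that for $gl$/$gr$ the nested sequent is pre-existing rather than ``newly formed''.
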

\begin{proof}
Since $id^d$ is applicable to $X$, it must be the case that
$X = Y[\Sc, A \seq A, \Tc]$ for some $A$, hollow context $Y[~]$, 
and hollow sequents $\Sc$ and $\Tc.$
We do case analyses on how the rule $\rho$ affects $X$. If $\rho$ changes the
structure of $Y[~]$ only, but leave $[\Sc, A \seq A, \Tc]$ intact, 
i.e., $X' = Y'[\Sc, A \seq A, \Tc]$, then obviously $Y'[~]$ must also be
a hollow sequent, so the $id^d$ rule is applicable. 
The interesting case is when $\rho$ affects the subsequent $(\Sc, A \seq A, \Tc)$, i.e.,
when exactly of the $A$'s is moved by $\rho$ to a different nested sequent. 
We show here the interesting cases; the
others can be proved similarly. 
In all cases, these structural rules
can be replaced by propagation rules of $\biilldn.$
\begin{itemize}
\item $\rho$ is
$$
\infer[drp_1]
{(\Sc, A \seq \Uc) \seq A, \Vc}
{
\Sc, A \seq A, \Uc, \Vc
}
$$
where $\Tc = (\Uc, \Vc).$ Then the derivation of $X'$ is as follows:
$$
\infer[pl_2]
{(\Sc, A \seq \Uc) \seq A, \Vc}
{
\infer[id^d]
{A, (\Sc \seq \Uc) \seq A, \Vc}
{}
}
$$

\item $\rho$ is
$$
\infer[rp_1]
{\Uc, A \seq (\Vc \seq A, \Tc)}
{
\Uc, \Vc, A \seq A, \Tc
}
$$
where $\Sc = (\Uc, \Vc).$ The sequent $X'$ is derived as follows:
$$
\infer[pr_2]
{\Uc, A \seq (\Vc \seq A, \Tc)}
{
\infer[id^d]
{\Uc, A \seq (\Vc \seq \Tc), A}
{}
}
$$
\end{itemize}
\end{proof}

\begin{lemma}
\label{lm:const}
Suppose the $\punit$ rule (resp. the $\tunit$ rule)is applicable to $X.$ Suppose $X$ is
the premise of an instance of a rule in $\rho \in \{rp_1,rp_2,drp_1,drp_2,gl,gr \}$ and suppose
$X'$ is the conclusion of the same rule. Then $X'$ is derivable in $\biilldn.$
\end{lemma}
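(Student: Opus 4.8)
The plan is to follow the template of Lem.~\ref{lm:idd-permute}, while exploiting the fact that an axiomatic constant rule introduces only a \emph{single} formula occurrence rather than a matched pair. I read ``the $\punit$ rule'' as the axiom $\punit_l^d$ and ``the $\tunit$ rule'' as the axiom $\tunit_r^d$ (the two constant rules that are leaves, exactly as $id^d$ is in Lem.~\ref{lm:idd-permute}); I would carry out the $\punit$ case in full and recover the $\tunit$ case by the evident left/right duality. Applicability of $\punit_l^d$ to $X$ means $X = Y[\punit, \Uc \seq \Vc]$ with $Y[~]$, $\Uc$, $\Vc$ all hollow, so $X$ has exactly one formula occurrence, namely $\punit$, sitting immediately to the left of a sequent arrow, with everything else hollow.

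The key observation I would isolate first is that every rule $\rho \in \{rp_1, rp_2, drp_1, drp_2, gl, gr\}$ preserves, for each formula occurrence, its \emph{polarity} in the sense of Sec.~\ref{sec:deep}: a formula immediately to the left (respectively right) of its enclosing $\seq$ in the premise remains immediately to the left (right) of its enclosing $\seq$ in the conclusion. I would also note that $\rho$ neither creates nor deletes formula occurrences and sends hollow substructures to hollow substructures. Both facts are read straight off the shapes of the display postulates and the Grishin rules, which only relocate whole substructures while respecting the overloaded reading of $\seq$; this is the same structural invariance already exploited in the proof of Lem.~\ref{lm:context-merge}.

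Granting this, the argument is short: $X'$ again has exactly one formula occurrence, still $\punit$, still immediately to the left of some $\seq$, with all surrounding structure hollow; hence $X'$ matches the conclusion pattern of $\punit_l^d$ and is derivable in $\biilldn$ by one application of that axiom. Unlike Lem.~\ref{lm:idd-permute}, \emph{no} propagation rule is required, precisely because there is a single occurrence to track instead of a matched pair that a structural rule might scatter into two different nodes. To present this cleanly I would still run a case analysis on $\rho$, checking for each of the (at most two) negative sub-positions that the lone $\punit$ can occupy in the premise that the corresponding conclusion position is again negative and hollow-surrounded; for instance with $\rho = gl$, premise $(\Sc \seq \Sc'), \Tc \seq \Tc'$ and conclusion $(\Sc, \Tc \seq \Sc') \seq \Tc'$, the $\punit$ lies in $\Sc$ or in $\Tc$ and in either case lands immediately to the left of the inner sequent arrow.

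The only real work, and the step I expect to be the main obstacle, is stating and justifying the polarity-preservation invariant carefully enough to cover every way each $\rho$ can act on the node carrying the constant (as opposed to merely rearranging the hollow context around it). Once that bookkeeping is pinned down, the per-rule verification is routine, and the $\tunit_r^d$ argument, with $\tunit$ sitting in a positive position, follows by symmetry.
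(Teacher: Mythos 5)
Your proposal is correct. The paper in fact states Lem.~\ref{lm:const} without proof, evidently regarding it as a routine variant of Lem.~\ref{lm:idd-permute}, so your argument fills that omission rather than diverging from a written one. Your reading of ``the $\punit$ rule'' and ``the $\tunit$ rule'' as the axiomatic $\punit_l^d$ and $\tunit_r^d$ is the right one (it is forced by the parallel with Lem.~\ref{lm:idd-permute} and by the fact that applicability must entail derivability of $X$ itself), and your key invariant is genuinely true: each of $rp_1, rp_2, drp_1, drp_2, gl, gr$ relocates whole multisets while keeping every left-of-$\seq$ multiset on the left of some $\seq$ and every right-of-$\seq$ multiset on the right (this can be read off all six rule shapes in Fig.~\ref{fig:fbills}), and none of these rules creates or destroys formula occurrences, so hollowness of the surrounding material is preserved. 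Hence the lone $\punit$ (resp.\ $\tunit$) in $X' $ again sits immediately left (resp.\ right) of some sequent arrow with hollow context and side multisets, and one application of the axiom closes the derivation. Your observation that, unlike the $id^d$ case, no propagation rules are needed is exactly the point that makes this lemma degenerate: the propagation rules in Lem.~\ref{lm:idd-permute} are needed only because a structural rule can separate the two occurrences of the identity pair into different nodes, and with a single occurrence there is nothing to separate. The per-rule check you sketch (at most two candidate positions per rule, plus the trivial case where the constant is buried in a nested sequent that the rule moves intact) is complete and matches the paper's style of case analysis.
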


To prove the following lemma, it is useful to consider a generalisation of the rules $gl$, $gr$, $drp_2$ and $rp_2$: 
$$
\infer[eg]
{\Sc, \Sc' \seq \Tc, \Tc'}
{(\Sc \seq \Tc), \Sc' \seq \Tc'}
\qquad
\infer[ig]
{\Sc, \Sc' \seq \Tc, \Tc'}
{
\Sc \seq (\Sc' \seq \Tc'), \Tc
}
$$
These two rules can be derived using $gl$, $gr$, $drp_2$ and $rp_2$ as follows:
$$
\infer[drp_2]
{\Sc, \Sc' \seq \Tc, \Tc'}
{
\infer[gl]
{(\Sc, \Sc' \seq \Tc) \seq \Tc'}
{
\Sc', (\Sc \seq \Tc) \seq \Tc'
}
}
\qquad
\qquad
\infer[rp_2]
{\Sc, \Sc' \seq \Tc, \Tc'}
{
\infer[gr]
{\Sc \seq (\Sc' \seq \Tc, \Tc')}
{
\Sc \seq (\Sc' \seq \Tc'), \Tc
}
}
$$
Conversely, $gl$, $gr$, $drp_2$ and $rp_2$ can be derived using $drp_1,rp_1,eg$ and $ig$:
$$
\infer[gl]
{(\Sc, \Tc \seq \Sc') \seq \Tc'}
{(\Sc \seq \Sc'), \Tc \seq \Tc'}
\qquad
\leadsto
\qquad
\infer[drp_1]
{(\Sc, \Tc \seq \Sc') \seq \Tc'}
{
 \infer[eg]
 {\Sc, \Tc \seq \Sc', \Tc'}
 {
  (\Sc \seq \Sc'), \Tc \seq \Tc'
 }
}
$$
$$
\infer[gr]
{\Sc \seq (\Sc' \seq \Tc', \Tc)}
{
 \Sc \seq (\Sc' \seq \Tc'), \Tc
}
\qquad
\leadsto
\qquad
\infer[rp_1]
{\Sc \seq (\Sc' \seq \Tc', \Tc)}
{
 \infer[ig]
 {\Sc, \Sc' \seq \Tc', \Tc}
 {
  \Sc \seq (\Sc' \seq \Tc'), \Tc
 }
}
$$
Note that $drp_2$ and $rp_2$ are just special cases of $eg$ and $ig.$
Lemma~\ref{lm:permute} then follows from the following lemma. 

\begin{lemma}
The rules $drp_1,rp_1,eg$ and $ig$ permute up over all logical rules of $\biilldn.$
\end{lemma}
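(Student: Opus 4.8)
The plan is to proceed by a finite case analysis over each pair $(\rho,r)$, where $\rho\in\{drp_1,rp_1,eg,ig\}$ and $r$ ranges over the logical rules of $\biilldn$, exhibiting in each case a derivation of the conclusion of $\rho$ in which $r$ is applied first and only structural material is moved by $\rho$ (or its propagation-rule replacement) above $r$. The single fact that drives every case is the one already highlighted for $\biilldn$: the premises and the conclusion of each rule carry the same nested tree structure, the active node excepted. I would first split according to whether the node reshaped by $\rho$ is disjoint from, or coincides with, the node at which $r$ introduces its principal connective.

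In the disjoint case, $\rho$ merely reshuffles root-level commas and nestings around an intact subsequent inside which $r$ operates. Tree-structure preservation then guarantees that the very same instance of $\rho$ applies to the premise(s) of $r$, so $\rho$ and $r$ commute on the nose. For the branching rules $\ltens_r^d$, $\limp_l^d$, $\parr_l^d$ and $\excl_r^d$ the conclusion context is a merge $X[~]\in X_1[~]\merge X_2[~]$; since both $X_1[~]$ and $X_2[~]$ share the tree shape of $X[~]$, and $\rho$ is determined by that shape, the tree-shape argument underlying Lem.~\ref{lm:context-merge} lets $\rho$ lift unambiguously into each premise.

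The substantive cases are those in which $\rho$ reshapes the very node where $r$ introduces its principal connective. The uniform device --- the one displayed in the outline of Lem.~\ref{lm:permute}, where $rp_1$ is permuted over the left implication rule by means of $pl_1$ --- is to relocate the principal formula using whichever of the propagation rules $pl_1,pl_2,pr_1,pr_2$ matches the nesting involved, apply $r$ at the new position, and send the residual, now purely structural, instance of $\rho$ into the premise(s). Here $drp_1$ and $rp_1$ build a nested subsequent (on the antecedent and succedent side respectively), while $eg$ and $ig$ dissolve one; accordingly the principal formula is either carried into the freshly created nesting or released from the dissolved one by the corresponding propagation step, so that $r$ can still fire.

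The step I expect to be hardest is the interaction of the Grishin rules $eg$ and $ig$ --- which encode weak distributivity by sliding an entire nested subsequent across a comma --- with the branching left rules $\limp_l^d$ and $\parr_l^d$. In these subcases one must simultaneously propagate the principal formula and re-establish the three-way partition $\Sc\in\Sc_1\merge\Sc_2$, $\Tc\in\Tc_1\merge\Tc_2$, $X[~]\in X_1[~]\merge X_2[~]$ in the rewritten derivation, checking both that the partition survives the propagation and that $eg$ (resp. $ig$) lifts compatibly into both premises. Once this bookkeeping is discharged, every remaining combination is either of the disjoint kind or an instance of the same propagate-then-lift pattern, which completes the case analysis.
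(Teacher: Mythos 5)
Your overall architecture matches the paper's proof — split on whether the structural rule touches the node carrying the principal formula, and lift via tree-shape preservation and merge-compatibility in the disjoint case — but you have inverted which interactions actually need the propagation device. In the paper's proof, $eg$ and $ig$ commute with the branching left rules \emph{directly}: since $\biilldn$ rules apply deeply, once $eg$ (resp.\ $ig$) is lifted into both premises the principal formula simply sits at the flattened node and $\limp_l^d$ or $\parr_l^d$ fires there; no propagation step and no partition bookkeeping beyond the usual merge-lifting is required (the partitions of $\Sc$, $\Sc'$, $\Tc$, $\Tc'$ pass through unchanged). Indeed your propagate-then-lift recipe does not assemble for these cases: read downwards, $pl_1$ moves a formula out of a still-intact succedent-side nesting, so inserting it below the logical rule leaves a nesting that must still be dissolved by an $ig$ \emph{below} that rule — exactly the configuration you were trying to eliminate. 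The cases that genuinely need a propagation rule are the two the paper isolates: $rp_1$ over $\limp_l^d$ (repaired with $pl_1$) and, dually, $drp_1$ over $\excl_r^d$ (repaired with $pr_1$), i.e.\ precisely the display postulates that create a fresh nesting swallowing the material the principal formula must interact with. So your ``hardest step'' is in fact the easy one, and had you worked the cases out as planned, the $eg$/$ig$ subcases would have stalled until you fell back on direct commutation.

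The larger omission is that your case analysis never lets $r$ be a propagation rule. The paper's proof devotes roughly half its cases to permuting $drp_1$, $rp_1$, $eg$, $ig$ over $pl_1$, $pl_2$, $pr_1$, $pr_2$, and these are not disjoint-commutation instances: when the structural rule \emph{enables} the propagation, the propagation step must be replaced by a different one or absorbed outright (for example, $pl_1$ over $drp_1$ becomes $pl_2$ followed by $pl_1$ above the lifted $drp_1$, while $pl_1$ over $ig$ is simply absorbed into $ig$). Since this lemma feeds the permutation argument converting every cut-free $\biillsn$ proof into a $\biilldn$ proof, and $\biilldn$ derivations contain propagation rules, omitting these interactions leaves the induction incomplete even though they fall outside the letter of ``logical rules''.
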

\begin{proof}
In the following, we omit trivial cases where the structural rule being applied does not affect
the (sub)sequent where the principal formula of the logical rule resides. 

For permutation over the propagation rules, the non-trivial cases are those where the structural rule enables
the propagation to happen. We look at some non-trivial cases here; the others are similar. In all cases,
the propagation may need to be replaced by one or more propagation rules, or may be absorbed 
by the structural rule. 
\begin{itemize}
\item $pl_1$ over $drp_1$: 
$$
\infer[drp_1]
{(\Sc, B \seq \Tc) \seq (\Uc \seq \Vc), \Tc'}
{
 \infer[pl_1]
 {\Sc, B \seq \Tc, (\Uc \seq \Vc), \Tc' }
 {
  \Sc \seq \Tc, (B, \Uc \seq \Vc), \Tc, \Tc'
 }
}
\qquad
\leadsto
\qquad
\infer[pl_2]
{(\Sc, B \seq \Tc) \seq (\Uc \seq \Vc), \Tc'}
{
 \infer[pl_1]
 {B, (\Sc \seq \Tc) \seq (\Uc \seq \Vc), \Tc'}
 {
  \infer[drp_1]
  {(\Sc \seq \Tc) \seq (B, \Uc \seq \Vc), \Tc'}
  {
    \Sc \seq \Tc,  (B, \Uc \seq \Vc), \Tc'
  }
 }
}
$$

\item $pl_1$ over $rp_1$:
$$
\infer[rp_1]
{\Sc, B \seq (\Tc \seq (\Uc \seq \Vc), \Tc')}
{
\infer[pl_1]
{\Sc, B, \Tc \seq (\Uc \seq \Vc), \Tc'}
{
 {\Sc, \Tc \seq (B, \Uc \seq \Vc), \Tc'}
}
}
\qquad
\leadsto
\qquad
\infer[pl_1]
{\Sc, B \seq (\Tc \seq (\Uc \seq \Vc), \Tc')}
{
\infer[pl_1]
{\Sc \seq (\Tc, B \seq (\Uc \seq \Vc), \Tc')}
{
\infer[rp_1]
{\Sc \seq (\Tc \seq (B, \Uc \seq \Vc), \Tc')}
{
 {\Sc, \Tc \seq (B, \Uc \seq \Vc), \Tc'}
}
}
}
$$

\item $pl_1$ over $ig$:
$$
\infer[ig]
{\Sc, B, \Sc' \seq \Tc, \Tc'}
{
\infer[pl_1]
{\Sc, B \seq (\Sc' \seq \Tc'), \Tc}
{
\Sc \seq (B, \Sc' \seq \Tc'), \Tc
}
}
\qquad
\leadsto
\qquad
\infer[ig]
{\Sc, B, \Sc' \seq \Tc, \Tc'}
{
\Sc \seq (B, \Sc' \seq \Tc'), \Tc
}
$$

\item The cases for permutation over $pr_1$ can be done similarly; 
replacing left-propagation rules ($pl_1$, $pl_2$)
with right-propagation rules ($pr_1$, $pr_2$).  

\item The cases for permutation over $pl_2$ are mostly straightforward.
The only non-trivial case is the following: 
$$
\infer[eg]
{\Sc, B, \Sc' \seq \Tc, \Tc'}
{
 \infer[pl_2]
 {(\Sc, B \seq \Tc), \Sc' \seq \Tc'}
 {
  B, (\Sc \seq \Tc), \Sc' \seq \Tc'
 } 
}
\qquad
\leadsto
\qquad
\infer[eg]
{\Sc, B, \Sc' \seq \Tc, \Tc'}
{B, (\Sc \seq \Tc), \Sc' \seq \Tc'}
$$
\item The cases for permutation over $pr_2$ can be done similarly to the cases
for $pl_2.$
\end{itemize}

Permutation over non-branching logical rules is trivial, as the sequent structure of the conclusion 
of a logical rule is preserved in the premise. 
For the branching rules, we look at the case with $\limp_l$, which is slightly non-trivial.
The rest can be proved similarly. 

In the following, we show only non-trivial interactions between the structural rules
and $\limp_l$, i.e., those in which the principal formula of $\limp_l$ is moved
by the display rule.

\begin{itemize}
\item $\limp_l$ over $drp_1$: 
\[
\infer[drp_1]
{(\Sc, C \limp B \seq \Tc) \seq \Uc}
{
\infer[\limp_l]
{\Sc, C \limp B \seq \Tc, \Uc}
{
{\Sc_1 \seq C, \Tc_1, \Uc_1}
&
{\Sc_2, B \seq \Tc_2, \Uc_2}
}
}
\qquad
\leadsto
\]
\[
\qquad
\infer[\limp_l]
{(\Sc, C \limp B \seq \Tc) \seq \Uc}
{
\infer[drp_1]
{(\Sc_1 \seq C, \Tc_1) \seq \Uc_1}
{\Sc_1 \seq C, \Tc_1, \Uc_1}
&
\infer[drp_1]
{(\Sc_2, B \seq \Tc_2) \seq \Uc_2}
{\Sc_2, B \seq \Tc_2, \Uc_2}
}
\]

\item $\limp_l$ over $rp_1$:
\[
\infer[rp_1]
{\Sc, C \limp B \seq (\Tc \seq \Uc)}
{
\infer[\limp_l]
{\Sc, C \limp B, \Tc \seq \Uc}
{
 {\Sc_1, \Tc_1 \seq C, \Uc_1}
&
 {\Sc_2, B \seq \Tc_2, \Uc_2}
}
}
\qquad
\leadsto
\]
\[
\qquad
\infer[pl_1]
{\Sc, C \limp B \seq (\Tc \seq \Uc)}
{
 \infer[\limp_l]
 {\Sc \seq (C \limp B, \Tc \seq \Uc)}
 {
  \infer[rp_1] 
  {\Sc_1 \seq (\Tc_1 \seq C, \Uc_1)}
  {\Sc_1, \Tc_1 \seq C, \Uc_1}
  &
  \infer[rp_1]
  {\Sc_2 \seq (\Tc_2, B \seq \Uc_2)}
  {\Sc_2, \Tc_2, B \seq \Uc_2}
 }
}
\]
Notice that we need to use the propagation rule $pl_1$ to push $rp_1$ over $\limp_l$.
The only other case where a propagation rule is used is when permuting 
$drp_1$ over $\excl_r$; in this case the propagation rule needed is $pr_1$. 

\item $\limp_l$ over $eg$:
$$
\infer[eg]
{\Sc, C \limp B, \Sc' \seq \Tc, \Tc'}
{
 \infer[\limp_l]
 {(\Sc, C \limp B \seq \Tc), \Sc' \seq \Tc'}
 {
  {(\Sc_1 \seq C, \Tc_1), \Sc_1' \seq \Tc_1'}
  &
  {(\Sc_2, B \seq \Tc_2) \seq \Tc_2'}
 }
}
$$
$$
\leadsto
\qquad
\infer[\limp_l]
{\Sc, C \limp B, \Sc' \seq \Tc, \Tc'}
{
 \infer[eg]
 {\Sc_1, \Sc_1' \seq C, \Tc_1, \Tc_1'}
 {(\Sc_1 \seq C, \Tc_1), \Sc_1' \seq \Tc_1'}
 &
 \infer[eg]
 {\Sc_2, \Sc_2', B \seq \Tc_2, \Tc_2'}
 {(\Sc_2, B \seq \Tc_2), \Sc_2' \seq \Tc_2'}
}
$$

\item $\limp_l$ over $ig$:
$$
\infer[ig]
{\Sc, \Sc', C \limp B \seq \Tc, \Tc'}
{
 \infer[\limp_l]
 {\Sc \seq (\Sc', C \limp B \seq \Tc'), \Tc}
 {
  {\Sc_1 \seq (\Sc_1' \seq C, \Tc_1'), \Tc_1}
  &
  {\Sc_2 \seq (\Sc_2', B \seq \Tc_2'), \Tc_2}
 }
}
$$
$$
\leadsto
\qquad
\infer[\limp_l]
{\Sc, \Sc', C \limp B \seq \Tc,\Tc'}
{
 \infer[ig]
 {\Sc_1, \Sc_1' \seq C, \Tc_1, \Tc_1'} 
 {\Sc_1 \seq (\Sc_1' \seq C, \Tc_1'), \Tc_1}
 &
 \infer[ig]
 {\Sc_2, \Sc_2', B \seq \Tc_2, \Tc_2'}
 {\Sc_2 \seq (\Sc_2', B \seq \Tc_2'), \Tc_2'}
}
$$
\end{itemize}
\end{proof}

\section{Proof that BiILL and FILL  are NP-complete}
\label{apx:np}

\begin{lemma}
The tautology problem for BiLLL is in NP
\end{lemma}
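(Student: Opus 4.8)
The plan is to exhibit, for every BiILL-valid formula $A$, a cut-free $\biilldn$-derivation of polynomial size that serves as an NP-certificate, and then to check that such derivations can be validated in polynomial time. First I would reduce validity to deep-inference derivability: by Thm.~\ref{thm:dc=biill} a formula is BiILL-valid iff it is cut-free $\biilldc$-derivable, and by Cor.~\ref{cor:dc=dn} this holds iff it is $\biilldn$-derivable. So it suffices to bound the size of $\biilldn$-derivations of $(\cdot\seq A)$ and the cost of checking them.

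The key structural fact I would exploit is that $\biilldn$ is cut-free and purely multiplicative: there is no contraction or weakening on formulae, so reading a derivation upwards each logical rule removes exactly one connective occurrence from its conclusion and never duplicates a formula. Consequently the subformula property holds, every formula occurring anywhere in the derivation is a subformula occurrence of $A$, and the number of logical-rule instances is exactly the number of connective occurrences of $A$, hence $O(|A|)$. Since new nested-sequent nodes are created (reading upwards) only by $\limp_r^d$ and $\excl_l^d$, every nested sequent appearing in the derivation has $O(|A|)$ nodes and depth $O(|A|)$, and by the multiplicative discipline each such sequent contains $O(|A|)$ formula occurrences; so every individual sequent in the derivation has size $O(|A|)$.

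The remaining, and main, obstacle is to bound the propagation rules $pl_1,pr_1,pl_2,pr_2$, which introduce no connective and so are invisible to the count above. I would handle this by restricting attention to \emph{propagation-normal} derivations, exploiting that $pl_1/pl_2$ and $pr_1/pr_2$ move a formula across a sequent boundary in mutually inverse directions: any round-trip in which a formula is propagated across a boundary and then propagated back without an intervening logical rule acting on it can be excised while preserving the end-sequent. In a propagation-normal derivation each of the $O(|A|)$ formula occurrences is moved monotonically towards the node at which its governing logical rule fires, across at most the $O(|A|)$ boundaries of the bounded-depth tree, so the total number of propagation steps along any branch is $O(|A|^2)$. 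As binary branching arises only from the $O(|A|)$ two-premise logical rules, the derivation tree has $O(|A|)$ leaves and longest path $O(|A|^2)$, giving $O(|A|^3)$ sequents each of size $O(|A|)$; the whole derivation is therefore of size polynomial in $|A|$.

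Finally I would verify that checking a candidate derivation is polynomial: each rule application is a local pattern match, and the only non-trivial side-conditions---that a context is \emph{hollow} and that a conclusion lies in the \emph{merge set} of the premises' contexts---are decidable in polynomial time in the size of the sequents involved. A nondeterministic machine thus guesses a polynomial-size $\biilldn$-derivation of $(\cdot\seq A)$ and verifies it in polynomial time, placing the BiILL tautology problem in NP. The delicate point throughout is the propagation bound, since these rules escape the connective-counting measure; everything else follows from cut-freeness together with the multiplicative absence of contraction and weakening.
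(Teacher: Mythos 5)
Your overall strategy is the paper's: reduce BiILL-validity to $\biilldn$-derivability (via Thm.~\ref{thm:dc=biill} and Cor.~\ref{cor:dc=dn}), bound derivation size polynomially using the fact that each connective occurrence of $A$ is introduced exactly once and that nesting nodes arise only from $\limp$/$\excl$, handle the propagation rules as the one delicate counting point, and observe that each inference step---including hollowness and the merge-set side conditions, which the paper makes tractable by uniquely labelling the $\limp$ and $\excl$ occurrences of $A$---is checkable in polynomial time. Your quantitative bounds ($O(|A|^2)$ propagations per branch, polynomially many sequents of size $O(|A|)$) are in the same ballpark as the paper's ($O(|A|^3)$ branch length, $O(|A|^5)$ total size), and the certificate-plus-verifier conclusion is identical.

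However, your justification of the propagation bound rests on a false description of the rules. You claim $pl_1/pl_2$ and $pr_1/pr_2$ cross a sequent boundary ``in mutually inverse directions,'' so that round-trips must be excised to reach a propagation-normal form, and that normality then yields monotone movement. In fact no two propagation rules are mutually inverse: read bottom-up, $pl_1$ moves an antecedent formula \emph{into} a positively nested child, $pl_2$ moves an antecedent formula \emph{out of} a negatively nested child, $pr_1$ moves a succedent formula \emph{into} a negative child, and $pr_2$ moves a succedent formula \emph{out of} a positive child. For a fixed side of $\seq$ and a fixed edge of the sequent tree there is exactly one traversal direction and no rule going the other way, so round-trips are impossible and every derivation is already ``normal.'' This matters because your stated normalization, even if the rules were inverse pairs, would only remove immediate back-and-forth pairs and would not by itself force each occurrence to move monotonically toward its governing logical rule; the monotonicity you need comes precisely from the one-way polarity discipline just described. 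That is the paper's actual argument: the propagation rules are non-invertible, hence each formula occurrence is propagated at most $k$ times, where $k$ is the number of $\limp$ and $\excl$ occurrences in $A$ (which bounds the number of sequent-arrow boundaries). Replace your excision step with this observation and your proof goes through; as written, the key step is unsupported.
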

\begin{proof}
We shall utilise the deep inference system $\biilldn$ to prove this. 
To show membership in NP, it is sufficient to show that every proof of a formula $B$ in $\biilldn$
can be checked in polynomial time. 
So suppose we are given a proof $\Pi$ of a formula $B$ in $\biilldn$.
We first establish show that the size of $\Pi$ is bounded polynomially by $|B|$, and show
that checking validity of each inference step of $\Pi$ is decidable in PTIME in the size of $|B|.$

We assume a representation of formulas as ordered trees, with nodes
labelled with connectives and propositional variables. 
A nested sequent is represented
as an unordered tree of ordered pairs of lists of formulae. 
The edges are labelled with polarity information ($+$ or $-$)
to indicate whether a child node of a parent node is nested to the left or the right of the sequent
represented by the parent node. 
We further assume that each occurrence of $\limp$ and $\excl$ in $B$ is labelled with a unique
identifier. Such a labelling only introduces at most a polynomial overhead of 
the size of $B$, so is inconsequential
as far as proving the upper bound in NP is concerned. Since each sequent arrow is created (reading the rules
upwards) by decomposing exactly one occurrence of $\limp$ or $\excl$, we can assume w.l.o.g. that
each node in a nested sequent is similarly uniquely labelled. Given this, it is easy to see 
that checking whether two trees of ordered pairs of lists 
represent the same nested sequent can be done in PTIME in the size of the trees. An inspection
on the rules of $\biilldn$ shows that the size of each nested sequent in $\Pi$ is bounded
by $|B|$, since each introduction rule replaces one formula connective with
zero or one structural connective. So checking equality between two nested sequents (or contexts) in $\Pi$ 
can be done in PTIME in the size of $B.$ 
We need to further show that, in the branching rules in $\Pi$, that merging of contexts and
sequents happen only along the same labelled nodes. 
So, checking whether the merging relation holds 
between three sequents (or contexts) in a branching rule in $\Pi$
can also be decided in PTIME in the size of $B.$

Notice that every occurrence of 
a  propositional variable or a constant in $B$ appears exactly once in either an $id^d$ rule or a 
constant rule, so the number of leaves in $\Pi$ is bounded by $|B|.$
That means that the number of branches in $\Pi$ is bounded by  $|B|.$ 
In every branch in $\Pi$, the number of logical rules is bounded by $|B|$, because each
connective is introduced exactly once in $\Pi.$
Now we need also to account for the number of propagation rules. Notice that the propagation rules
are non-invertible, and each formula occurrence can be propagated at most $k$ times, where $k$ is
the number of $\limp$ and $\excl$ occurring in $B$. The number of formula occurrences in
a nested sequent in $\Pi$ is bounded by $|B|$, so there can be at most $k \times |B|$ propagation
rules that can be successively applied to a nested sequent. 
The length of each branch in $\Pi$ is bounded by $|B| + (|B|/2) \times k \times |B|$, i.e.,
the number of logical rules, plus the number of propagation rules in between every pair
of logical rules. So the length of each branch is bounded by $O(|B|^3).$
That means that number of nodes in $\Pi$ is bounded by  $O(|B|^4)$. 
Now the size of each sequent in the node is obviously bounded by $B$, so
the total size of $\Pi$ is bounded by $O(|B|^5)$. It remains to show that
checking whether each inference in $\Pi$ is valid is decidable in PTIME in the size of $B$. 
The slightly non-trivial
bit is to decide whether the splitting of the contexts in branching rules are valid, e.g.,
whether $X[~] \in X_1[~] \merge X_2[~]$, etc. As discussed above, this can be done in PTIME as well
given our unique labelling assumption. 
\end{proof}

\begin{lemma}
The tautology problem for $\filldn$ is NP-hard. 
\end{lemma}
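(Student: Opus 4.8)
The plan is to reduce the (NP-hard) provability problem for \emph{constants-only multiplicative linear logic} (COMLL)~\cite{Lincoln94} to $\filldn$-provability. Recall that COMLL-formulae are built from the two constants $\mathbf 1$ and $\bot$ using only $\otimes$ and $\parr$, and that deciding whether $\vdash A$ is provable in classical MLL for such an $A$ is NP-complete. I would define the reduction to be the essentially trivial translation $\widehat{(\cdot)}$ that sends the tensor unit $\mathbf 1$ to $\tunit$, the par unit $\bot$ to $\punit$, and commutes with $\otimes$ and $\parr$; this is computable in linear time. The claim to establish is then: $\vdash A$ is COMLL-provable iff the nested FILL-sequent $\cdot \seq \widehat{A}$ is $\filldn$-provable.

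The heart of the argument is a \emph{flattening} observation. Since $\widehat{A}$ and all of its subformulae are built solely from $\tunit,\punit,\otimes,\parr$, and since the only rules of $\filldn$ that introduce $\limp$ or $\excl$ are $\limp^d_r$, $\excl^d_l$, $\excl^d_r$, the subformula property (each connective of $\widehat{A}$ is decomposed exactly once, and only subformulae of $\widehat{A}$ appear) guarantees that no sequent occurring in any $\filldn$-derivation of $\cdot \seq \widehat{A}$ contains an occurrence of $\limp$ or $\excl$. But right-nesting of sequents is created only by $\limp^d_r$ (and left-nesting is barred by the definition of FILL-sequent, being otherwise created only by the $\excl$ rules), so no nesting is ever produced: every sequent in such a derivation is \emph{flat}, of the form $\Gamma \seq \Delta$ with $\Gamma,\Delta$ multisets of constants-only formulae. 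Consequently the propagation rules $pl_1,pr_2$ never apply, and in the branching rules the merge operation $\merge$ degenerates to ordinary multiset partition of the two flat sides.

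What remains is the flat fragment of $\filldn$: the unit axioms $\cdot\seq\tunit$ and $\punit\seq\cdot$ together with $\tunit^d_l,\punit^d_r,\otimes^d_l,\otimes^d_r,\parr^d_l,\parr^d_r$ acting on two-sided multiset sequents (note that $id^d$ is never used, as there are no propositional variables). I would observe that these are precisely the rules of the standard cut-free two-sided multiplicative sequent calculus for MLL with units, which is sound and complete for classical constants-only MLL. Soundness of each flat rule for classical MLL gives the forward direction ($\filldn$-provable implies COMLL-provable), while translating a one-sided COMLL proof of $\vdash A$ into the two-sided calculus, and hence into $\filldn$ via the flat rules, gives the converse. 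Composing $\widehat{(\cdot)}$ with COMLL NP-hardness then yields NP-hardness of $\filldn$-provability, and hence of the FILL tautology problem.

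I expect the main obstacle to be justifying the flattening rigorously: one must argue carefully that no derivation of a constants-only sequent can ever introduce nesting or invoke a propagation rule, and that once flattened the calculus coincides \emph{exactly}, rule for rule and including the behaviour of context-splitting under $\merge$, with the two-sided MLL calculus. A secondary point requiring care is matching conventions with the source of COMLL NP-hardness, in particular checking that its formula class is literally the $\{\mathbf 1,\bot,\otimes,\parr\}$-fragment, so that $\widehat{(\cdot)}$ is a genuine bijection on formulae and no negation need be simulated.
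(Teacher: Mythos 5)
Your proposal is correct and takes essentially the same route as the paper: both reduce from COMLL via the (identity) translation into the $\{\tunit,\punit,\otimes,\parr\}$-fragment and observe that, absent $\limp$ and $\excl$, the calculus $\filldn$ degenerates into the usual two-sided classical sequent calculus for COMLL. Your flattening argument (no nesting can ever be created, propagation rules never fire, merge collapses to multiset partition) simply spells out in detail what the paper's proof sketch asserts in one sentence.
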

\begin{proof}
  We show how to encode constants-only multiplicative
  linear logic (COMLL), which is MLL without any propositional
  variables and is known to be NP-complete~\cite{Lincoln94}. Since
  COMLL has no propositional variables, every COMLL formula in nnf
  has no negation (or implication) in it. So it is
  enough to show that every COMLL formula in nnf is valid iff it is
  provable in $\filldn$. The restriction of $\filldn$ to the COMLL
  connectives gives us exactly COMLL, since without implication
  (negation), the proof system degenerates into the usual classical
  sequent calculus for COMLL. 
\end{proof}

\section{Conservativity via Schellinx's sequent calculus}
\label{apx:conserv-schellinx}

We give here an alternative proof that $\biill$ is conservative over FILL via
the sequent calculus for FILL that was proposed by Schellinx~\cite{Schellinx:Some}.
The rules of this calculus are those for MLL, except 
that the $\limp$-right rule is replaced with: 
$$
\infer[]
{\Gamma \vdash C \limp B}
{\Gamma, C \vdash B}
$$
We show that every derivation of a formula in $\filldn$ can be translated to a derivation
of the same formula in Schellinx's calculus, possibly using cuts.

We shall assume that formulas are equivalent modulo associativity and commutativity
of $\ltens$ and $\parr$. This is not necessary but it helps to simplify the proof. 
Formally this just means that there are implicit cuts in the following constructions that
we omit, i.e., those needed to allow replacement of a formula with its equivalent one. 
Given this convention, we shall omit parentheses when writing a tensor (par) of multiple
formulas, e.g., $(B_1 \ltens B_2 \ltens B_3).$
Given $\Gamma = \{B_1,\dots,B_n\}$, 
we shall write $\ltens(\Gamma)$ to denote  the formula
$(B_1 \ltens \cdots \ltens B_n)$, and similarly, $\parr(\Gamma)$
denotes $(B_1 \parr \cdots \parr B_n).$ 

In the following proofs, we shall be working with formula interpretations of nested
sequents, using the translation functions $\tau^a$ and $\tau^s$ defined
in Definition~\ref{def:tau_nested}.
Strictly speaking, the empty sequent $\sunit \seq \sunit$ would be interpreted (positively) as 
$\tunit \ltens \tunit \limp 0 \parr 0$, because of the way we separate formulas from sequents in the
definition of $\tau$. But when using the translation function $\tau$, 
we shall treat $\tunit \ltens \cdots \ltens \tunit$ as simply $\tunit$, to simplify
presentation. This is harmless as they are logically equivalent (alternatively we could write
a more complicated translation function just to take care of this minor syntactic bureaucracy). 

\begin{lemma}
\label{lm:conserv-merge}
Let $X$ and $Y$ be FILL sequents. If $Z \in X \merge Y$, then
the formula
$$
\tau^s(X) \parr \tau^s(Y) \limp \tau^s(Z)
$$
is provable in FILL.
\end{lemma}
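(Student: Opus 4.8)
The plan is to mirror the categorical argument of Lemma~\ref{lem:merg}, replacing ``existence of an arrow'' by ``existence of a derivation'' and constructing the derivations explicitly in Schellinx's calculus, using cut freely. Write the two FILL-sequents, which share a common tree shape, as $X = (\Gamma_1 \seq \Delta_1, Q_1, \ldots, Q_n)$ and $Y = (\Gamma_2 \seq \Delta_2, T_1, \ldots, T_n)$, so that $Z = (\Gamma_1, \Gamma_2 \seq \Delta_1, \Delta_2, Z_1, \ldots, Z_n)$ with $Z_j \in Q_j \merge T_j$ (recall FILL-sequents admit no left nesting, so only the right-nested subsequents merge). Abbreviating $G_i = \tau^a(\Gamma_i)$ and $D_i = \tau^s(\Delta_i)$, the goal formula is
$$\bigl(G_1 \limp E_1\bigr) \parr \bigl(G_2 \limp E_2\bigr) \;\limp\; \bigl((G_1 \ltens G_2) \limp F\bigr),$$
where $E_i$ is the par of $D_i$ with the $\tau^s$-translations of the respective nested subsequents and $F = D_1 \parr D_2 \parr \tau^s(Z_1) \parr \cdots \parr \tau^s(Z_n)$. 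I would proceed by induction on the size of $Z$.

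First I would record the handful of purely multiplicative building blocks that are derivable in Schellinx's calculus using only its singleton $\limp$-right rule: Grishin~(b), i.e.\ $((A \limp B) \parr C) \limp (A \limp (B \parr C))$; both currying directions relating $G_1 \limp (G_2 \limp F)$ and $(G_1 \ltens G_2) \limp F$; and par-monotonicity, i.e.\ that sequents $H \vdash H'$ and identities can be recombined under $\parr$. Each is a short backward derivation — for Grishin~(b), apply $\limp$-right twice, then split $A \limp B$ from $C$ with the left $\parr$ rule and close with $\limp_l$ and identities — and crucially none of them invokes the unsound two-sided $\limp$-right rule.

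The FILL-specific heart of the argument is the combining step. Two applications of Grishin~(b) (the second taken under the scope of $G_1 \limp$, which is legitimate since $\limp$ is monotone in its succedent), interleaved with par-commutativity to expose each implication in turn, transform $(G_1 \limp E_1) \parr (G_2 \limp E_2)$ into $G_1 \limp (G_2 \limp (E_1 \parr E_2))$; one application of currying then yields $(G_1 \ltens G_2) \limp (E_1 \parr E_2)$. Rearranging $E_1 \parr E_2$ (free, as we work modulo associativity and commutativity of $\parr$) regroups it as $D_1 \parr D_2 \parr W_1 \parr \cdots \parr W_n$ with $W_j = \tau^s(Q_j) \parr \tau^s(T_j)$.

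Finally I would thread in the induction hypothesis. Applied to the smaller merge $Z_j \in Q_j \merge T_j$, it provides a derivation of $W_j \limp \tau^s(Z_j)$, which one $\limp_l$ and a cut convert into the sequent $W_j \vdash \tau^s(Z_j)$. Par-monotonicity (decompose the left par, reassemble on the right, using these sequents together with $D_i \vdash D_i$) then yields $D_1 \parr D_2 \parr W_1 \parr \cdots \parr W_n \vdash F$; lifting this through the shared antecedent $G_1 \ltens G_2$ and cutting against the combining step delivers the goal. The main obstacle is organisational rather than conceptual: managing the par-rearrangement bookkeeping and verifying that the Grishin~(b) and currying combining step scales uniformly to $n$ nested subsequents. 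The genuinely logical content is confined to Grishin~(b), which is precisely FILL's characteristic interaction principle and the reason the whole construction stays within Schellinx's restricted calculus.
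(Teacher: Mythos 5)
Your proposal is correct and follows essentially the same route as the paper: induction on the structure of $Z$ over the shared tree shape, with the nested subsequents handled by cutting against the inductive hypothesis, all within Schellinx's calculus where every $\limp$-right instance needed is a legitimate singleton one. The only difference is presentational: where the paper reduces the goal formula to a single sequent and dispatches it as ``easily provable'' via $\limp_l$, the $\parr$-rules and cuts on the formulae $\tau^s(X_i)\parr\tau^s(Y_i)\limp\tau^s(Z_i)$, you factor exactly that step through explicit Grishin~(b), currying and $\parr$-monotonicity lemmas --- in effect transplanting the categorical proof of Lemma~\ref{lem:merg} into the sequent calculus.
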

\begin{proof}
By induction on the structure of $Z.$
So suppose $Z$ is 
$$
(\Gamma, \Delta \seq \Gamma', \Delta', Z_1, \dots, Z_n)
$$
and $X$ and $Y$ are, respectively,  
$$
(\Gamma \seq \Gamma', X_1, \dots, X_n) \qquad (\Delta \seq \Delta', Y_1, \dots, Y_n)
$$
where $Z_i \in X_i \merge Y_i$, for every $i \in \{1,\dots,n\}.$
By the induction hypothesis, we have for each $i$: $\tau^s(X_i) \parr \tau^s(Y_i) \limp \tau^s(Z_i).$
To prove $\tau^s(X) \parr \tau^s(Y) \limp \tau^s(Z)$ it is enough to show that the following
sequent is derivable in FILL:
$$
\begin{array}{l}
\ltens(\Gamma) \limp \parr(\Gamma',\tau^s(X_1),\dots,\tau^s(X_n)), 
\ltens(\Delta) \limp \parr(\Delta',\tau^s(Y_1),\dots,\tau^s(Y_n)),  
\Gamma, \Delta  \\
\vdash
\Gamma', \Delta', \tau^s(Z_1), \dots, \tau^s(Z_n).
\end{array}
$$
This is easily provable, using cut formulas $\tau^s(X_i) \parr \tau^s(Y_i) \limp \tau^s(Z_i).$
\end{proof}

\begin{lemma}
\label{lm:conserv-hollow}
For every hollow FILL sequent $X$, the formula $\punit \limp \tau^s(X)$ 
is provable in FILL.
\end{lemma}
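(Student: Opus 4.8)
The plan is to strengthen the statement slightly and prove, by induction on the size of $X$, that the sequent $\punit \vdash \tau^s(X)$ is derivable in Schellinx's calculus (in which cut is available); the lemma then follows by a single application of the $\limp$-right rule with empty left context. This mirrors the categorical argument of Lem.~\ref{lem:hollow}, but carried out syntactically. Since $X$ is a FILL sequent it nests nothing on the left, so it has the shape $\cdot \seq T_1,\dots,T_l$ with each $T_i$ again a hollow FILL sequent, and hence $\tau^s(X) = \tunit \limp (\tau^s(T_1) \parr \cdots \parr \tau^s(T_l))$, the succedent collapsing to $\punit$ when $l = 0$.

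For the base case $X = (\cdot \seq \cdot)$ the goal is $\punit \vdash \tunit \limp \punit$: start from the axiom $\punit \vdash \punit$, insert $\tunit$ on the left by the left $\tunit$ rule to get $\punit,\tunit \vdash \punit$, and apply $\limp$-right. For the induction step the key observation is that $\punit$ ``multiplies freely'' under $\parr$: starting from the $\punit$-left axiom (with empty succedent) and applying the $\punit$-right rule $l$ times produces $\punit \vdash \punit,\dots,\punit$ with $l$ copies of $\punit$ in the succedent. Each such right occurrence of $\punit$ can then be replaced by $\tau^s(T_i)$ by cutting against the induction hypothesis $\punit \vdash \tau^s(T_i)$ on the cut-formula $\punit$; because the cut-formula is exactly $\punit$, each cut leaves the antecedent as the single formula $\punit$. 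After $l$ such cuts we reach $\punit \vdash \tau^s(T_1),\dots,\tau^s(T_l)$; folding the succedent by repeated $\parr$-right, inserting $\tunit$ on the left, and applying $\limp$-right then yields $\punit \vdash \tau^s(X)$, as required.

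The only genuine obstacle is the unit bookkeeping forced by $\tau$, which maps an empty antecedent to $\tunit$ and an empty succedent to $\punit$: one must insert and discharge $\tunit$ on the left and expand $\punit$ on the right at precisely the right points, and must be willing to use cut (as the surrounding appendix explicitly permits) to splice in the inductive hypotheses. All of these steps are routine, with the replacement of each right $\punit$ by $\tau^s(T_i)$ being the single place where the induction hypothesis is consumed.
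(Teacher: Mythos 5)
Your proof is correct and matches the paper's intended argument: the paper states Lem.~\ref{lm:conserv-hollow} without proof, and its evident template is the categorical Lem.~\ref{lem:hollow}, whose induction (base case $\cdot\seq\cdot$ giving $\punit\vdash\tunit\limp\punit$; inductive step composing $\punit\vdash\punit,\dots,\punit$ with the hypotheses for the $T_i$) you reproduce faithfully in Schellinx's calculus, with cuts on $\punit$ playing the role of composition with $\bot\to\bot\parr\cdots\parr\bot$. Your unit bookkeeping and the use of cut are both legitimate here, since App.~G explicitly works in Schellinx's calculus with cut available.
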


\begin{lemma}
\label{lm:conserv-id}
Let $X[~]$ be a hollow positive FILL context, let $\Sc$ be a multiset
of hollow FILL sequents. Then each of the following formulas
is provable in FILL: 
$$
\tau^s(X[A \seq \Sc, A])
\qquad
\tau^s(X[\punit \seq \sunit])
\qquad
\tau^s(X[\sunit \seq \tunit])
$$
\end{lemma}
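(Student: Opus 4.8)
The plan is to prove all three formulas simultaneously by a single induction on the size of the positive hollow context $X[~]$, mirroring the categorical soundness argument for the base rules $id^d$, $\punit^d_l$ and $\tunit^d_r$ in Thm.~\ref{thm:filldn_sound}. The key observation is that each of the three formulas has the form $\tau^s(X[C])$ for a \emph{core} sequent $C$ that is one of $A \seq \Sc, A$, $\punit \seq \sunit$, or $\sunit \seq \tunit$, and that the machinery lifting a FILL-proof of $\tau^s(C)$ up through the surrounding hollow context is insensitive to which core we started from. First I would record the shape of a positive hollow FILL context: since FILL-sequents forbid nesting on the left and $X[~]$ carries no formulae, every antecedent along the path to the hole is empty, so $X[~]$ is either the empty context $[~]$ or decomposes uniquely as $(\sunit \seq \Tc', Y[~])$, where $\Tc'$ is a multiset of hollow sequents and $Y[~]$ is a strictly smaller positive hollow FILL context.

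For the base case $X[~]=[~]$ I must prove $\tau^s(C)$ outright in Schellinx's calculus. The units are immediate: $\tau^s(\punit \seq \sunit)$ is $\punit \limp \punit$, obtained by $\limp$-right from $\punit \vdash \punit$, and $\tau^s(\sunit \seq \tunit)$ is $\tunit \limp \tunit$, obtained by $\limp$-right from $\tunit \vdash \tunit$. The identity core $\tau^s(A \seq \Sc, A)$ is $A \limp (A \parr \tau^s(\Sc))$; running $\limp$-right backwards reduces it to $A \vdash A \parr \tau^s(\Sc)$, and I build this from the derivable identity $A \vdash A$ by appending $|\Sc|$ copies of $\punit$ on the right and cutting each against the sequent $\punit \vdash \tau^s(S_i)$ furnished by Lem.~\ref{lm:conserv-hollow}, finally collecting the succedent with $\parr$-right.

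For the inductive step $X[~] = (\sunit \seq \Tc', Y[~])$ the induction hypothesis gives a FILL-proof of $\tau^s(Y[C])$, and the goal is
$$\vdash \tunit \limp \bigl(\tau^s(\Tc') \parr \tau^s(Y[C])\bigr).$$
Reading $\limp$-right and $\tunit$-left backwards reduces this to $\vdash \tau^s(\Tc') \parr \tau^s(Y[C])$, and $\parr$-right further reduces it to $\vdash \tau^s(T'_1), \dots, \tau^s(T'_k), \tau^s(Y[C])$ for $\Tc' = \{T'_1,\dots,T'_k\}$. Starting from the induction hypothesis I would introduce $k$ placeholder units $\punit$ on the right and cut each against $\punit \vdash \tau^s(T'_i)$ from Lem.~\ref{lm:conserv-hollow}, producing exactly that multiset; reassembling with $\parr$-right, $\tunit$-left and $\limp$-right then yields the goal. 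Because $C$ enters only through the opaque premise $\tau^s(Y[C])$, this step is literally the same for all three cores.

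The hard part is not conceptual but the \emph{linear} bookkeeping forced by the absence of weakening: the hollow succedent material in $\Sc$ and $\Tc'$ can neither be discarded nor conjured for free, so it must be routed through the proof by first creating $\punit$ placeholders with $\punit$-right and then replacing each by its translation via a cut against Lem.~\ref{lm:conserv-hollow}. The only other delicate point is establishing the context decomposition above, which depends essentially on the FILL restriction that bars nested sequents and formulae from antecedents; once that is in hand the three statements collapse into a single induction with three trivial base cases.
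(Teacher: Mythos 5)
Your proposal is correct and matches the paper's proof, which is exactly an induction on the hollow positive context $X[~]$ using Lem.~\ref{lm:conserv-hollow} to supply the sequents $\punit \vdash \tau^s(S_i)$ for the hollow material (the paper states this in one line; your write-up fills in the same decomposition $X[~] = (\sunit \seq \Tc', Y[~])$, the three base cases, and the $\punit$-placeholder-plus-cut routing that the categorical analogue in Thm.~\ref{thm:filldn_sound} also uses). No gaps.
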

\begin{proof}
By induction on $X[~]$, and utilising Lemma~\ref{lm:conserv-hollow}. 
\end{proof}

\begin{lemma}
\label{lm:conserv-non-branching}
Suppose $F \limp G$ is provable in FILL. 
Then for every positive FILL context $X[~]$,
the formula $\tau^s(X[F]) \limp \tau^s(X[G])$ is provable in FILL.
\end{lemma}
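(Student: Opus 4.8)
The plan is to mirror the categorical argument for Lemma~\ref{lem:ctxt} syntactically, by induction on the size (equivalently the nesting depth) of the positive FILL-context $X[~]$. Here $F$ and $G$ occupy the hole and are therefore sequents, so following the translation convention I read ``$F\limp G$ is provable'' as ``$\tau^s(F)\limp\tau^s(G)$ is provable'', and likewise for the goal $\tau^s(X[F])\limp\tau^s(X[G])$. The base case is the empty context $X[~]=[~]$: then $X[F]=F$ and $X[G]=G$, so the goal is exactly the hypothesis.

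For the inductive step I would exploit that a FILL-context nests no sequent to the left of $\seq$, so a non-empty positive context has the shape $X[~]=(\Gamma\seq\Delta,X'[~])$, where $\Gamma$ is a multiset of formulae, $\Delta$ is the multiset of all remaining succedent elements (formulae and sibling sequents), and $X'[~]$ is again a positive FILL-context of strictly smaller size. Writing $P=\tau^s(\Delta)$, Def.~\ref{def:tau_nested} gives
$$
\tau^s(X[F])=\ltens(\Gamma)\limp\bigl(P\parr\tau^s(X'[F])\bigr),
\qquad
\tau^s(X[G])=\ltens(\Gamma)\limp\bigl(P\parr\tau^s(X'[G])\bigr).
$$
The induction hypothesis applied to $X'[~]$ supplies a FILL-proof of $\tau^s(X'[F])\limp\tau^s(X'[G])$, so the task reduces to lifting a single $\limp$ through the two covariant layers $P\parr(-)$ and $\ltens(\Gamma)\limp(-)$.

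I would isolate this lifting as two routine monotonicity claims: from a FILL-proof of $A\limp B$ one obtains FILL-proofs of $(P\parr A)\limp(P\parr B)$ and of $(C\limp P\parr A)\limp(C\limp P\parr B)$, for all formulae $C,P$. The first follows by a $(\parr R)$ above a $(\parr L)$ whose premises are the identity $P\vdash P$ and the hypothesis recast as $A\vdash B$ (obtained from $\vdash A\limp B$ by one cut). The second follows by a $(\limp R)$ above a $(\limp L)$ whose premises are the identity $C\vdash C$ and the sequent $P\parr A\vdash P\parr B$ read off from the first derivation. Instantiating $A,B$ with $\tau^s(X'[F]),\tau^s(X'[G])$, $P$ with $\tau^s(\Delta)$ and $C$ with $\ltens(\Gamma)$, and composing, delivers exactly $\tau^s(X[F])\limp\tau^s(X[G])$; cut is freely available because the lemma only asserts FILL-provability, not cut-freeness.

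The point requiring care --- and the main obstacle --- is that in Schellinx's calculus $(\limp R)$ may fire only on a singleton succedent. I would verify that every $(\limp R)$ in the two monotonicity derivations indeed acts on a single right-hand formula (namely $P\parr B$, and then $C\limp P\parr B$), so that each step is a legal FILL inference and the construction never slips into full MLL. Because both layers $P\parr(-)$ and $\ltens(\Gamma)\limp(-)$ are covariant in the slot filled by $X'[~]$, the contravariant (and FILL-unsound) direction of $\limp$ is never invoked, and the induction closes.
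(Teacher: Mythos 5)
Your proposal is correct and takes essentially the route the paper intends: the paper states this lemma without proof, but its categorical analogue (Lemma~\ref{lem:ctxt}) is proved by precisely your induction, with the same decomposition of a positive FILL-context as $\Gamma\seq\Delta,X'[~]$ and the same lifting of the implication through the covariant layers $\tau^s(\Delta)\parr(-)$ and $\ltens(\Gamma)\limp(-)$, of which your argument is the evident syntactic transcription into Schellinx's calculus (with cut, as the appendix permits). Your explicit check that every $(\limp R)$ instance fires on a singleton succedent is the one point of care in moving from the categorical to the sequent-calculus setting, and you handle it correctly.
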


\begin{lemma}
\label{lm:conserv-branching}
Suppose $F \ltens G \limp H$ is provable in FILL.
Then for every positive FILL context $X_1[~], X_2[~], X[~]$ such
that $X[~] \in X_1[~] \merge X_2[~]$, we have that
$$
\tau^s(X_1[F]) \ltens \tau^s(X_2[G]) \limp \tau^s(X[H])
$$
is also provable in FILL.
\end{lemma}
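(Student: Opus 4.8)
The plan is to argue by induction on the structure of the positive FILL context $X[~]$, keeping the formulas $F$, $G$, $H$ (and hence the hypothesis $F \ltens G \limp H$) fixed throughout. Since $X[~] \in X_1[~] \merge X_2[~]$ forces $X_1[~]$, $X_2[~]$ and $X[~]$ to share the same nested tree shape, I can peel the three contexts apart in lock-step. In the base case $X[~] = X_1[~] = X_2[~] = [~]$ the translations collapse to $\tau^s(X_1[F]) = F$, $\tau^s(X_2[G]) = G$ and $\tau^s(X[H]) = H$, so the required formula is exactly the hypothesis $F \ltens G \limp H$.

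For the inductive step I use the right-hole clause of the context-merge definition (legitimate because a positive context places the hole to the right of a $\seq$, and FILL sequents nest nothing on the left). Thus the top node splits as $X_1[~] = (\Gamma_1 \seq \Delta_1, Y_1[~], Q_1, \dots, Q_n)$, $X_2[~] = (\Gamma_2 \seq \Delta_2, Y_2[~], T_1, \dots, T_n)$ and $X[~] = (\Gamma_1, \Gamma_2 \seq \Delta_1, \Delta_2, Y[~], V_1, \dots, V_n)$, where $Y[~] \in Y_1[~] \merge Y_2[~]$ and $V_j \in Q_j \merge T_j$ for each $j$ (the sub-case where the hole sits directly in the top succedent is just $Y_1[~] = [~]$). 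Writing $D_1 = \parr(\Delta_1, \tau^s(Q_1), \dots, \tau^s(Q_n))$, and $D_2$, $D$ analogously, the three translations become $\tau^s(X_1[F]) = \ltens(\Gamma_1) \limp (\tau^s(Y_1[F]) \parr D_1)$, and correspondingly for $X_2[G]$ and $X[H]$. The induction hypothesis applied to $Y_1[~], Y_2[~], Y[~]$ yields the FILL-provable implication $\tau^s(Y_1[F]) \ltens \tau^s(Y_2[G]) \limp \tau^s(Y[H])$, while repeated use of Lemma~\ref{lm:conserv-merge} on the pairs $Q_j, T_j$ together with functoriality of $\parr$ yields $D_1 \parr D_2 \limp D$.

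It then remains to assemble these into $\tau^s(X_1[F]) \ltens \tau^s(X_2[G]) \limp \tau^s(X[H])$. By the currying adjunction (provable in FILL) it suffices to derive $\tau^s(X_1[F]) \ltens \tau^s(X_2[G]) \ltens \ltens(\Gamma_1, \Gamma_2) \limp \tau^s(Y[H]) \parr D$. Two evaluation steps discharge the two antecedent implications against $\ltens(\Gamma_1)$ and $\ltens(\Gamma_2)$, leaving $(\tau^s(Y_1[F]) \parr D_1) \ltens (\tau^s(Y_2[G]) \parr D_2)$. Two applications of weak distributivity (the first formula of \eqref{eq:wdis}), interleaved with commutativity of $\ltens$, pull the tensor inside the pars to reach $(\tau^s(Y_1[F]) \ltens \tau^s(Y_2[G])) \parr D_1 \parr D_2$; the induction hypothesis and $D_1 \parr D_2 \limp D$, combined via functoriality of $\parr$, then deliver $\tau^s(Y[H]) \parr D$. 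Each link in this chain is a FILL-provable implication, so composing them with cut (as throughout this appendix) gives the result.

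I expect the weak-distributivity juggling to be the main obstacle: exactly as in the base case of Lemma~\ref{lem:merg_tens}, one must route the tensor of the two succedents past their par-structure using weak distributivity in the correct order, and this is the only step that genuinely uses the Grishin~(b) content of FILL. The merge bookkeeping on $D_1, D_2, D$ and the currying/evaluation manipulations are routine.
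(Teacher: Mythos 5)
Your proposal is correct and takes essentially the same route as the paper's proof: the same induction on the structure of $X[~]$, the same lock-step decomposition of $X_1[~]$, $X_2[~]$, $X[~]$ at the top node (hole on the right, since positive FILL contexts nest nothing on the left), the induction hypothesis on the hole-containing subcontexts, and Lemma~\ref{lm:conserv-merge} to handle the merged succedent material. The only divergence is presentational: you discharge the final assembly by an explicit chain of FILL-provable implications (currying, evaluations, two weak distributivities, $\parr$-functoriality, composed by cut), exactly mirroring the base case of Lemma~\ref{lem:merg_tens}, whereas the paper performs the same step inside Schellinx's sequent calculus via cuts on the induction-hypothesis formula and the merge-lemma formula followed by ``straightforward'' logical rules.
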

\begin{proof}
By induction on the structure of $X[~].$
Suppose $X[~]$
is
$$
\Gamma, \Delta \seq \Sc, X'[~]
$$
and $X_1[~]$ and $X_2[~]$ are, respectively, 
$$
(\Gamma \seq \Sc_1, X_1'[~])
\qquad
(\Delta \seq \Sc_2, X_2'[~])
$$
where $X'[~] \in X_1'[~] \merge X_2'[~]$ and
$\Sc \in \Sc_1 \merge \Sc_2.$
By the induction hypothesis, we have 
$$
\tau^s(X_1'[F]) \ltens \tau^s(X_2'[G]) \limp \tau^s(X'[H]).
$$
To prove $\tau^s(X_1[F]) \ltens \tau^s(X_2[G]) \limp \tau^s(X[H])$ it is enough
to prove the following sequent:
\begin{equation}
\label{eq:seq1}
\begin{array}{l}
\ltens(\Gamma) \limp \parr(\tau^s(\Sc_1), \tau^s(X_1'[F])), 
\ltens(\Delta) \limp \parr(\tau^s(\Sc_2), \tau^s(X_2'[G])), 
\Gamma, \Delta \\
\vdash 
\tau^s(\Sc), \tau^s(X'[H])
\end{array}
\end{equation}
By Lemma~\ref{lm:conserv-merge}, we can show that
$\tau^s(\Sc_1) \parr \tau^s(\Sc_2) \limp \tau^s(\Sc).$ Therefore, to prove 
sequent (\ref{eq:seq1}) it is enough to prove the following sequent:
$$
\begin{array}{l}
\ltens(\Gamma) \limp \parr(\tau^s(\Sc_1), \tau^s(X_1'[F])), 
\ltens(\Delta) \limp \parr(\tau^s(\Sc_2), \tau^s(X_2'[G])), 
\Gamma, \Delta \\
\vdash 
\tau^s(\Sc_1), \tau^s(\Sc_2), \tau^s(X'[H])
\end{array}
$$
which is provable by straightforward applications of logical rules
and the cut rule with the cut formula 
$\tau^s(X_1'[F]) \ltens \tau^s(X_2'[G]) \limp \tau^s(X'[H])$
(which is provable by the induction hypothesis). 
\end{proof}

Now we are ready to prove the statement of Theorem~\ref{thm:filldn-conserv}: 
For every FILL formula $B$, $B$ is provable in FILL if and only if
it is provable in $\filldn.$
\begin{proof}
One direction, from FILL to $\filldn$, follows from the fact that any FILL derivation
is also a derivation in $\biillsn$, and Theorem~\ref{thm:separation}.
For the reverse direction, we show that every derivation of $X$ in $\filldn$ corresponds to 
a derivation of $\vdash \tau^s(X)$ in FILL; hence every valid formula in $\filldn$ is also
valid in FILL. We do this by induction on the height of derivations in $\filldn.$
The base cases where the derivation ends with $id^d$, $\punit^d_l$ or $\tunit^d_r$ follow
from Lemma~\ref{lm:conserv-id}. 
For the inductive cases, we first show that every rule in $\biilldn$ corresponds to
a valid sequent in FILL. 
For a non-branching rule, with premise $U$ and conclusion $V$, the corresponding
implication is $\tau^s(U) \vdash \tau^s(V)$. For a branching rule, with premises
$U$ and $V$, and conclusion $W$, the corresponding implication is 
$\tau^s(U), \tau^s(V) \vdash \tau^s(W).$
Thus given a derivation of $X$ ending with a branching rule:
$$
\infer[\rho]
{X}
{\deduce{X_1}{\vdots} & \deduce{X_2}{\vdots}}
$$
the translation takes the form:
$$
\infer[cut]
{\vdash \tau^s(X)}
{
\deduce
{\vdash \tau^s(X_1)}
{(1)}
&
\infer[cut]
{\tau^s(X_1) \vdash \tau^s(x)}
{
 \deduce{\vdash \tau^s(X_2)}{(2)}
 &
 \deduce{\tau^s(X_1), \tau^s(X_2) \vdash \tau^s(X)}{(3)}
}
}
$$
Sequents (1) and (2) are provable by the induction hypothesis, so it is enough
to show we can always prove sequent (3). We show here a case where the derivation
in $\filldn$ ends with $\limp_l$; the other cases are similar. 
So suppose the derivation in $\filldn$ ends with: 
$$
\infer[\limp_l^d]
{X[\Gamma, \Delta, A \limp B \seq \Tc]}
{
X_1[\Gamma \seq A, \Tc_1] & X_2[\Delta, B \seq \Tc_2]
}
$$
where $\Tc, \Tc_1$ and $\Tc_2$ are multisets of intuitionist formulas/sequents,
$X[~] \in X_1[~] \merge X_2[~]$, and $\Tc \in \Tc_1 \merge \Tc_2.$
By Lemma~\ref{lm:conserv-branching}, 
it is enough to show that the following formula is valid in FILL:
$$
(\ltens(\Gamma) \limp A \parr \tau^s(\Tc_1))
\ltens
(\ltens(\Delta) \ltens B \limp \tau^s(\Tc_2))
\limp
(\ltens(\Gamma, \Delta, A \limp B) \limp \tau^s(\Tc)).
$$
This in turn reduces to proving the sequent:
$$
\ltens(\Gamma) \limp A \parr \tau^s(\Tc_1), 
\ltens(\Delta) \ltens B \limp \tau^s(\Tc_2),
\Gamma, \Delta, A \limp B 
\vdash
\tau^s(\Tc).
$$
This sequent can be proved (in a bottom-up fashion) by 
a cut with the provable formula (by Lemma~\ref{lm:conserv-merge})
$\tau^s(\Tc_1) \parr \tau^s(\Tc_2) \limp \tau^s(\Tc)$, followed
by straightforward applications of introduction rules.
$$
\infer[\mbox{Lem. \ref{lm:conserv-merge}}]
{\ltens(\Gamma) \limp A \parr \tau^s(\Tc_1), 
\ltens(\Delta) \ltens B \limp \tau^s(\Tc_2),
\Gamma, \Delta, A \limp B 
\vdash
\tau^s(\Tc)}
{
\infer[\limp_l]
{
\ltens(\Gamma) \limp A \parr \tau^s(\Tc_1), 
\ltens(\Delta) \ltens B \limp \tau^s(\Tc_2),
\Gamma, \Delta, A \limp B 
\vdash
\tau^s(\Tc_1), \tau^s(\Tc_2)
}
{
\deduce{\ltens(\Gamma) \limp A \parr \tau^s(\Tc_1), 
\Gamma \vdash  A, \tau^s(\Tc_1)}{(4)}
&
\deduce{
\ltens(\Delta) \ltens B \limp \tau^s(\Tc_2),
\Delta, B 
\vdash
\tau^s(\Tc_2)
}{(5)}
}
}
$$
The derivations for sequents (4) and (5) are easy and omitted here. 
\end{proof}


\end{document}